\documentclass{LMCS}

\usepackage{enumerate}
\usepackage{graphicx}
\usepackage{latexsym}
\usepackage{amsmath,amssymb,amstext}
\usepackage{comment}
\usepackage{pifont}
\usepackage{color}
\usepackage{wrapfig}
\usepackage{mathtools}
\usepackage{stmaryrd}

\usepackage{tikz}
\usetikzlibrary{arrows,automata,backgrounds,calc,chains,fadings,folding,decorations.fractals,decorations.pathreplacing,fit,patterns,positioning,mindmap,shadows,shapes.geometric,shapes.symbols,through,trees,plotmarks}

\input{main.def}

\usepackage{enumerate}
\usepackage{hyperref}

\def\doi{6 (2:??) 2010}
\lmcsheading%
{\doi}
{1--39}
{}
{}
{Oct.~29, 2009}
{Jun.~29, 2010}
{}   

\begin{document}

\title{Transforming Outermost into Context-Sensitive Rewriting\rsuper *}

\author[J.~Endrullis]{J\"{o}rg Endrullis}
\address{
    Vrije Universiteit Amsterdam,
    De Boelelaan 1081a,
    1081 HV Amsterdam,
    The Netherlands
}
\email{joerg@few.vu.nl, diem@cs.vu.nl}

\author[D.~Hendriks]{Dimitri Hendriks}
\address{\vskip-6 pt}

\keywords{term rewriting, termination, outermost rewriting, context-sensitive rewriting.}
\subjclass{D.1.1, D.3.1, F.4.1, F.4.2, I.1.1, I.1.3}
\titlecomment{{\lsuper *}%
  This is a modified and extended version of~\cite{endr:hend:09}
  which appeared in the proceedings of RTA~2009.
}

\begin{abstract}
  We define two transformations from term rewriting systems (TRSs) 
  to context-sensitive TRSs in such a way that 
  termination of the target system 
  implies outermost termination of the original system.
  In the transformation based on `context extension', 
  each outermost rewrite step is modeled
  by exactly one step in the transformed system.
  This transformation turns out to be complete for the class of left-linear TRSs.
  The second transformation is called `dynamic labeling'
  and results in smaller sized context-sensitive TRSs.
  Here each modeled step is adjoined with
  a small number of auxiliary steps.
  
  As a result state-of-the-art termination methods  
  for context-sensitive rewriting become available 
  for proving termination of outermost rewriting.
  Both transformations have been implemented in \jambox,
  making it the most successful tool 
  in the category of outermost rewriting 
  of the annual termination competition of 2008. 
\end{abstract}

\maketitle

\section{Introduction}

\newcommand{\ffxnr}{0}%
\newcommand{\trsffx}{\iatrs{\ffxnr}}%
\newcommand{\setffx}{\iaset{\ffxnr}}%
\newcommand{\algffx}{\iaalg{\ffxnr}}%

Termination is a key aspect of program correctness,
and therefore a widely studied subject in term rewriting and program verification.
While termination is undecidable in general, 
various automated techniques have been developed for proving termination.
One of the most powerful techniques is the method of dependency pairs~\cite{arts:gies:00}.
In~\cite{alar:guti:luca:06} dependency pairs for context-sensitive rewriting have been introduced,
and in~\cite{alar:guti:luca:08} the dependency pair framework~\cite{gies:thie:schn:04,thie:07,guti:luca:10} 
has been extended to context-sensitive rewriting,
thereby extending the class of context-sensitive TRSs
for which termination can be shown automatically.
Context-sensitive rewriting~\cite{luca:98} is a restriction on term rewriting 
where rewriting in some fixed arguments of function symbols is disallowed.
It offers a flexible paradigm
to analyze properties of rewrite strategies,
in particular of (lazy) evaluation strategies employed by functional programming languages.

In this paper context-sensitive rewriting is the target formalism 
for a transformational approach to the problem of outermost termination,
that is, termination with respect to outermost rewriting.
Outermost rewriting is a rewriting strategy where a redex may be contracted as long as it
is not a proper subterm of another redex occurrence.
The main reason for studying outermost termination 
is its practical relevance:
lazy functional programming languages like Miranda~\cite{miranda}, Haskell~\cite{haskell} or Clean~\cite{clean},
are based on outermost rewriting as an evaluation strategy,
and in implementations of rewrite logic such as Maude~\cite{maude} and CafeOBJ~\cite{cafeobj}, 
outermost rewriting is an optional strategy.

To illustrate outermost rewriting, and the transformations we propose, 
we consider the term rewriting system~$\trsffx$ consisting of the following rules:
\begin{align}
  \ta &\to \f{\ta} & \f{\f{x}} &\to \tb 
  \tag{$\trsffx$}
\end{align}
Clearly, this system is not terminating, as witnessed by the infinite rewrite sequence:
\[
  \ta \to \f{\ta} \to \f{\f{\ta}} \to \f{\f{\f{\ta}}} \to \ldots
\]
However, $\trsffx$ is outermost terminating. 
Indeed, the third step in the rewrite sequence above is not an outermost step,
since the contraction takes place inside another redex.
The only (maximal) outermost rewrite sequence the term $\ta$ admits is:
\begin{align}
  \ta \outred \f{\ta} \outred \f{\f{\ta}} \outred \tb
  \label{ex:ffx:outermost:sequence}
\end{align}

The contribution of the present paper consists of two transformations of arbitrary TRSs
into context-sensitive TRSs (henceforth also called `\csTRS{s}')
in such a way that rewriting in the \csTRS{} 
corresponds to outermost rewriting in the original TRS.
As a result, advanced termination techniques for \csTRS{s}
become available for proving outermost termination.
Automated termination provers for \csTRS{s} can directly (without modification, only preprocessing) 
be used for proving outermost termination.
One of the transformations turns out to be complete for the class of quasi-left-linear TRSs,
a generalized form of left-linear TRSs, see~\cite{raff:zant:09}.
In other words, termination of the resulting \csTRS{}
is equivalent to outermost termination of the original system.

The transformations are comprised of a variant of semantic labeling~\cite{zant:95}. 
In semantic labeling the function symbols in a term are labeled 
by the interpretation of their arguments (or a label depending on these values) 
according to some given semantics.
We employ semantic labeling to mark symbols at redex positions,
and then obtain a \csTRS{} by defining a replacement map that disallows rewriting inside arguments
of marked symbols.

We illustrate our use of semantic labeling by means of the TRS~$\trsffx$ given above. 
We choose an algebra with values $0$ and $1$,
indicating the presence of the symbol~$\tf$:
\begin{align}
  \algffx = \pair{\{0,1\}}{\sinterpret}
  &&
  \interpret{\ta} = \interpret{\tb} = 0
  &&
  \funap{\interpret{\tf}}{x} = 1 \quad \text{for $x\in\{0,1\}$}
  \tag{$\algffx$}
\end{align}
We write $\rmark{\tf}$, and say that `$\tf$ is marked', if the value of its argument is $1$,
and just $\tf$ if the value is $0$. 
The symbol $\ta$ is a redex, and hence it is always marked, while $\tb$ never is.
If $\tf$ is marked it corresponds to 
a redex position with respect to the rule $\f{\f{x}} \to \tb$.
For example the term $\f{\f{\f{\ta}}}$
is labeled as
$\funap{\rmark{\tf}}{\funap{\rmark{\tf}}{\f{\rmark{\ta}}}}$.
We obtain a \csTRS{} by forbidding rewriting inside the argument of the symbol~$\rmark{\tf}$.
Since $\rmark{\ta}$ is a constant, there is nothing to be forbidden.
Hence for correctly labeled terms, rewriting inside redex occurrences is disallowed,
and this corresponds to the strategy of outermost rewriting.

In order to rewrite labeled terms we have to label the rules of a TRS.
Simply labeling both sides of a rule does not always work.
For example, when we label the rules of~$\trsffx$ using the algebra~$\algffx$,
we obtain the following~\csTRS: 
\begin{align}
  \rmark{\ta} &\to \f{\rmark{\ta}} &
  \funap{\rmark{\tf}}{\f{x}} &\to \tb
  &
  \funap{\rmark{\tf}}{\funap{\rmark{\tf}}{x}} &\to \tb
  \tag{$\overline{\atrs}_0$}
  \label{ex:ffx:lab}
\end{align}
This system has two instances of the second rule, one for each possible value assigned to the variable $x$. 
Now, despite of the fact that the original TRS is outermost terminating,
the labeled system~\ref{ex:ffx:lab} still admits an infinite rewrite sequence:
\begin{align}
  \rmark{\ta} 
  \mured \f{\rmark{\ta}} 
  \mured \f{\f{\rmark{\ta}}} 
  \mured \f{\f{\f{\rmark{\ta}}}} \mured \ldots
  \label{ex:ffx:infred}
\end{align}
The reason is that the term $\f{\f{\rmark{\ta}}}$ 
is not correctly labeled, as its root symbol $\tf$ should have been marked. 
In~\cite{zant:95} this problem is avoided by allowing labeling only with \emph{models}.
Roughly speaking, an algebra is a model for a TRS~$\atrs$ 
if left and right-hand sides of all rewrite rules of $\atrs$ 
have equal interpretations.
However, this requirement is too strict for the purpose of marking redexes, 
because contraction of a redex at a position $p$ may create a redex above $p$ 
in the term tree, as exemplified by~\eqref{ex:ffx:infred}.
In fact, for~$\trsffx$ there exists \emph{no} model  
which is able to distinguish between redex and non-redex positions.
Let us explain.
The rewrite step $\f{\ta} \to \f{\f{\ta}}$
creates a redex at the top.
The term $\f{\ta}$ is not a redex, 
and therefore its root symbol $\tf$ should not be marked.
On the other hand $\f{\f{\ta}}$ is a redex 
and so the outermost $\tf$ has to be marked.
The change of the labeling of a context (here $\f{\contexthole}$) 
implies that the interpretation of its arguments $\ta$ and $\f{\ta}$ cannot be the same.
Therefore we cannot require the rule $\ta \to \f{\ta}$ to preserve the interpretation.

To that end, we generalize this notion of model 
and relax the condition $\interpret{\ell} = \interpret{r}$ to:
\begin{align}
  \myex{n}{\interpret{\contextfill{\acontext}{\ell}} = \interpret{\contextfill{\acontext}{r}}}
  \quad
  \text{for all contexts $\acontext$ of depth~$\geq n$}
  \label{eq:cmodel:req}
\end{align}
Thus rules are allowed to change the interpretation as long as 
the effect is limited to contexts of a bounded depth.
We call this depth \emph{the \cdepth{} of $\ell \to r$} and denote it by $\cdpth{\aalg}{\ell \to r}$.
As it turns out, algebras satisfying this weaker requirement~\eqref{eq:cmodel:req}, 
are strong enough to recognize redex positions. 
Such algebras we will call \emph{\cmodel{s}}.

The algebra $\aalg_0$ given above is a \cmodel{} for the TRS~$\trsffx$.
As opposed to models, for \cmodel{s} it is no longer sufficient to simply label the rules.
This is demonstrated by the rewrite sequence~\eqref{ex:ffx:infred} in the \csTRS~\ref{ex:ffx:lab}:
an application of the rule $\rmark{\ta} \to \f{\rmark{\ta}}$ in
the term $\f{\rmark{\ta}}$ creates the incorrectly labeled term 
$\f{\f{\rmark{\ta}}}$.

Therefore, in order to preserve correct labeling, 
labels in the context of the original rewrite step sometimes have to be updated.
We present two solutions to this problem:
the transformation of \emph{context extension} 
and the transformation of \emph{dynamic labeling}.

In the transformation based on context extension~\cite{endr:hend:09},
worked out in Section~\ref{sec:cxtext},
the update of semantic labels is established
by prefixing appropriate contexts to both sides of a rewrite rule. 
The depth of these contexts is bounded by the \cdepth{} of the rule.
Thus, the update of the labels is coded within the context, 
and no additional rewrite steps are needed.
As a result of that, every outermost rewrite step in the original system 
is modeled by exactly one rewrite step in the transformed \csTRS.
A disadvantage of the transformation, however, 
is that the resulting \csTRS{} can have a large number of rules
arising from the prepending of contexts
in combination with semantic labeling.

An alternative solution (and new with respect to~\cite{endr:hend:09})
is dynamic labeling, described in Section~\ref{sec:dynlab}:
instead of extending rules with contexts we now use rewriting 
to propagate the changed information upward in the term tree.
With respect to context extension, 
this approach results in a smaller number of rules of the transformed system.
On the other hand, the property of the context extension of a one--to--one correspondence of the rewrite steps,
is now weakened to a one--to--$m$ correspondence where $m \le 1 + \cdpth{\aalg}{\ell \to r}$.
This means that an outermost rewrite step is modeled by one step in the transformed system
plus a number of auxiliary steps necessary for updating the labels,
and this number is bounded by the \cdepth{} of the corresponding rule.
In most practical cases this value is typically small ($\leq 2$).
This is shown in Section~\ref{sec:evaluation} where we evaluate the implemented transformations.

We illustrate the two transformations by means of our running example,
the TRS~$\trsffx$ together with the algebra~$\aalg_0$ 
which forms a \cmodel{} for~$\trsffx$. 
The algorithm based on context extension
transforms $\trsffx$ into the following \csTRS~\ref{ex:ffx:dynamic}, 
which truthfully simulates outermost rewriting in $\trsffx$:
\begin{gather}
  \begin{aligned}
  \funap{\svoodoolabel{\tf}{}}{\rmark{\ta}} &\to \funap{\rmark{\tf}}{\funap{\svoodoolabel{\tf}{}}{\rmark{\ta}}}
  &&&&&&&
  \funap{\svoodoolabel{\topsymb}{}}{\funap{\rmark{\tf}}{\funap{\svoodoolabel{\tf}{}}{x}}} 
  &\to \funap{\svoodoolabel{\topsymb}{}}{\tb} \\
  \funap{\svoodoolabel{\topsymb}{}}{\rmark{\ta}} &\to \funap{\svoodoolabel{\topsymb}{}}{\funap{\svoodoolabel{\tf}{}}{\rmark{\ta}}} 
  &&&&&&&
  \funap{\svoodoolabel{\topsymb}{}}{\funap{\rmark{\tf}}{\funap{\rmark{\tf}}{x}}} 
  &\to \funap{\svoodoolabel{\topsymb}{}}{\tb}
  \end{aligned}
  \tag{$\cxtext{\aalg_0}{\semlab}{\atrs_0}$}
  \label{ex:ffx:dynamic}
\end{gather}
The rule $\f{\rmark{\ta}} \to \funap{\rmark{\tf}}{\f{\rmark{\ta}}}$
is obtained from prepending the context $\f{\contexthole}$ to $\ta \to \f{\ta}$. 
This enables correct updating of the labeling of the context during rewriting.
Because we still have to allow rewrite steps $\ta \to \f{\ta}$ of the original TRS 
at the top of a term, 
we extend the signature with a unary function symbol $\topsymb$
which represents the top of a term.
Thus when prepending contexts we include $\funap{\topsymb}{\contexthole}$, giving rise to the rule
$\funap{\topsymb}{\rmark{\ta}} \to \funap{\topsymb}{\f{\rmark{\ta}}}$.
The necessity of the symbol $\topsymb$ becomes apparent especially
when we consider the rule $\f{\f{x}} \to \tb$.
Here prepending the context $\f{\contexthole}$ is not even an option
since $\f{\f{\f{x}}} \to \f{\tb}$
is not an outermost rewrite step;
this rule can only be applied at the top of a term.
Hence we get the two rules displayed on the right, 
one for each possible interpretation of the variable~$x$.

The second algorithm we define, that of dynamic labeling,
transforms $\trsffx$ (using $\algffx$)
into the following \csTRS~, which we denote by $\dynlab{\aalg_0}{\semlab}{\text{$\trsffx$}}$\,:
\begin{align}
  \rmark{\ta} &\to \relabel{0}{1}{\f{\rmark{\ta}}}
  &&&
  \f{\relabel{0}{1}{x}} &\to \funap{\rmark{\tf}}{x} 
  \notag \\
  \funap{\rmark{\tf}}{\f{x}} &\to \relabel{1}{0}{\tb}
  &&&
  \funap{\topsymb}{\relabel{0}{1}{x}} &\to \funap{\topsymb}{x} 
  \tag{$\dynlab{\aalg_0}{\semlab}{\text{$\trsffx$}}$} \\
  \funap{\rmark{\tf}}{\funap{\rmark{\tf}}{x}} &\to \relabel{1}{0}{\tb}
  &&&
  \funap{\topsymb}{\relabel{1}{0}{x}} &\to \funap{\topsymb}{x}
  \notag 
\end{align}
where rewriting beneath the redex symbol $\rmark{\tf}$ 
and the symbols $\srelabel{0}{1}$ and $\srelabel{1}{0}$ is disallowed.
Displayed on the left, we recognize the original rules.
Since left and right-hand side of the original rule $\ta \to \f{\ta}$ 
have distinct interpretations ($0$ and $1$), 
in $\dynlab{\aalg_0}{\semlab}{\text{$\trsffx$}}$
the right-hand side is prefixed with the symbol $\srelabel{0}{1}$.
By application of the relabeling rules (displayed on the right), 
this symbol moves upward to take care of the update of labels in the context 
of the original rule application.
Likewise, the rule $\f{\f{x}} \to \tb$ 
(of which there are two versions in $\dynlab{\aalg_0}{\semlab}{\text{$\trsffx$}}$, 
one for each value $x$ can be assigned to)
means a change of interpretation, and relabeling the context is necessary.
In this example, each original step is accompanied by exactly one relabel step.
The $\ssrelabel$ symbols dissolve after one such step.
The only (maximal) rewrite sequence from the term 
$\funap{\topsymb}{\rmark{\ta}}$ in $\dynlab{\aalg_0}{\semlab}{\text{$\trsffx$}}$ is:
\begin{align*}
  \funap{\topsymb}{\rmark{\ta}} 
  & 
  \mured \funap{\topsymb}{\relabel{0}{1}{\f{\rmark{\ta}}}}
  \mured \funap{\topsymb}{\f{\rmark{\ta}}}
  \mured \funap{\topsymb}{\f{\relabel{0}{1}{\f{\rmark{\ta}}}}}
  \\
  &
  \mured \funap{\topsymb}{\funap{\rmark{\tf}}{\f{\rmark{\ta}}}}
  \mured \funap{\topsymb}{\relabel{1}{0}{\tb}}
  \mured \funap{\topsymb}{\tb}
\end{align*}
Notice the correspondence with the outermost rewrite sequence~\eqref{ex:ffx:outermost:sequence}.

Clearly, semantic labeling increases the number of rules and the number of symbols of a TRS.
This results in a larger search space for finding termination proofs,
and hence may lead to exhaustion of time or memory resources.
On the other hand, one can say that semantic labeling 
does not complicate termination proofs, in the sense that 
proofs for the unlabeled system carry over to the labeled one:
whenever $\atrs'$ is a labeling of a TRS $\atrs$
and $\aalg = \quadruple{\aset}{\interpret{\cdot}}{{\alggt}}{{\algge}}$
is a monotone $\asig$-algebra~\cite{endr:wald:zant:08}
which proves termination of $\atrs$, 
then the extension of $\sinterpret$ to the labeled signature $\asig'$
by defining $\interpret{\svoodoolabel{\tf}{\lambda}} \defdby \interpret{\tf}$
for every $\tf \in \asig$ and label~$\lambda$,
yields a monotone $\asig'$-algebra witnessing termination of $\atrs'$.
Apart from this, 
the labeled systems often allow for simpler proofs, 
because the enriched signature provides for more freedom in the choice of interpretations, see~\cite{zant:95}.
As the transformations presented here are based on a variant of semantic labeling,
they inherit these properties from semantic labeling.

The two transformations have been implemented by the first author 
in the termination prover \jambox~\cite{jambox}.
Notwithstanding the increased number of rules by semantic labeling and context extension,
\jambox{} performs efficiently on the set of examples from 
the Termination Problem Database (TPDB~\cite{term:comp:08}),
and was best in proving termination 
in the category of outermost rewriting of the termination competition
of 2008~\cite{term:comp:08}, see Section~\ref{sec:evaluation}. 

\subsection*{Related work.}
The first tool for proving 
outermost termination was {\cariboo}~\cite{{fiss:gnae:kirch:02b},{gnae:kirc:09}}. 
\cariboo{} is a stand-alone tool, and its method is based on induction.

For the idea of a transformational approach to outermost termination
in order to make use of the power of termination provers 
we were inspired by~\cite{raff:zant:09},
which in turn is based on ideas in~\cite{gies:midd:04}. 
In~\cite{raff:zant:09} the signature is enriched with unary symbols 
$\msf{top}$, $\msf{up}$, and $\msf{down}$ 
and the TRS is extended with `anti-matching' rules 
such that $\funap{\msf{down}}{t}$ is a redex
if and only if $t$ is not a redex with respect to the original TRS.
The idea is that the symbol $\msf{down}$ is moved down in the term tree 
as long as no redex is encountered.
Once a redex is encountered, a rewrite step is performed,
and the symbol $\msf{down}$ is replaced by $\msf{up}$,
which then moves upwards again to the top of the term, 
marked by $\msf{top}$.
This transformation is implemented in the tool~\trafo~\cite{raff:zant:09},
participant in the termination competion of 2008~\cite{term:comp:08}.

Based on a similarly elegant idea, Thiemann~\cite{thie:09} defines a complete transformation 
from outermost to innermost rewriting, which is implemented in \aprove. 
For traversal to the redex positions, rules of the form
$\funap{\msf{down}}{\funap{\msf{isRedex}}{\f{\ldots}}}
 \to \f{\ldots,\funap{\msf{down}}{\funap{\msf{isRedex}}{\ldots}},\ldots}$
are used.
In order to simulate outermost rewriting and to prevent from moving inside redexes, 
rules $\funap{\msf{isRedex}}{\ell} \to \funap{\msf{up}}{r}$ 
are added for every rule $\ell \to r$ of the original TRS.
Then, by the innermost rewriting strategy, 
the latter rules have priority over the traversal rules, 
whenever an original redex is encountered.

The simplicity of both approaches is attractive, 
but the yo-yoing effect in the resulting TRSs 
makes that the original outermost rewrite steps 
are `hidden' among a vast amount of auxiliary steps. 
This increases derivational complexity,
and makes it hard for automated termination provers
to find proofs for the transformed systems.

The present paper is a modified and extended version of~\cite{endr:hend:09}.
In particular, we introduce a novel approach for proving outermost termination:
dynamic labeling (Section~\ref{sec:dynlab}).
We stress that the number of extra relabeling steps introduced 
in the dynamic labeling of a system is typically small 
and bounded by the \cdepth{}
of the applied rewrite rule. 


\section{Preliminaries}\label{sec:prelims}
For a general introduction to term rewriting 
and to context-sensitive rewriting,
we refer to~\cite{terese:03} and~\cite{luca:98}, respectively. 
Here we repeat some of the main definitions, for the sake of completeness, and to fix notations. 

A \emph{signature} $\asig$ is a non-empty set of symbols each having a fixed \emph{arity},
given by a mapping $\sarity \funin \asig \to \nat$. 
We write $\arity{\tf}$ for the arity of $\tf\in\asig$,
and we define $\asig_n \defdby \{ \tf\in\asig \where \arity{\tf} = n \}$.
Given $\asig$ and a set $\avars$ of variables, 
the \emph{set $\term{\asig}{\avars}$ of terms over $\asig$}
is the smallest set satisfying:
$\avars \subseteq \term{\asig}{\avars}$, and
$\tf(t_1,\dots,t_n) \in \term{\asig}{\avars}$ if $\tf \in \asig$ of arity $n$ 
and $t_i \in \term{\asig}{\avars}$ for all $1 \leq i \leq n$.
We use $x,y,z,\ldots$ to range over variables, 
and write $\vars{t}$ for the set of variables occurring in a term $t$. 

The set of \emph{positions} $\pos{t} \subseteq \nat^*$ of a term $t \in \term{\asig}{\avars}$
is defined as follows:
$\pos{x} = \{\lstemp\}$ for variables $x \in \avars$
and
$\pos{\f{t_1,\ldots,t_n}} = 
 \{\lstemp\} \cup \{\lstconcat{i}{\apos} \where 1 \le i \le n,\, \apos \in \pos{t_i}\}$
 for symbols $\fun{f}\in\asig_n$.
We write $\rootsymb{t}$ to denote the root symbol (or variable) of $t$,
and $\subtrmat{t}{p}$ for the subterm of $t$ rooted at position $p$.
Then $\symbat{t}{p}$ is the symbol at position $p$ in $t$.

A \emph{substitution} $\asubst$ is a map $\asubst \funin \avars \to \term{\asig}{\avars}$ from variables to terms.
For terms $t \in \term{\asig}{\avars}$ and substitutions $\asubst$,
$\subst{\asubst}{t}$ is inductively defined by
$\subst{\asubst}{x} = \funap{\asubst}{x}$ for $x \in \avars$,
and 
\(
  \subst{\asubst}{\f{t_1,\ldots,t_n}} 
  = \f{\subst{\asubst}{t_1},\ldots,\subst{\asubst}{t_n}}
\) for $\fun{f}\in\asig_n$, and $t_1,\ldots,t_n\in\ter{\asig}{\avars}$.

Let $\contexthole$ be a fresh symbol, i.e., $\contexthole \not\in \asig\join\avars$.
A \emph{context} $\acontext$ is a term from $\term{\asig}{\avars\join\{\contexthole\}}$
containing precisely one occurrence of $\contexthole$.
By $\contextfill{\acontext}{s}$ we denote the term $\subst{\asubst}{\acontext}$
where $\funap{\asubst}{\contexthole} = s$ and $\funap{\asubst}{x} = x$ for all $x \in \avars$.
We use $\bfunap{\contexts}{\asig}{\avars}$ to denote the set of contexts over $\asig$ and $\avars$. 
We write $\vars{\acxt}$ with $\acxt \in \bfunap{\contexts}{\asig}{\avars}$
to denote the set of variables of $\acxt$ excluding~$\cxthole$.
The \emph{depth} of a context $C$ is defined as the length $\lstlength{p}$ of the position $p$ 
at which $\contexthole$ resides, that is, the position $p$ such that 
$\symbat{\acontext}{\apos} = \contexthole$.

A \emph{term rewriting system (TRS)} over $\asig$ 
is a finite set of pairs $\pair{\ell}{r} \in \term{\asig}{\avars}\times\term{\asig}{\avars}$,
called \emph{rewrite rules} and written as $\ell \to r$,
for which the \emph{left-hand side}~$\ell$ is not a variable ($\ell \not\in \avars$)
and all variables in the \emph{right-hand side}~$r$ occur in $\ell$: 
$\vars{r} \subseteq \vars{\ell}$. 
For a TRS~$\atrs$ we define $\to_{\atrs}$, the \emph{rewrite relation}
induced by $\atrs$ as follows:
For terms $s, t \in \term{\asig}{\avars}$ we write $s \to_{\atrs} t$,
or just $s \to t$ if $\atrs$ is clear from the context,
if there exists a rule $\ell \to r \in \atrs$, a substitution $\asubst$
and a context $\acontext \in \bfunap{\contexts}{\asig}{\avars}$
such that
$s = \contextfill{\acontext}{\subst{\asubst}{\ell}}$
and
$t = \contextfill{\acontext}{\subst{\asubst}{r}}$;
we sometimes write $s \redat{\atrs,\apos} r$ 
to explicitly indicate the rewrite position $p$,
i.e., when $\symbat{\acontext}{\apos} = \contexthole$.
A term of the form $\subst{\asubst}{\ell}$, 
for some rule $\ell \to r \in \atrs$, and a substitution $\asubst$,
is called a \emph{redex}.

For terms $s$ and $t$, we say that \emph{$s$ outermost rewrites to $t$} at a position $p\in\pos{s}$,
denoted by $s \outred_{\atrs,p} t$,
if $s \to_{\atrs,p} t$ and for all positions $p'$ strictly above $p$ 
(i.e., $p'$ a proper prefix of $p$) we have that $\subtrmat{s}{p'}$ 
is not a redex with respect to $\atrs$.  

A binary relation ${\alggt} \subseteq \aset \times \aset$ over a set $\aset$
is called \emph{well-founded} 
if no infinite decreasing sequence $a_1 \alggt a_2 \alggt a_3 \alggt \ldots $ exists.
A TRS $\atrs$ is called \emph{terminating} or \emph{strongly normalizing},
denoted by $\SNr{\atrs}$, if $\to_{\atrs}$ is well-founded.

A mapping $\samumap \funin \asig \to \powerset{\nat}$
is called a \emph{replacement map (for $\asig$)} if for all symbols~$\tf \in \asig$ 
we have $\amumap{\tf} \subseteq \{1,\ldots,\arity{\tf}\}$. 
When we define a replacement map $\samumap$, the case for constants $\ta\in\asig$ is left implicit,
as we always have $\amumap{\ta} = \setemp$.
A \emph{context-sensitive term rewriting system (\csTRS)} is a pair $\pair{\atrs}{\samumap}$
consisting of a TRS $\atrs$ and a replacement map $\samumap$. 
The set of \emph{$\samumap$-replacing positions $\mupos{t}$} of a term $t \in \term{\asig}{\avars}$
is defined by $\mupos{x} = \{\lstemp\}$ for $x \in \avars$ and
$\mupos{\f{t_1,\ldots,t_n}} = 
 \{\lstemp\} \cup \{\lstconcat{i}{\apos} \where i \in \amumap{\tf},\, \apos \in \mupos{t_i}\}$
 for $\fun{f}\in\asig_n$ and $t_1,\ldots,t_n\in\ter{\asig}{\avars}$.

In context-sensitive term rewriting
only redexes at \mbox{$\samumap$-replacing} positions are contracted:
we say \emph{$s$ $\samumap$-rewrites to $t$}, and denote it by $s \muredr{\atrs} t$ 
whenever $s \redat{\atrs,\apos} t$ with $\apos \in \mupos{s}$.
For instance, consider the system~$\atrs$ consisting of the single rule:
\begin{align*}
  \ta \to \bfunap{\fun{cons}}{\tb}{\ta}
\end{align*}
and let $\samumap$ be given by $\amumap{\fun{cons}} = \{1\}$.
Then, obviously $\atrs$\, is non-terminating. 
On the other hand, the context-sensitive TRS~$\pair{\atrs}{\samumap}$ is terminating,
because the replacement map of the symbol $\fun{cons}$ allows rewriting only in its first argument.

We conclude this section by defining some non-standard notions.
\begin{definition}\label{def:thin:contexts}
  A \emph{thin context} is a context that has at every depth
  at most one symbol from \mbox{$\asig \cup \{\cxthole\}$}; 
  all other symbols are variables.
\end{definition}

\begin{example}
  $\f{\f{x,\g{\cxthole}},y}$ is a thin context,
  whereas $\f{\g{\cxthole},\h{x}}$ is not,
  since $\tg$ and $\sh$ are at the same depth.
\end{example}

\begin{definition}[\cite{ster:midd:08}]\label{def:flat:contexts}
  A \emph{flat context}
  is a context $\acxt \in \bfunap{\contexts}{\asig}{\avars}$
  of the form:
  \[
    \acxt = \f{x_1,\ldots,x_{j-1},\cxthole,x_{j+1},\ldots,x_n}
  \]
  where $\tf\in\asig_n$ with $n > 0$ and $x_1,x_2,\ldots\in\avars$ are pairwise distinct variables.
  For a  term $t\in\ter{\asig}{\avars}$ we say that 
  $\acxt$ is \emph{fresh for~$t$} if $\vars{\acxt} \cap \vars{t} = \setemp$.
  We denote the set of flat contexts fresh for $t$ by $\bfunap{\contexts^\flat_t}{\asig}{\avars}$.
\end{definition}

\begin{definition}[\cite{raff:zant:09}]
  A TRS $\atrs$ is called \emph{quasi-left-linear}
  if every non-linear left-hand side of a rule in $\atrs$
  is an instance of a linear left-hand side from $\atrs$.
\end{definition}

\begin{example}
  The following TRS is quasi-left-linear (and outermost terminating):
  \begin{align*}
    \gb{\f{x}}{x} &\to \gb{\gb{x}{x}}{x} &
    \gb{x}{y} &\to y
  \end{align*}
\end{example}

\section{Generalizing Models to \cmodel{s}}\label{sec:label}

In outermost rewriting the only redexes which are allowed to be
rewritten are those which are not nested within any other redex occurrence.
We model this strategy by context-sensitive rewriting
with the use of semantic labeling: we mark the symbols which are the root of a redex
in order to disallow rewriting within that redex.
We first recall the definition of semantic labeling and of models from~\cite{zant:95},
and then generalize these to fit our purpose.

\begin{definition}
  A \emph{$\asig$-algebra} $\aalg = \pair{\aset}{\sinterpret}$ 
  consists of a non-empty set $\aset$, called the \emph{domain of $\aalg$},
  and for each $n$-ary symbol $\tf \in \asig$ a function 
  $\interpret{\tf} \funin \aset^n \to \aset$, 
  called the \emph{interpretation of $\tf$}.
  Given an \emph{assignment} $\alpha \funin \avars \to \aset$ of the variables to $\aset$, 
  the \emph{interpretation of a term $t \in \term{\asig}{\avars}$ with respect to $\alpha$} 
  is denoted by $\interpreta{t}{\alpha}$ and inductively defined by:
  \begin{align*}
    \interpreta{x}{\alpha} &= \funap{\alpha}{x}
    &
    \interpreta{\f{t_1,\ldots,t_n}}{\alpha} 
    &= \funap{\interpret{\tf}}{\interpreta{t_1}{\alpha},\ldots,\interpreta{t_n}{\alpha}}
  \end{align*}
  where $x\in\avars$, $\fun{f}\in\asig_n$, and $t_i\in\ter{\asig}{\avars}$ for $1 \leq i \leq n$. 
  For substitutions $\asubst \funin \avars \to \term{\asig}{\avars}$,
  we write $\interpreta{\asubst}{\alpha}$ for the function 
  $\mylam{x}{\interpreta{\funap{\asubst}{x}}{\alpha}}$.
  For ground terms $t \in \term{\asig}{\setemp}$ 
  and ground substitutions $\asubst\funin\avars\to\term{\asig}{\setemp}$
  we write $\interpret{t}$ and $\interpret{\asubst}$ for short. 
  We usually write $\aalg$\, for both the algebra and its domain,
  and we use $\sinterpret$ to denote the interpretation function of $\aalg$.
\end{definition}

\begin{lemma}\label{lem:xyz}
  Let $\aalg$ be a $\asig$-algebra,
  $\alpha\funin{\avars\to\aalg}$ an assignment,
  and $\asubst\funin{\avars\to\term{\asig}{\avars}}$ a substitution.
  Then, for all terms $t\in\term{\asig}{\avars}$:
  \( 
    \interpreta{\subst{\asubst}{t}}{\alpha} 
    = \interpreta{t}{\interpreta{\asubst}{\alpha}}
  \)\,. \qed
\end{lemma}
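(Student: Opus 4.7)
The plan is to prove the identity by structural induction on $t$, which is the standard substitution lemma for term algebras. The two sides of the equation are both defined by recursion on term structure, so matching them up clause by clause should work straightforwardly, with the only subtle point being to unfold the definition of $\interpreta{\asubst}{\alpha}$ correctly in the base case.

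In the base case, $t$ is a variable $x \in \avars$. On the left, $\subst{\asubst}{x} = \funap{\asubst}{x}$ by definition of substitution application, so $\interpreta{\subst{\asubst}{x}}{\alpha} = \interpreta{\funap{\asubst}{x}}{\alpha}$. On the right, using that $\interpreta{\asubst}{\alpha}$ is the assignment $\mylam{y}{\interpreta{\funap{\asubst}{y}}{\alpha}}$, I compute $\interpreta{x}{\interpreta{\asubst}{\alpha}} = \funap{\interpreta{\asubst}{\alpha}}{x} = \interpreta{\funap{\asubst}{x}}{\alpha}$. The two sides agree.

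For the induction step, suppose $t = \f{t_1,\ldots,t_n}$ with $\tf \in \asig_n$ and assume the statement holds for each $t_i$. Unfolding substitution and interpretation on the left gives
\[
  \interpreta{\subst{\asubst}{\f{t_1,\ldots,t_n}}}{\alpha}
  = \interpreta{\f{\subst{\asubst}{t_1},\ldots,\subst{\asubst}{t_n}}}{\alpha}
  = \funap{\interpret{\tf}}{\interpreta{\subst{\asubst}{t_1}}{\alpha},\ldots,\interpreta{\subst{\asubst}{t_n}}{\alpha}}.
\]
Applying the induction hypothesis componentwise rewrites each argument as $\interpreta{t_i}{\interpreta{\asubst}{\alpha}}$, and re-folding the definition of $\sinterpret$ yields $\interpreta{\f{t_1,\ldots,t_n}}{\interpreta{\asubst}{\alpha}}$, which is the right-hand side.

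There is no real obstacle here: the proof is a mechanical structural induction and uses only the definitions of substitution application, the interpretation function, and the derived assignment $\interpreta{\asubst}{\alpha}$. The one place where care is needed is making sure that in the variable case the notation $\interpreta{\asubst}{\alpha}$ is expanded according to its definition rather than treated opaquely.
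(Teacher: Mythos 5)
Your proof is correct and is exactly the standard structural induction the paper implicitly relies on; the paper itself omits the argument entirely (the lemma is stated with an immediate \qed), so there is nothing to diverge from. Both the variable base case, where you correctly unfold $\interpreta{\asubst}{\alpha}$ as $\mylam{y}{\interpreta{\funap{\asubst}{y}}{\alpha}}$, and the componentwise application of the induction hypothesis in the function-symbol case are handled properly.
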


For completeness of the transformation of context extension (Theorem~\ref{thm:abstract:maxcomplete}), 
it is important that there exist no `junk' elements in the $\asig$-algebra,
that is, elements $a$ for which there are no ground terms $t$ such that $\interpret{t} = a$.
See Example~\ref{ex:whycore}.
For that reason we restrict $\asig$-algebras to `core' $\asig$-algebras
whose domain equals the set of all interpretations of ground terms over $\asig$.
\begin{definition}\label{def:core}
  The \emph{core of a $\asig$-algebra $\aalg$} 
  is the $\asig$-algebra $\reachable{\aalg} = \pair{\aset_c}{\sinterpret_c}$
  where $\reachable{\aset}$ is the least set such that
  $\funap{\interpret{\tf}}{a_1,\ldots,a_n} \in \reachable{\aset}$ 
  whenever $\tf \in \asig_n$ and $a_1,\ldots,a_n \in \reachable{\aset}$,
  and where $\reachable{\sinterpret{}}$ is the restriction of $\sinterpret{}$ to $\reachable{\aset}$.
  We say that \emph{$\aalg$ is core} whenever $\aalg = \reachable{\aalg}$.
\end{definition}
By construction of the core of a $\asig$-algebra we then obtain:
\begin{lemma}\label{lem:core}
  For every element $a \in \reachable{\aalg}$ of the core of a $\asig$-algebra $\aalg$
  there exists a ground term $t \in \term{\asig}{\setemp}$ with $\interpret{t} = a$.
  \qed
\end{lemma}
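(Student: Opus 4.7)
The plan is to prove the statement by induction on the definition of the core domain $\reachable{\aset}$. Since $\reachable{\aset}$ is defined in Definition~\ref{def:core} as the \emph{least} set closed under the clause that $\funap{\interpret{\tf}}{a_1,\ldots,a_n}\in\reachable{\aset}$ whenever $\tf\in\asig_n$ and $a_1,\ldots,a_n\in\reachable{\aset}$, every element of $\reachable{\aset}$ is obtained by finitely many applications of this clause. This yields a structural induction principle: to show a property $P$ holds for all $a\in\reachable{\aset}$ it suffices to show that $\funap{\interpret{\tf}}{a_1,\ldots,a_n}$ satisfies $P$ whenever $a_1,\ldots,a_n$ do.

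For the induction, I would take $P(a)$ to be: there exists $t\in\term{\asig}{\setemp}$ with $\interpret{t}=a$. In the base case $n=0$, a constant $\tc\in\asig_0$ produces $a=\interpret{\tc}$, and the ground term $t=\tc$ obviously witnesses $P(a)$. In the step case, if $a=\funap{\interpret{\tf}}{a_1,\ldots,a_n}$ with $\tf\in\asig_n$, $n\ge 1$, and each $a_i\in\reachable{\aset}$ satisfies $P(a_i)$, then by the induction hypothesis there are ground terms $t_1,\ldots,t_n\in\term{\asig}{\setemp}$ with $\interpret{t_i}=a_i$. Setting $t=\f{t_1,\ldots,t_n}\in\term{\asig}{\setemp}$, the definition of the interpretation of a term gives
\[
  \interpret{t}=\funap{\interpret{\tf}}{\interpret{t_1},\ldots,\interpret{t_n}}=\funap{\interpret{\tf}}{a_1,\ldots,a_n}=a,
\]
which concludes the step.

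There is no real obstacle here; the only thing one has to be careful about is the edge case where $\asig$ contains no constants, in which case $\reachable{\aset}$ is empty by minimality and the statement holds vacuously (consistent with $\term{\asig}{\setemp}$ being empty as well). The lemma is essentially an unfolding of the inductive definition of $\reachable{\aalg}$, and its significance lies not in its difficulty but in the fact that it is exactly what is needed later (together with the ``core'' restriction motivated by Example~\ref{ex:whycore}) to ensure that every semantic value occurring in a labeled term is realizable by a concrete ground term.
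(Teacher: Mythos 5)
Your proof is correct and matches the paper's intent exactly: the paper gives no explicit argument, stating only ``By construction of the core of a $\asig$-algebra we then obtain'' the lemma, and the rule induction you spell out (with the constant base case and the closure step using $\interpret{\f{t_1,\ldots,t_n}}=\funap{\interpret{\tf}}{\interpret{t_1},\ldots,\interpret{t_n}}$) is precisely that construction made explicit. Your remark about the vacuous case when $\asig$ has no constants is a harmless extra observation.
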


\begin{definition}[\cite{zant:95}]\label{def:labeling}
  A \emph{semantic labeling $\pair{\aalg}{\asemlab}$} 
  consists of a $\asig$-algebra $\aalg$\,
  and a family $\asemlab = \afam{\slabelf{\tf}}{\tf \in \asig}$ of 
  labeling functions 
  $\slabelf{\tf} \funin \aalg^{\arity{\tf}} \to \labels{\tf}$ 
  where $\labels{\tf}$ is a finite and non-empty set of labels for each symbol $\tf \in \asig$. 
  %
  For a term $t \in \term{\asig}{\avars}$ and an assignment $\alpha \funin \avars \to \aalg$, 
  we define $\dolabel{t}{\alpha}$, 
  \emph{the labeling of\, $t$ with respect to $\alpha$},
  inductively as follows:
  \begin{align*}
     \dolabel{x}{\alpha} &= x 
     \\
     \dolabel{\f{t_1,\ldots,t_n}}{\alpha} 
     &= 
     \funap{\svoodoolabel{\tf}{\alab}}{
       \dolabel{t_1}{\alpha},\ldots,\dolabel{t_n}{\alpha}
     }
  \end{align*}
  where $x\in\avars$, $\fun{f}\in\asig_n$, $t_1,\ldots,t_n\in\ter{\asig}{\avars}$ 
  and $\alab = \labelf{\tf}{\interpreta{t_1}{\alpha},\ldots,\interpreta{t_n}{\alpha}}$. 
  
  Let $\atrs$ be a TRS over $\asig$.
  The \emph{semantic labeling of $\atrs$} 
  is the TRS $\labeltrs{\atrs}$ over the labeled signature 
  $\labelsig{\asig} = \{\svoodoolabel{\tf}{\alab} \where \tf \in \asig,\; \alab \in \labels{\tf}\}$,
  defined by:
  \[
    \labeltrs{\atrs} 
    = 
    \{ 
      \dolabel{\ell}{\alpha} \to \dolabel{r}{\alpha} 
      \where 
      {\ell\to r}\in R \,,\; \alpha \funin {\vars{\ell} \to \aalg} 
    \}
  \]
  %
  For a substitution $\asubst\funin{\avars\to\term{\asig}{\avars}}$, 
  and an assignment $\alpha \funin \avars \to \aalg$,
  we write $\dolabel{\asubst}{\alpha}$ for the function 
  $\mylam{x}\dolabel{\funap{\asubst}{x}}{\alpha}$.
  For ground terms $t \in \term{\asig}{\setemp}$ 
  and ground substitutions $\asubst\funin\avars\to\term{\asig}{\setemp}$
  we write $\dolabelg{t}$ and $\dolabelg{\asubst}$ for short.
\end{definition}

Term labeling satisfies the following useful property: 
\begin{lemma}[\cite{zant:95}]\label{lem:zzz}
  Let $\aalg$ be a $\asig$-algebra, 
  let $\alpha\funin{\avars\to\aalg}$ be an assignment, 
  and let $\asubst\funin{\avars\to\term{\asig}{\avars}}$ be a substitution. 
  Then, for all terms $t\in\term{\asig}{\avars}$ it holds that:
  \begin{align*}
    \dolabel{\subst{\asubst}{t}}{\alpha} 
    & 
    = \subst{\,\dolabel{\asubst}{\alpha}}{\dolabel{t}{\interpreta{\asubst}{\alpha}}}
  \end{align*}
\end{lemma}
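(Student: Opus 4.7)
I will prove the identity by structural induction on $t$, with the key algebraic input being Lemma~\ref{lem:xyz}, which guarantees that the interpretation under the composed substitution/assignment agrees with the interpretation under $\interpreta{\asubst}{\alpha}$.

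\textbf{Base case.} If $t = x \in \avars$, then the left-hand side reduces to $\dolabel{\funap{\asubst}{x}}{\alpha}$, while on the right-hand side we have $\dolabel{x}{\interpreta{\asubst}{\alpha}} = x$, so after applying the substitution $\dolabel{\asubst}{\alpha}$ we obtain $\funap{\dolabel{\asubst}{\alpha}}{x} = \dolabel{\funap{\asubst}{x}}{\alpha}$ by definition. The two sides are equal.

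\textbf{Inductive case.} Suppose $t = \f{t_1,\dots,t_n}$ with $\tf \in \asig_n$, and assume the statement holds for each $t_i$. Unfolding the definition of $\dolabelg$ on the left-hand side gives a root symbol $\svoodoolabel{\tf}{\alab}$ with label $\alab = \labelf{\tf}{\interpreta{\subst{\asubst}{t_1}}{\alpha},\dots,\interpreta{\subst{\asubst}{t_n}}{\alpha}}$ and arguments $\dolabel{\subst{\asubst}{t_i}}{\alpha}$. On the right-hand side, first computing $\dolabel{t}{\interpreta{\asubst}{\alpha}}$ produces a root symbol $\svoodoolabel{\tf}{\blab}$ with label $\blab = \labelf{\tf}{\interpreta{t_1}{\interpreta{\asubst}{\alpha}},\dots,\interpreta{t_n}{\interpreta{\asubst}{\alpha}}}$ and arguments $\dolabel{t_i}{\interpreta{\asubst}{\alpha}}$; applying the substitution $\dolabel{\asubst}{\alpha}$ then leaves the root symbol (and thus the label $\blab$) untouched and substitutes into the subterms. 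By Lemma~\ref{lem:xyz}, $\interpreta{\subst{\asubst}{t_i}}{\alpha} = \interpreta{t_i}{\interpreta{\asubst}{\alpha}}$ for each $i$, so $\alab = \blab$ and the root symbols match. The induction hypothesis applied to each $t_i$ yields $\dolabel{\subst{\asubst}{t_i}}{\alpha} = \subst{\dolabel{\asubst}{\alpha}}{\dolabel{t_i}{\interpreta{\asubst}{\alpha}}}$, so the corresponding subterms agree as well.

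\textbf{Main obstacle.} There is no serious difficulty here; the content of the lemma lies entirely in Lemma~\ref{lem:xyz}, which is responsible for ensuring that the labels computed on either side coincide. The only care required is in the bookkeeping of the inductive case, namely observing that the labeled substitution $\dolabel{\asubst}{\alpha}$ does not alter labels already placed on symbols of $\dolabel{t}{\interpreta{\asubst}{\alpha}}$ — which follows immediately from the definition, since $\dolabel{\asubst}{\alpha}$ only maps variables — and that interpretations used for labeling on both sides therefore match by Lemma~\ref{lem:xyz}.
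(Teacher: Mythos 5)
Your proof is correct and matches the paper's approach exactly: the paper's own proof is just the one-line remark ``Direct by induction on the term structure, and Lemma~\ref{lem:xyz},'' and your write-up is precisely that induction carried out in full, with Lemma~\ref{lem:xyz} used in the right place to identify the labels on both sides.
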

\begin{proof}
  Direct by induction on the term structure, and Lemma~\ref{lem:xyz}.
\end{proof}

The $\asig$-algebra of a semantic labeling has to satisfy certain constraints 
in order to obtain that a TRS is terminating if and only if its labeled version is.
In~\cite{zant:95} the algebra has to be a `model': 
\begin{definition}\label{def:model}
  A $\asig$-algebra $\aalg$ 
  is called a \emph{model} for a TRS $\atrs$\,
  if for all rules $\ell \to r \in \atrs$ and assignments of variables in the left-hand side 
  $\alpha \funin \vars{\ell} \to \aalg$ 
  we have that $\interpreta{\ell}{\alpha} = \interpreta{r}{\alpha}$.
\end{definition}
In the introduction we argued why this notion of model is too restrictive for the purpose at hand.
In order to be able to distinguish between redex and non-redex positions
we introduce \cmodel{s}, a generalization of models.
%
%
\begin{definition}\label{def:cmodel}
  A \emph{\cmodel} for a TRS $\atrs$ over $\asig$ 
  is a $\asig$-algebra $\aalg$ 
  where for each rule $\ell \to r \in \atrs$ there exists an $n \in \nat$
  such that for each context $\acontext$ of depth~$n$
  and assignment $\alpha \funin \avars \to \aalg$ 
  we have 
  $\interpreta{\contextfill{\acontext}{\ell}}{\alpha} = \interpreta{\contextfill{\acontext}{r}}{\alpha}$.
  When $n\in\nat$ is minimal for a rule $\ell\to r$ with respect to this property, 
  we call $n$ the \emph{\cdepth{} for $\ell\to r$},
  and denote it by $\cdpth{\aalg}{\ell\to r}$.
  The \emph{\cdepth\ for $\atrs$ with respect to $\aalg$}, denoted by $\cdpth{\aalg}{\atrs}$,
  is defined as the maximal \cdepth{} of the rules of $\atrs$:
  $\cdpth{\aalg}{\atrs} = \max{\{\cdpth{\aalg}{\ell\to r} \where {\ell\to r}\in\atrs\}}$.
\end{definition}

\newcommand{\fffxnr}{1}
\newcommand{\trsfffx}{\atrs_\fffxnr}
\newcommand{\algfffx}{\aalg_\fffxnr}

\begin{example}\label{ex:fffx:cmodel}
  Let $\trsfffx$ be the following TRS over 
  $\asig = \{\fun{c},\tf,\tg\}$ (where $\fun{c}$ is a constant):
  \begin{align}
    \f{\g{x}} &\to \f{\f{\g{x}}} 
    & \f{\f{\f{x}}} &\to x
    \tag{$\trsfffx$}
  \end{align}
  The algebra $\algfffx = \{\bot,f,\mit{ff},g\}$ 
  with the interpretation function defined, for all $x\in\algfffx$, by:
  \begin{align}
    \interpret{\fun{c}} = \bot 
    && 
    \funap{\interpret{\tf}}{\bot} = \funap{\interpret{\tf}}{g} = f
    && 
    \funap{\interpret{\tf}}{f} = \funap{\interpret{\tf}}{\mit{ff}} = \mit{ff}
    &&
    \funap{\interpret{\tg}}{x} = g
    \tag{$\algfffx$}
  \end{align}
  forms a \cmodel{} for $\trsfffx$.
  The \cdepth{} of the rule $\f{\g{x}} \to \f{\f{\g{x}}}$ is $1$; 
  both contexts $\funap{\fun{f}}{\cxthole}$ 
  and $\funap{\fun{g}}{\cxthole}$ make that left and right-hand side of the rule have equal interpretations,
  respectively $\mit{ff}$ and $g$, 
  regardless of the value we assign to the variable $x$.
  For the other rule the \cdepth{} is~$2$:
  If the interpretation of $x$ is $\bot$ or $g$ then $\f{\cxthole}$ 
  is not yet enough to interpret both sides by the same element of the algebra;
  an additional context $\f{\cxthole}$ or $\g{\cxthole}$ has to be wrapped around.
  Thus the \cdepth{} of the TRS is $\cdpth{\algfffx}{\trsfffx} = 2$.
\end{example}

We now define semantic labelings based on \cmodel{s}, to which we refer as `\clabeling{s}'.
We define a \clabeling{} over an extended signature $\asig_{\topsymb} = \asig \cup \{\topsymb\}$.
The symbol~$\topsymb$ represents the top of a term, 
and we assume $\topsymb$ to be fresh for $\asig$, i.e., $\topsymb \not\in \asig$.
Moreover, a \clabeling{} includes a set $\asigred\subseteq\labelsig{\asig}$ of `redex symbols',
the set of symbols below which rewriting should be forbidden.
For example, for a sound transformation from outermost to context-sensitive rewriting 
it has to be guaranteed that in a well-labeled term the symbols from $\asigred$ occur at redex positions only.

\begin{definition}\label{def:clabeling}
  Let $\atrs$ be a TRS over $\asig$.
  A \emph{\clabeling\ for $\atrs$} is a tuple 
  $\triple{\aalg}{\asemlab}{\asigred}$
  where $\aalg$ is a \cmodel\ for $\atrs$,
  $\pair{\aalg}{\asemlab}$ is a semantic labeling 
  over the signature $\asig_{\topsymb} = \asig \cup \{\topsymb\}$,
  and $\asigred \subseteq \labelsig{\asig}$ is a subset of the labeled signature.  
  We fix the interpretation $\interpret{\topsymb}$ of $\topsymb$ to be an
  arbitrary constant function $\mylam{x}{a}$ for some $a \in \aalg$. 
  
  A \clabeling{} $\triple{\aalg}{\asemlab}{\asigred}$ for $\atrs$ is called:
  \begin{enumerate}[(i)]
  \item 
    \emph{sound} if $\rootsymb{\dolabelg{t}} \in \asigred$ 
    implies that $t$ is a redex with respect to~$\atrs$,
    for all ground terms $t \in \term{\asig}{\setemp}$;
  
  \item
    \emph{complete} if $\rootsymb{\dolabelg{t}} \in \asigred$
    whenever $t$ is a redex with respect to~$\atrs$,
    for all ground terms $t \in \term{\asig}{\setemp}$;
  
  \item
    \emph{maximal} if 
    $\labelf{\tf}{a_1,\ldots,a_n} = \triple{a_1}{\ldots}{a_n}$,
    for all symbols $\tf \in \asig_n$ and all values $a_1,\ldots,a_n \in \aalg$\,;
  
  \item
    \emph{core} if the $\asig$-algebra $\aalg$ is core.

  \end{enumerate}
\end{definition}

\begin{remark}
  From Definition~\ref {def:clabeling} it follows that a \clabeling{} $\triple{\aalg}{\asemlab}{\asigred}$ for $\atrs$ is:
  \begin{enumerate}[$-$]
    \item
    sound if and only if $\symbat{\dolabelg{t}}{\apos} \in \asigred$ 
    implies that $\subtrmat{t}{\apos}$ is a redex with respect to~$\atrs$,
    for all ground terms $t \in \term{\asig}{\setemp}$ and all
    positions $\apos \in \pos{t}$, and
    \item
    complete if and only if $\symbat{\dolabelg{t}}{\apos} \in \asigred$
    whenever $\subtrmat{t}{\apos}$ is a redex with respect to~$\atrs$,
    for all ground terms $t \in \term{\asig}{\setemp}$ 
    and all positions $\apos \in \pos{t}$.
  \end{enumerate}
\end{remark}

\begin{example}\label{ex:fffx:clabeling}
  We continue with Example~\ref{ex:fffx:cmodel} where we defined a \cmodel{}
  $\algfffx$ for the TRS~$\trsfffx$.
  We let $\pair{\algfffx}{\semlab}$ be the semantic labeling 
  where $\semlab$ labels each symbol with the interpretation of its arguments.
  The set of redex symbols is defined by 
  $\sigredex{\asig} = \{ \svoodoolabel{\tf}{g} , \svoodoolabel{\tf}{\mit{ff}} \}$.
  These symbols correspond to redex positions
  with respect to the first and the second rule of~$\trsfffx$.
  Then $\triple{\algfffx}{\asemlab}{\sigredex{\asig}}$
  forms a sound, complete, maximal, core \clabeling{} for  $\trsfffx$.
\end{example}

We explain why we 
fix the interpretation $\interpret{\topsymb}$ to be a constant function.
First note that if $\interpret{\topsymb}$ is a constant function, 
then the extension of the signature with $\topsymb$ does not interfere with the property of 
$\aalg$ being a \cmodel\ for $\atrs$.
Second, the transformation given in Section~\ref{sec:cxtext} extends the rules
with contexts until the interpretations of the left and right-hand side are equal.
If $\interpret{\topsymb}$ is constant, then the extension halts at the symbol $\topsymb$,
that is, no further contexts are prefixed to $\topsymb$.
This corresponds to the intuition of $\topsymb$ representing the top of the term.
Moreover, it is not important which constant function $\mylam{x}{a}$ we choose for $\interpret{\topsymb}$:
as no symbols will be prefixed to $\topsymb$, no symbol will be labeled with its value.

\begin{remark}
  The transformations defined in Sections~\ref{sec:cxtext} and~\ref{sec:dynlab} are sound
  whenever we use a sound \clabeling.
  This means that termination of the target system  
  implies outermost termination of the original system.
  On the other hand, using a complete \clabeling{} does not guarantee completeness of either transformations. 
  More precisely, using a complete \clabeling{} does not imply
  that the transformed system is terminating
  whenever the original system is outermost terminating.
  A complete \clabeling{} guarantees that in a correctly labeled term all redex positions are marked,
  so that only outermost steps are possible.
  But, for the transformation to be complete we need two more properties.
  First, rewriting needs to preserve correct labeling of terms.
  Secondly, for the labeled system, global termination of all terms (including the not correctly labeled ones)
  should be equivalent with local termination~\cite{endr:vrij:wald:09} of the well-labeled terms.
  This point is of practical importance because 
  the state of the art of automated analysis for global termination 
  is far more advanced than for local termination.
  Both properties do in general not hold for complete \clabeling{s}.
\end{remark}

\begin{remark}
  Maximal \clabeling{s} can be defined in a more general fashion by requiring
  $\labelf{\tf}{\vec{a}} \neq \labelf{\tf}{\vec{b}}$
  for all $\tf \in \asig_n$ and all $\vec{a}, \vec{b} \in \aalg^n$ with $\vec{a} \neq \vec{b}$.
  The important point is that the value of all arguments can be inferred from the label.
  For the sake of a simple presentation we stick to the definition where
  labels are tuples of argument values.
\end{remark}

\section{Static Context Extension}\label{sec:static}

In this section we describe a naive approach for semantic labeling with \cmodel{s}.
This serves both as an introduction and as a motivation for the transformations 
that we present in Sections~\ref{sec:cxtext} and~\ref{sec:dynlab}.
Input for these transformations is a TRS together with a \cmodel{} for this TRS.
In Sections~\ref{sec:constructing}--\ref{sec:minmax} we explain how
\cmodel{s} are constructed.

As can be inferred from Definition~\ref{def:cmodel}, 
it is possible to transform a TRS~$\atrs$ 
by prepending contexts to its rules in such a way that 
its \cmodel~$\aalg$\, becomes a model for the transformed system $\tilde{\atrs}$,
and then apply the usual semantic labeling to $\tilde{\atrs}$.
We call this transformation `static context extension',
as opposed to the transformation of `dynamic context extension' presented in the next section.
In the dynamic version, contexts are prepended only when needed 
and dependent on the values assigned to the variables in the rules.

Since every context 
is an instance of a thin context (Definition~\ref{def:thin:contexts}) with the same depth, 
rules are prefixed by thin contexts, in both versions of context extension.


\begin{definition}\label{def:cxtext:static}
  Let $\atrs$ be a TRS and $\triple{\aalg}{\asemlab}{\asigred}$ a \clabeling{} for $\atrs$.
  The \emph{static context extension of $\atrs$ with respect to $\triple{\aalg}{\asemlab}{\asigred}$}, 
  is the \csTRS{} $\pair{\staticcxtext{\aalg}{\semlab}{\atrs}}{\samumap}$ where
  $\staticcxtext{\aalg}{\semlab}{\atrs}$ is the TRS resulting from the steps listed below
  and where
  the replacement map $\samumap$
  is defined by $\amumap{\tf} = \setemp$ if $\tf \in \asigred$,
  and $\amumap{\tf} = \{1,\ldots,\arity{\tf}\}$ otherwise,
  for all $\tf\in\labelsig{\asig_{\topsymb}}$.
  %
  %
  \begin{enumerate}[(i)]

  \item \label{item:static:cxtext:prepend}
  Replace each rule $\arule$ with \cdepth~$n$
  by the set of rules:
  \begin{enumerate}
    \item\label{static:i} 
    $\cxtap{\acxt}{\ell} \to \cxtap{\acxt}{r}$ for each thin context~$\acxt$ of depth~$n$, 
    and
    \item\label{static:ii} 
    $\funap{\topsymb}{\cxtap{\acxt}{\ell}} \to \funap{\topsymb}{\cxtap{\acxt}{r}}$ 
    for each thin context~$\acxt$ of depth~$<n$.
  \end{enumerate}
  We let $\tilde{\atrs}$ denote the union of these sets.  

  \item \label{item:static:cxtext:label}
    Apply semantic labeling to 
    $\tilde{\atrs}$ 
    using the \clabeling~$\pair{\aalg}{\asemlab}$. 
    We obtain $\labeltrs{\tilde{\atrs}}$. 
  
  \item \label{item:static:cxtext:remove}
  Remove from $\labeltrs{\tilde{\atrs}}$ %
  all labeled rules that contain a redex symbol from $\asigred$
  in the prepended context.
  The TRS thus obtained is denoted by 
  $\staticcxtext{\aalg}{\semlab}{\atrs}$.

  \end{enumerate}  
  %
\end{definition}

Note that the rules from item~\ref{static:ii} of Definition~\ref{def:cxtext:static}
model the application of an original rule at a depth 
smaller than its \cdepth, that is, `near' the top of the term.
The created rules are hence wrapped into $\funap{\topsymb}{\cxthole}$.

\begin{example}\label{ex:fffx:static:cxtext}
  We illustrate the transformation by static context extension
  on the TRS~$\trsfffx$:
  \begin{align}
    \f{\g{x}} &\to \f{\f{\g{x}}} 
    & \f{\f{\f{x}}} &\to x
    \tag{$\trsfffx$}
  \end{align}
  together with the \cmodel{} $\algfffx = \{\bot,f,\mit{ff},g\}$
  introduced in Example~\ref{ex:fffx:cmodel}
  and the \clabeling{} 
  $\triple{\algfffx}{\asemlab}{\sigredex{\asig}}$
  defined in Example~\ref{ex:fffx:clabeling}.

  The first step of the transformation yields $\tilde{\atrs}_{\fffxnr}$ 
  which consists of the following $10$ rules:
  \begin{align*}
    \cxtfill{\acxt}{\f{\g{x}}} &\to \contextfill{\acontext}{\f{\f{\g{x}}}}
    &\cxtfill{\bcxt}{\f{\f{\f{x}}}} &\to \cxtfill{\bcxt}{x}
    \tag{$\tilde{\atrs}_{\fffxnr}$}
  \end{align*}
  where 
  $\acontext \in \{\funap{\topsymb}{\cxthole},\f{\cxthole},\g{\cxthole}\}$,
  and 
  $ \bcxt \in 
  \{ \funap{\topsymb}{\cxthole},
     \funap{\topsymb}{\f{\cxthole}},
     \funap{\topsymb}{\g{\cxthole}},
     \f{\f{\cxthole}},\f{\g{\cxthole}},\linebreak
     \g{\f{\cxthole}},\g{\g{\cxthole}}
  \}
  $.
  Note that the algebra~$\algfffx$ extended with the symbol $\topsymb$ and 
  interpretation $\funap{\interpret{\topsymb}}{x} = \bot$ for all $x\in\algfffx$, 
  is a model for the TRS $\tilde{\atrs}_{\fffxnr}$
  (in fact, any value for the interpretation of $\topsymb$ will do).
  
  The second step is to label the TRS $\tilde{\atrs}_{\fffxnr}$.
  This yields $\labeltrs{\tilde{\atrs}_{\fffxnr}}$ consisting of $4 \cdot (3+7) = 40$ rules, 
  four instances for each of the ten rules, 
  one for each value that can be assigned to $x$.
  
  The final step is to remove from $\labeltrs{\tilde{\atrs}_{\fffxnr}}$
  each rule which contains a redex symbol within 
  the context that was prepended in the construction of $\tilde{\atrs}_{\fffxnr}$.
  Such a rule would enable a rewrite step which is not outermost.
  Of the 40 rules of above, 12 have to be thrown out. 
  This concerns the labelings of the second rule of~$\trsfffx$
  where $\f{\f{\cxthole}}$, $\funap{\topsymb}{\f{\cxthole}}$ 
  or $\g{\f{\cxthole}}$ has been prepended.
  They contain the redex symbol $\svoodoolabel{\tf}{\mit{ff}}$ in the prepended context.
  For instance, if we prepend the thin context $\f{\f{\cxthole}}$ 
  to the rewrite rule $\f{\g{x}} \to \f{\f{\g{x}}}$ 
  and then perform semantic labeling, we obtain the following rule in $\labeltrs{\tilde{\atrs}_{\fffxnr}}$:
  \begin{align}
    \funap{\tf^{\mit{ff}}}{\funap{\tf^{\mit{f}}}{\funap{\tf^{\mit{g}}}{\funap{\tg^{\bot}}{x}}}} 
    &\to 
    \funap{\tf^{\mit{ff}}}{\funap{\tf^{\mit{ff}}}{\funap{\tf^{\mit{f}}}{\funap{\tf^{\mit{g}}}{\funap{\tg^{\bot}}{x}}}}}
    \label{illegalrule}
  \end{align}
  This rule (with the redex symbol $\tf^\mit{ff}$ in the prepended context) has to be discarded 
  as it would admit the following infinite rewrite sequence, altough $\trsfffx$ is outermost terminating:
  \begin{align*}
    \funap{\topsymb^{\mit{f}}}{\funap{\tf^{\mit{g}}}{\funap{\tg^{\bot}}{\fun{c}}}}
    &\mured
    \funap{\topsymb^{\mit{ff}}}{\funap{\tf^{\mit{f}}}{\funap{\tf^{\mit{g}}}{\funap{\tg^{\bot}}{\fun{c}}}}}
    \mured \funap{\topsymb^{\mit{ff}}}{\funap{\tf^{\mit{ff}}}{\funap{\tf^{\mit{f}}}{\funap{\tf^{\mit{g}}}{\funap{\tg^{\bot}}{\fun{c}}}}}}
    \\
    &\stackrel{\eqref{illegalrule}}{\mured}
    \funap{\topsymb^{\mit{ff}}}{\funap{\tf^{\mit{ff}}}{\funap{\tf^{\mit{ff}}}{\funap{\tf^{\mit{f}}}{\funap{\tf^{\mit{g}}}{\funap{\tg^{\bot}}{\fun{c}}}}}}}
    \mured
    \funap{\topsymb^{\mit{f}}}{\funap{\tf^{\mit{g}}}{\funap{\tg^{\bot}}{\fun{c}}}}
    \mured \ldots
  \end{align*}
  
  We note in advance that the \emph{dynamic} context extension $\cxtext{\algfffx}{\asemlab}{\trsfffx}$
  of $\trsfffx$, worked out in Example~\ref {ex:fffx:dyn:cxtext}, consists of $19$ rewrite rules, 
  whereas the static version $\staticcxtext{\algfffx}{\asemlab}{\trsfffx}$ above consists of $28$ rules.
\end{example}

In general, the presence of a redex symbol may depend on the interpretation of the variables.
This is better demonstrated by the following example.
\begin{example}
  Consider the TRS over the signature $\{\fun{c},\tf,\tg\}$ (with $\fun{c}$ a constant):
  \begin{align}
    \g{\f{\g{x}}} &\to \f{\g{\g{\f{x}}}}
    & \f{x} &\to x
    \tag{$\atrs_2$}
    \label{ex:fx}
  \end{align}  
  We use the \cmodel{} $\iaalg{2} = \{\bot,g,\mit{fg}\}$
  with the interpretation of the symbols defined by:
  \begin{align}
    \interpret{\fun{c}} = \bot
    &&
    \funap{\interpret{\tg}}{x} = g
    &&
    \funap{\interpret{\tf}}{g} = \mit{fg}
    &&
    \funap{\interpret{\tf}}{\bot} = \funap{\interpret{\tf}}{\mit{fg}} = \bot
    \tag{$\iaalg{2}$}
  \end{align}
  for all $x\in\iaalg{2}$.
  Again we use maximal labeling 
  so that the symbols $\svoodoolabel{\tg}{\mit{fg}}$, $\svoodoolabel{\tf}{\bot}$, 
  $\svoodoolabel{\tf}{g}$ and $\svoodoolabel{\tf}{\mit{fg}}$
  correspond to redex positions.
  The \cdepth{} of the rule $\f{x} \to x$ is $2$ and
  its static context extension contains the rule 
  $\g{\g{\f{x}}} \to \g{\g{x}}$.
  From this we obtain three labeled rules:
  \begin{align*}
    \funap{\svoodoolabel{\tg}{g}}{\funap{\svoodoolabel{\tg}{\bot}}{\funap{\svoodoolabel{\tf}{\bot}}{x}}} 
    &\to \funap{\svoodoolabel{\tg}{g}}{\funap{\svoodoolabel{\tg}{\bot}}{x}}
    &&\text{for $\funap{\alpha}{x} = \bot$}\\
    \funap{\svoodoolabel{\tg}{g}}{\funap{\svoodoolabel{\tg}{\mit{fg}}}{\funap{\svoodoolabel{\tf}{g}}{x}}} 
    &\to \funap{\svoodoolabel{\tg}{g}}{\funap{\svoodoolabel{\tg}{g}}{x}}
    &&\text{for $\funap{\alpha}{x} = g$}\\
    \funap{\svoodoolabel{\tg}{g}}{\funap{\svoodoolabel{\tg}{\bot}}{\funap{\svoodoolabel{\tf}{\mit{fg}}}{x}}} 
    &\to \funap{\svoodoolabel{\tg}{g}}{\funap{\svoodoolabel{\tg}{\mit{fg}}}{x}}
    &&\text{for $\funap{\alpha}{x} = \mit{fg}$}
  \end{align*}
  The second rule should not be allowed, as it would enable a rewrite step that is not outermost.
  This is witnessed by the symbol $\svoodoolabel{\tg}{\mit{fg}}$ in the prepended context.
\end{example}

\section{Dynamic Context Extension}\label{sec:cxtext}

We present an approach for semantic labeling with \cmodel{s}, 
called `dynamic context extension',%
  \footnote{%
    In~\cite{endr:hend:09} we used the term `dynamic labeling' for what we here call `dynamic context extension'.
    The term `dynamic labeling' is now reserved for the transformation that we define in Section~\ref{sec:dynlab}.
  }
where we stepwise extend rules by contexts, only when needed and 
dependent on the variable interpretation used for the semantic labeling.
For different interpretations of the variables usually different
context depths are necessary for achieving equal interpretations of left and right-hand side.
In each extension step we check whether a candidate symbol is a redex symbol,
and, if it is, this symbol is excluded from prepending.
Here, by a redex symbol we mean a labeled symbol which
indicates the presence of a redex in the original system.
Dynamic context extension is more efficient in the sense that 
both the number and the size of the rules of the resulting \csTRS{} 
are smaller than in the static version defined in the previous version.

The transformation starts with constructing pairs $\pair{\arule}{\alpha}$ of rules and variable assignments.
Then these rules are extended with flat contexts until the interpretations of left and right-hand side are equal.
Finally, each obtained rule is labeled using the corresponding interpretation.
More precisely, we implement this process as follows.

We iteratively construct sets $\icxtextpairs{0}, \icxtextpairs{1}, \ldots$,
until $\icxtextpairs{i+1} = \icxtextpairs{i}$ for some $i$. 
The initial set $\icxtextpairs{0}$ consists of pairs 
$\pair{\ell\to r}{\alpha}$ for each rule $\ell\to r$, 
and each interpretation $\alpha\funin\vars{\ell}\to\aalg$ of the variables.
Then, in each step, $\icxtextpairs{i+1}$ is obtained from $\icxtextpairs{i}$ by replacing every pair 
$\pair{\ell\to r}{\alpha}$ of $\icxtextpairs{i}$
for which the interpretation of the left-hand side differs from the right-hand side 
($\interpreta{\ell}{\alpha} \neq \interpreta{r}{\alpha}$),
by the pairs $\pair{\contextfill{\acontext}{\ell}\to \contextfill{\acontext}{r}}{\alpha'}$
for every flat context $\acontext$ (Definition~\ref{def:flat:contexts}) 
and every extension $\alpha'\funin\vars{\contextfill{\acontext}{\ell}}\to\aalg$ of $\alpha$,
such that the root of the labeled, extended left-hand side 
$\dolabel{\contextfill{\acontext}{\ell}}{\alpha'}$ is not a redex symbol.
Among the flat contexts to be prepended we include $\funap{\topsymb}{\contexthole}$
to cater for the case that the rule is applied at the top of the term.

\begin{example}\label{ex:fffx:dyn:cxtext}
  We reconsider from Examples~\ref{ex:fffx:cmodel} and~\ref{ex:fffx:clabeling} the term rewriting system~$\trsfffx$:
    \begin{align}
    \f{\g{x}} &\to \f{\f{\g{x}}} 
    & \f{\f{\f{x}}} &\to x
    \tag{$\trsfffx$}
  \end{align}
  together with the \clabeling{} $\triple{\algfffx}{\asemlab}{\sigredex{\asig}}$,
  where $\algfffx = \{\bot,f,\mit{ff},g\}$, $\asemlab$ labels symbols with their arguments 
  and $\sigredex{\asig} = \{ \svoodoolabel{\tf}{g} , \svoodoolabel{\tf}{\mit{ff}} \}$.
  The initial set $\icxtextpairs{0}$ of pairs $\pair{\text{rule}}{\text{assignment}}$ is:
  \begin{align*}
  \icxtextpairs{0}
  =
  \big\{
    \pair{\f{\g{x}} \to \f{\f{\g{x}}}}{\mylam{x}{a}}\,,
    \pair{\f{\f{\f{x}}} \to x}{\mylam{x}{a}}
    \where
    a\in\algfffx
  \big\}
  \end{align*}
  The only element $\pair{\ell\to r}{\alpha}$ of $\icxtextpairs{0}$ such that 
  $\interpreta{\ell}{\alpha} = \interpreta{r}{\alpha}$
  is $\pair{\f{\f{\f{x}}} \to x}{\mylam{x}{\mit{ff}}}$.
  For this pair no context needs to be prepended. 
  The other pairs have to be replaced by their context extensions
  and thus $\icxtextpairs{1}$ consists of the following ($4\cdot3+1+3\cdot2 = 19$) pairs:
  \begin{align*}
    \pair{\contextfill{\acontext}{\f{\g{x}}} 
    \to \contextfill{\acontext}{\f{\f{\g{x}}}}}{\mylam{x}{a}}
    &&& \text{for all $a\in\algfffx$, 
    $\acontext\in\{\funap{\topsymb}{\cxthole},\f{\cxthole},\g{\cxthole}\}$}
    \\
    \pair{\f{\f{\f{x}}} \to x}{\mylam{x}{\mit{ff}}}
    \\
    \pair{\contextfill{\acontext}{\f{\f{\f{x}}}} 
    \to \contextfill{\acontext}{x}}{\mylam{x}{a}}
    &&& \text{for all $a \neq \mit{ff}$, 
    $\acontext\in\{\funap{\topsymb}{\cxthole},\g{\cxthole}\}$}
  \end{align*}
  In the last line the context $\f{\contexthole}$ is excluded,
  because the labeled left-hand side of the rule would contain 
  the redex symbol $\svoodoolabel{\tf}{\mit{ff}}$ within the prepended context,
  and thus the step would not be outermost.
  Because of the outermost strategy, the original rule is only applicable in a context 
  $\cxtap{\acxt}{\g{\cxthole}}$ (where $\acxt$ does not contain any redexes)
  or at the top of a term.
  Now for all rules in $\icxtextpairs{1}$ 
  the left and right-hand side have equal interpretations,
  and hence the iterative construction is finished.

  Secondly, the obtained set~$P_1$ is labeled using the family~$\semlab$ of labeling functions.
  The desired context-sensitive TRS~$\cxtext{\algfffx}{\semlab}{\trsfffx}$
  then consists of the rules 
  $\dolabel{\ell}{\alpha} \to \dolabel{r}{\alpha}$ 
  for every $\pair{\ell \to r}{\alpha} \in P_1$, 
  with the replacement map $\samumap$
  defined by $\amumap{\sh} = \setemp$ if $\sh \in \{ \tf^g ,\tf^{\mit{ff}} \}$,
  and $\amumap{\sh} = \{1,\ldots,\arity{\sh}\}$ otherwise,
  for all $\sh\in\labelsig{\asig}$.
  Thus the dynamic context extension of $\trsfffx$ consists of $19$ rules.
  Recall from Example~\ref{ex:fffx:static:cxtext} 
  that the static context extension of $\trsfffx$ had $28$ rules.
\end{example}

We now formalize this transformation. 
\begin{definition}\label{def:cxtextpairs}
  Let $\atrs$ be a TRS over $\asig$,
  and $\triple{\aalg}{\semlab}{\asigred}$ a \clabeling\ for $\atrs$.
  We define $\cxtextpairs{\semlab}{\atrs}$ as the least fixed point 
  of the following construction of sets $\icxtextpairs{0},\icxtextpairs{1},\ldots$\,,
  that is, $\cxtextpairs{\semlab}{\atrs} = \icxtextpairs{i}$ 
  as soon as $\icxtextpairs{i+1} = \icxtextpairs{i}$ for some~$i$.
  The initial set $\icxtextpairs{0}$ is defined by:
  \begin{align*}
     \icxtextpairs{0} & = 
       \big\{\, \pair{\ell \to r}{\alpha} \where \ell \to r \in \atrs,\; \alpha \funin \vars{\ell} \to \aalg \,\big\}
  \end{align*}
  and for $i=0,1,\ldots$ the set $\icxtextpairs{i+1}$ is obtained from $\icxtextpairs{i}$ 
  by replacing every pair $\pair{\ell \to r}{\alpha}$ 
  such that $\interpreta{\ell}{\alpha} \ne \interpreta{r}{\alpha}$, 
  or $r \in \avars$,
  by all pairs in $\prepend{\ell \to r}{\alpha}$ where we define:
  \begin{align*}
    \prepend{\ell \to r}{\alpha} 
     = 
     \big\{\, 
       \pair{\contextfill{\acontext}{\ell} \to \contextfill{\acontext}{r}}{\extend{\alpha}{\beta}}
       \mathrel{\big|} \mbox{}
       &\acontext \in \bfunap{\contexts_\ell^\flat}{\asig_\topsymb}{\avars},\;
       \beta \funin \vars{\acontext} \to \aalg,\\
       & \rootsymb{\dolabel{\contextfill{\acontext}{\ell}}{\extend{\alpha}{\beta}}} \not\in\asigred
     \,\big\}
  \end{align*}
Here, for partial functions $f$ and $g$ with disjoint domains,
we write $\extend{f}{g}$ for the function defined by
$\funap{(\extend{f}{g})}{x} = \funap{f}{x}$ if $x\in\dom{f}$,
and $\funap{(\extend{f}{g})}{x} = \funap{g}{x}$ if $x\in\dom{g}$.
\end{definition}
The construction of $\cxtextpairs{\semlab}{\atrs}$ 
is guaranteed to terminate because of the assumption that 
$\aalg$ is a \cmodel{} for $\atrs$.

\begin{definition}[Dynamic context extension]\label{def:cxtext}
  Let $\atrs$ be a TRS over $\asig$,
  and $\triple{\aalg}{\semlab}{\asigred}$ a \clabeling\ for $\atrs$.
  The \emph{dynamic context extension of $\atrs$ with respect to $\triple{\aalg}{\semlab}{\asigred}$}
  is the \csTRS{} $\pair{\cxtext{\aalg}{\semlab}{\atrs}}{\samumap}$ consisting of:
  \begin{align*}
    \cxtext{\aalg}{\semlab}{\atrs} 
    &= 
    \big\{\,
      \dolabel{\ell}{\alpha} \to \dolabel{r}{\alpha} 
      \where \pair{\ell \to r}{\alpha} \in \cxtextpairs{\semlab}{\atrs} 
    \,\big\}
  \end{align*}
  and the replacement map $\samumap$, 
  defined by $\amumap{\tf} = \setemp$ if $\tf \in \asigred$,
  and $\amumap{\tf} = \{1,\ldots,\arity{\tf}\}$ otherwise,
  for all $\tf\in\labelsig{\asig_{\topsymb}}$.
  Whenever the set $\asigred$, which determines the replacement map,
  is clear from the context, we write $\cxtext{\aalg}{\semlab}{\atrs}$
  as a shorthand for $\pair{\cxtext{\aalg}{\semlab}{\atrs}}{\samumap}$.
\end{definition}

\begin{remark}\label{rem:cxtext:non-collapsing}
  In the transformation given in Definition~\ref{def:cxtext}
  collapsing rules are always prepended by at least one flat context. 
  Consequently, $\pair{\cxtext{\aalg}{\semlab}{\atrs}}{\samumap}$
  contains no collapsing rules.
  This is used in the proof of Theorem~\ref{thm:abstract:maxcomplete}
  in order to apply Theorem~\ref{thm:adapt:ohlebusch}.
  Without this elimination of collapsing rules, 
  the transformation 
  is still sound (Theorem~\ref{thm:cxtext:sound}).
  Note that in the TRS~$\trsfffx$ worked out before, 
  we did not eliminate the collapsing rule.
\end{remark}

Let us work out another example.
\newcommand{\duplrhsnr}{3}%
\newcommand{\sigduplrhs}{\iasig{\duplrhsnr}}%
\newcommand{\trsduplrhs}{\iatrs{\duplrhsnr}}%
\newcommand{\algduplrhs}{\iaalg{\duplrhsnr}}%
\begin{example}\label{ex:dupl_rhs}
  We consider problem~\verb!zantema08/dupl_rhs.trs! 
  from the TPDB~\cite{term:comp:08}:
  \begin{gather}
  \begin{aligned}
    \fb{\h{x}}{\fun{c}}
    & \to \fb{\funap{\fun{i}}{x}}{\funap{\fun{s}}{x}}
    &&&&&
    \funap{\fun{i}}{x}
    & \to \h{x}
    \\
    \fb{\funap{\fun{i}}{x}}{y}
    & \to x
    &&&&&
    \h{x}
    & \to \fb{\h{x}}{\fun{c}}
  \end{aligned}
  \tag{$\trsduplrhs$}
  \end{gather}
  We denote this TRS by $\trsduplrhs$, and
  take the algebra $\algduplrhs = \pair{\{\bot,c,h,i\}}{\sinterpret}$ 
  with $\sinterpret$ defined by:
  \begin{align*}
    \interpret{\fun{c}}=c
  &&
  \funap{\interpret{\sh}}{x} = h
  &&
  \funap{\interpret{\fun{i}}}{x} = i
  &&
  \bfunap{\interpret{\tf}}{x}{y} = \funap{\interpret{\fun{s}}}{x} = \bot
  \end{align*}
  for all $x,y\in\algduplrhs$.
  Furthermore, we employ minimal labeling; 
  only the function symbols that are at the root of a redex occurrence are marked.
  Thus the symbols $\sh$, $\fun{i}$ are always marked:
  $\labelf{\sh}{x} = \labelf{\fun{i}}{x} = \star$.
  We let $\labelf{\tf}{i,x} = \labelf{\tf}{h,c} = \star$,
  and leave $\tf$ unmarked otherwise. 
  Also, the symbols $\fun{s}$ and $\fun{c}$ are never marked.
  
  The dynamic context extension $\cxtext{\algduplrhs}{\semlab}{\trsduplrhs}$ 
  is then formed by the rules: 
  \begin{align*}
    \bfunap{\rmark{\tf}}{\funap{\rmark{\sh}}{x}}{\fun{c}}
    & \to \bfunap{\rmark{\tf}}{\funap{\rmark{\fun{i}}}{x}}{\funap{s}{x}}
    & 
    \funap{\fun{s}}{\funap{\rmark{\fun{i}}}{x}}
    & \to \funap{\fun{s}}{\funap{\rmark{\sh}}{x}}
    \\
    &&
    \fb{y}{\funap{\rmark{\fun{i}}}{x}}
    & \to \fb{y}{\funap{\rmark{\sh}}{x}}
    \\
    \funap{\fun{s}}{\bfunap{\rmark{\tf}}{\funap{\rmark{\fun{i}}}{x}}{y}}
    & \to \funap{\fun{s}}{x} 
    &
    \funap{\topsymb}{\funap{\rmark{\fun{i}}}{x}}
    & \to \funap{\topsymb}{\funap{\rmark{\sh}}{x}}
    \\
    \fb{z}{\bfunap{\rmark{\tf}}{\funap{\rmark{\fun{i}}}{x}}{y}}
    & \to \fb{z}{x}
    \tag{$\cxtext{\algduplrhs}{\semlab}{\trsduplrhs}$}
    \\
    \fb{z}{\bfunap{\rmark{\tf}}{\funap{\rmark{\fun{i}}}{x}}{y}}
    & \to \bfunap{\rmark{\tf}}{z}{x} 
    &
    \funap{\fun{s}}{\funap{\rmark{\sh}}{x}}
    & \to \funap{\fun{s}}{\bfunap{\rmark{\tf}}{\funap{\rmark{\sh}}{x}}{\fun{c}}}
    \\
    \fb{\bfunap{\rmark{\tf}}{\funap{\rmark{\fun{i}}}{x}}{y}}{z}
    & \to \fb{x}{z} 
    &
    \fb{y}{\funap{\rmark{\sh}}{x}}
    & \to \fb{y}{\bfunap{\rmark{\tf}}{\funap{\rmark{\sh}}{x}}{\fun{c}}}
    \\
    \fb{\bfunap{\rmark{\tf}}{\funap{\rmark{\fun{i}}}{x}}{y}}{z}
    & \to \bfunap{\rmark{\tf}}{x}{z}
    &
    \fb{\funap{\rmark{\sh}}{x}}{y}
    & \to \fb{\bfunap{\rmark{\tf}}{\funap{\rmark{\sh}}{x}}{\fun{c}}}{y}
    \\
    \funap{\topsymb}{\bfunap{\rmark{\tf}}{\funap{\rmark{\fun{i}}}{x}}{y}}
    & \to \funap{\topsymb}{x}
    &
    \funap{\topsymb}{\funap{\rmark{\sh}}{x}}
    & \to \funap{\topsymb}{\bfunap{\rmark{\tf}}{\funap{\rmark{\sh}}{x}}{\fun{c}}}
  \end{align*}
  %
  %
  %
\end{example}

We now work towards the first main theorem,
stating that outermost ground termination of~$\atrs$
is implied by termination of the transformed system~$\cxtext{\aalg}{\semlab}{\atrs}$.
\begin{lemma}\label{lem:yyy}
  Let $\atrs$ be a TRS over $\asig$, 
  and let $\triple{\aalg}{\semlab}{\asigred}$ be a sound \clabeling\ for $\atrs$.
  Moreover, let $s,t \in \term{\asig}{\setemp}$ be ground terms and $p\in\pos{s}$
  such that $s \outred_{\atrs,p} t$. 
  Then for all proper prefixes $q$ of $1p$ we have 
  $\symbat{\dolabelg{\funap{\topsymb}{s}}}{q}\not\in\asigred$.
\end{lemma}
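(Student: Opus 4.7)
The plan is to argue by a straightforward case distinction on the position $q$. Since $q$ is a proper prefix of $1p$ in the ground term $\funap{\topsymb}{s}$, either $q = \posemp$ or $q = 1q'$ for some $q'\in\pos{s}$ that is a proper prefix of $p$ (possibly $q' = \posemp$). I will handle these two cases separately.

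For $q = \posemp$, the symbol at that position in $\dolabelg{\funap{\topsymb}{s}}$ is a labeled top symbol $\svoodoolabel{\topsymb}{\alab}$ for some label $\alab$. Since $\topsymb \notin \asig$ by construction and $\asigred \subseteq \labelsig{\asig}$, no labeled version of $\topsymb$ can be a redex symbol, so $\symbat{\dolabelg{\funap{\topsymb}{s}}}{\posemp} \notin \asigred$.

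For $q = 1q'$ I would first note that semantic labeling is a local operation: the label attached to a symbol depends only on the interpretation of its direct arguments. From the inductive definition of $\dolabel{\cdot}{\cdot}$ (and Lemma~\ref{lem:xyz}) it therefore follows that
\[
  \symbat{\dolabelg{\funap{\topsymb}{s}}}{1q'} \;=\; \symbat{\dolabelg{s}}{q'}\,,
\]
so the potential complication of wrapping inside $\topsymb$ disappears. Next, because $s \outred_{\atrs,p} t$ is an outermost step, no proper prefix of $p$ in $s$ is the position of a redex; in particular $\subtrmat{s}{q'}$ is not a redex with respect to $\atrs$. Applying the contrapositive of soundness in its positional form (the first item of the remark following Definition~\ref{def:clabeling}) yields $\symbat{\dolabelg{s}}{q'} \notin \asigred$, and combining with the equality above gives the claim.

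I do not expect any real obstacle here: the only non-trivial step is the commutation of labeling with taking subterms, and this is immediate from the definitions together with the fact that $\interpret{\topsymb}$ is a fixed constant function, so wrapping $s$ under $\topsymb$ does not alter the labels already assigned inside $s$.
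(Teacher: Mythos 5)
Your proof is correct and follows essentially the same two-case argument as the paper: the root case is handled by $\svoodoolabel{\topsymb}{\alab}\not\in\asigred$, and the case $q=1q'$ by the commutation of labeling with subterm extraction together with the positional form of soundness and the outermost-ness of the step. (The only cosmetic difference is that you appeal to the constancy of $\interpret{\topsymb}$ to justify the commutation, which is not actually needed since labels are computed bottom-up and are independent of the surrounding context.)
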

\begin{proof}
  For $q=\posemp$ this follows from $\topsymb^\alab \not\in \asigred$ for any label $\alab$.
  Otherwise we have that $\symbat{\dolabelg{\funap{\topsymb}{s}}}{q} = \symbat{\dolabelg{s}}{q'}$ with
  $q'$ a proper prefix of $p$, and if $\symbat{\dolabelg{s}}{q'}\in\asigred$,
  then, by 
  definition of sound \clabeling,
  $s$ contains a redex at position $q'$, quod non.
\end{proof}
The following lemma states that any outermost ground rewrite step in $\atrs$
can be transformed into a rewrite step in $\cxtext{\aalg}{\semlab}{\atrs}$.
%
%
\begin{lemma}\label{lem:xxx}
  Let $\atrs$ be a TRS over $\asig$, 
  and let $\triple{\aalg}{\semlab}{\asigred}$ be a sound \clabeling\ for $\atrs$.
  Let $s,t \in \term{\asig}{\setemp}$ be ground terms
  such that $s \outred_{\atrs} t$. Then: 
  \[
    \dolabelg{\funap{\topsymb}{s}} 
    \to_{\cxtext{\aalg}{\semlab}{\atrs},\samumap} 
    \dolabelg{\funap{\topsymb}{t}}
  \]
\end{lemma}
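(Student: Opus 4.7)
The plan is to locate a rule in $\cxtext{\aalg}{\semlab}{\atrs}$ obtained by prepending to $\arule$ exactly the ``spine'' of $\funap{\topsymb}{s}$ above the contracted redex, label it with the ground interpretation of that spine, and then show both that this rule can be applied to $\dolabelg{\funap{\topsymb}{s}}$ and that the rewrite position is $\samumap$-replacing.

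Fix the outermost step, writing $s = \cxtap{\acontext}{\subst{\asubst}{\ell}}$ and $t = \cxtap{\acontext}{\subst{\asubst}{\arhs}}$ where $\acontext$ has its hole at $p$. In $\funap{\topsymb}{s} = \cxtap{\bcontext}{\subst{\asubst}{\ell}}$ the hole of $\bcontext \defdby \funap{\topsymb}{\acontext}$ sits at depth $1+\lstlength{p}$. For each $i=0,1,\ldots,1+\lstlength{p}$, take the innermost $i$ layers of $\bcontext$ and replace its sibling subterms by fresh variables; this yields a thin context $\acontext_i$ of depth $i$ (with $\acontext_0 = \cxthole$). Define $\alpha_i$ on $\vars{\ell}\cup\vars{\acontext_i}$ by sending every variable to the interpretation of the corresponding subterm of $\funap{\topsymb}{s}$.

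First I would show, by induction on $i$, that as long as the extensions generated from $\pair{\ell\to\arhs}{\alpha_0}$ have not stabilised, the pair $\pair{\cxtap{\acontext_i}{\ell}\to\cxtap{\acontext_i}{\arhs}}{\alpha_i}$ lies in $\icxtextpairs{i}$. The step from $i$ to $i+1$ amounts to noting that $\acontext_{i+1}$ is obtained from $\acontext_i$ by prepending a single flat context (the slice of $\bcontext$ at the appropriate depth), with $\alpha_{i+1}$ the corresponding extension of $\alpha_i$; the only delicate point is to verify that the root of $\dolabel{\cxtap{\acontext_{i+1}}{\ell}}{\alpha_{i+1}}$ is not in $\asigred$, so that $\prepend{\cxtap{\acontext_i}{\ell}\to\cxtap{\acontext_i}{\arhs}}{\alpha_i}$ really contains this pair. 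But that root coincides with the symbol $\symbat{\dolabelg{\funap{\topsymb}{s}}}{q'}$ for the proper prefix $q'$ of $1p$ of length $\lstlength{p}-i$, and Lemma~\ref{lem:yyy} rules out membership in $\asigred$. Since $\aalg$ is a \cmodel\ and $\interpret{\topsymb}$ is constant, the iteration stabilises for some smallest $n\leq\min\{\cdpth{\aalg}{\arule},\,1+\lstlength{p}\}$, with $\cxtap{\acontext_n}{\arhs}\nin\avars$ (guaranteed by the extra clause ``or $r\in\avars$'' in Definition~\ref{def:cxtextpairs}). At that point $\dolabel{\cxtap{\acontext_n}{\ell}}{\alpha_n} \to \dolabel{\cxtap{\acontext_n}{\arhs}}{\alpha_n}$ is a rule of $\cxtext{\aalg}{\semlab}{\atrs}$.

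To perform the rewrite, let $\bsubst$ map each $x\in\vars{\ell}$ to $\dolabelg{\funap{\asubst}{x}}$ and each fresh variable of $\acontext_n$ to the labeling of the sibling subterm of $\funap{\topsymb}{s}$ it represents. By Lemma~\ref{lem:zzz}, $\subst{\bsubst}{\dolabel{\cxtap{\acontext_n}{\ell}}{\alpha_n}} = \dolabelg{\cxtap{\acontext_n}{\subst{\asubst}{\ell}}}$, which is precisely the subterm of $\dolabelg{\funap{\topsymb}{s}}$ at the position $q$ where the hole of the truncation $\bcontext = \bcontext_n[\acontext_n]$ lies; the analogous identity for the right-hand side gives $\dolabelg{\funap{\topsymb}{t}}$ after contraction. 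Finally, $q$ is a prefix of $1p$ in $\funap{\topsymb}{s}$, and Lemma~\ref{lem:yyy} together with the definition of $\samumap$ (which forbids descent only below symbols of $\asigred$) shows that every position on the path from $\posemp$ to $q$ is $\samumap$-replacing, so the step is legal in the \csTRS.

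The main obstacle will be the simultaneous bookkeeping across Steps 1--3: that the thin context $\acontext_n$, the assignment $\alpha_n$, and the substitution $\bsubst$ chosen for application match up under labeling. Lemma~\ref{lem:zzz} does the arithmetic, but one has to verify carefully that the fresh variables introduced at each prepending step are consistently reused and that the labels on the spine of $\dolabelg{\funap{\topsymb}{s}}$ above $p$ really are those produced by the incremental assignments~$\alpha_i$.
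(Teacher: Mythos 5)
Your proposal is correct and follows essentially the same route as the paper's proof: an induction over the sets $\icxtextpairs{i}$ that peels flat contexts off the spine of $\funap{\topsymb}{s}$ above the redex, using Lemma~\ref{lem:yyy} to discharge the $\asigred$ side-condition at each prepending step and again for the final $\samumap$-replacing claim, and Lemma~\ref{lem:zzz} to match the labeled rule instance against $\dolabelg{\funap{\topsymb}{s}}$. The only difference is bookkeeping — you carry the growing thin context $\acontext_i$ with an assignment $\alpha_i$, while the paper carries the shrinking outer context and a ground substitution $\asubst_i$ — which amounts to the same induction.
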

\begin{proof}
  Assume $s \outred_{\atrs,p} t$ for some position $p \in \pos{s}$.
  Then there exists a rule $\ell \to r \in \atrs$, 
  a context $\acontext$ with $\symbat{\acontext}{p} = \contexthole$
  and a ground substitution $\asubst$ such that 
  $s = \contextfill{\acontext}{\subst{\asubst}{\ell}}$ and $t = \contextfill{\acontext}{\subst{\asubst}{r}}$.
  We consider the construction of the dynamic context extension from Definition~\ref{def:cxtext},
  and prove by induction that for all $i = 0,1,\ldots$ there exists 
  a context $\acontext_i$ which is a prefix of $\funap{\topsymb}{\acontext}$,
  a ground substitution $\asubst_i$, and terms $\ell_i$, $r_i$ such that
  $\funap{\topsymb}{s} = \contextfill{\acontext_i}{\subst{\asubst_i}{\ell_i}}$,
  $\funap{\topsymb}{t} = \contextfill{\acontext_i}{\subst{\asubst_i}{r_i}}$
  and $\pair{\ell_i \to r_i}{\interpret{\asubst_i}} \in P_i$.
  For the base case we have $\pair{\ell_0 \to r_0}{\interpret{\asubst_0}} \in P_0$ with
  $\ell_0 = \ell$, $r_0 = r$, $\asubst_0 = \asubst$, and $\acontext_0 = \funap{\topsymb}{\acontext}$.
  For the induction step we assume the existence of
  $\acontext_i$, $\asubst_i$, and $\pair{\ell_i \to r_i}{\interpret{\asubst_i}} \in P_i$ 
  with the above properties.
  If $\interpreta{\ell_i}{\interpret{\asubst_i}} = \interpreta{r_i}{\interpret{\asubst_i}}$
  and $r_i \not\in \avars$ then by definition $\pair{\ell_i \to r_i}{\interpret{\asubst_i}} \in P_{i+1}$, 
  and so we are done.
  For the remaining cases 
  $\interpreta{\ell_i}{\interpret{\asubst_i}} \ne \interpreta{r_i}{\interpret{\asubst_i}}$
  and $r_i\in\avars$, we first show that $\acontext_i \ne \contexthole$.
  If $\interpreta{\ell_i}{\interpret{\asubst_i}} \ne \interpreta{r_i}{\interpret{\asubst_i}}$
  and $\acontext_i = \contexthole$, 
  then $\subst{\asubst_i}{\ell_i} = \funap{\topsymb}{s}$ 
  and $\subst{\asubst_i}{r_i} = \funap{\topsymb}{t}$, 
  and hence $\rootsymb{\ell_i} = \rootsymb{r_i} = \topsymb$, contradicting 
  $\interpreta{\ell_i}{\interpret{\asubst_i}} \ne \interpreta{r_i}{\interpret{\asubst_i}}$
  (recall that the interpretation of $\topsymb$ is constant).
  Furthermore, we have $r_i\in\avars$ only if $i=0$, 
  and then $\acontext_i = \funap{\topsymb}{\acontext} \neq \contexthole$.
  Thus we have $\acontext_i = \contextfill{\bcontext}{\subst{\asubst'}{\bcontext'}}$
  for some context $\bcontext$, flat context $\bcontext' \in \contexts_{\ell_i}^\flat$
  and substitution~$\asubst'$.
  We choose $\acontext_{i+1} = \bcontext$,
  $\ell_{i+1} = \contextfill{\bcontext'}{\ell_i}$,
  $r_{i+1} = \contextfill{\bcontext'}{r_i}$, and
  $\asubst_{i+1} = \extend{\asubst_i}{\asubst'}$.
  It remains to be shown that $\pair{\ell_{i+1} \to r_{i+1}}{\interpret{\asubst_{i+1}}} \in P_{i+1}$.
  For this it suffices to prove that 
  $\rootsymb{\dolabel{\ell_{i+1}}{\interpret{\asubst_{i+1}}}} \not\in \asigred$.
  We have $\contextfill{\acontext_{i+1}}{\subst{\asubst_{i+1}}{\ell_{i+1}}} = \funap{\topsymb}{s}$.
  Let $q$ be the position such that $\symbat{\acontext_{i+1}}{q} = \contexthole$.
  Then, by Lemma~\ref{lem:zzz} 
  we obtain 
  \(
    \rootsymb{\dolabel{\ell_{i+1}}{\interpret{\asubst_{i+1}}}} 
    = \rootsymb{\dolabelg{\subst{\asubst_{i+1}}{\ell_{i+1}}}}
    = \symbat{\funap{\topsymb}{\dolabelg{s}}}{q}
  \).
  Note that $q$ is a proper prefix of $1p$. 
  By Lemma~\ref{lem:yyy} we have 
  $\symbat{\dolabelg{\funap{\topsymb}{s}}}{q}\not\in\asigred$.

  Let $i$ be such that $P_{i+1} = P_i$. 
  By the result above we have 
  $\pair{\ell_i \to r_i}{\interpret{\asubst_i}} \in P$  
  with $\interpret{\subst{\asubst_i}{\ell_i}} = \interpret{\subst{\asubst_i}{r_i}}$,
  and 
  $\dolabel{\ell_i}{\interpret{\asubst_i}} \to \dolabel{r_i}{\interpret{\asubst_i}} 
  \in \cxtext{\aalg}{\semlab}{\atrs}$ by definition.
  Let $\bsubst$ and $\csubst$ be defined by 
  $\funap{\bsubst}{\cxthole} = \subst{\asubst_i}{\ell_i}$, 
  $\funap{\csubst}{\cxthole} = \subst{\asubst_i}{r_i}$,
  and $\funap{\bsubst}{x} = \funap{\csubst}{x} = x$ for $x\in\avars$.
  Then we have that 
  $\dolabel{\acxt_i}{\interpret{\bsubst}} = \dolabel{\acxt_i}{\interpret{\csubst}}$
  since $\interpret{\bsubst} = \interpret{\csubst}$.
  Let $E = \dolabel{\acxt_i}{\interpret{\bsubst}}$.
  We get
  $
  \dolabelg{\funap{\topsymb}{s}}
  = \dolabelg{\cxtap{\acxt_i}{\subst{\asubst_i}{\ell_i}}}$
  $= \dolabelg{\subst{\tau}{\acxt_i}}
  = \subst{\dolabelg{\bsubst}}{E}
  = \cxtap{E}{\dolabelg{\subst{\asubst_i}{\ell_i}}}
  = \cxtap{E}{\subst{\dolabelg{\asubst_i}}{\dolabel{\ell_i}{\interpret{\asubst_i}}}}
  $ 
  and
  $
  \dolabelg{\funap{\topsymb}{t}}
  = \ldots
  = \cxtap{E}{\subst{\dolabelg{\asubst_i}}{\dolabel{r_i}{\interpret{\asubst_i}}}}
  $,
  by Lemma~\ref{lem:zzz}.
  By Lemma~\ref{lem:yyy} 
  all symbols above position $1p$ in the term $\dolabelg{\funap{\topsymb}{s}}$ are not in $\asigred$
  and hence we have a $\samumap$-rewrite step:
  $\dolabelg{\funap{\topsymb}{s}}
   \to_{\cxtext{\aalg}{\semlab}{\atrs},\samumap} 
   \dolabelg{\funap{\topsymb}{t}}
  $.
  \end{proof}
\begin{theorem}\label{thm:cxtext:sound}
  Let $\atrs$ be a TRS over $\asig$, 
  and $\triple{\aalg}{\semlab}{\asigred}$ a sound \clabeling\ for $\atrs$.
  Then $\atrs$\ is outermost ground terminating if $\cxtext{\aalg}{\semlab}{\atrs}$ is terminating.
\end{theorem}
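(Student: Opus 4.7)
The plan is to argue by contraposition: I assume that $\atrs$ is not outermost ground terminating and construct an infinite $\samumap$-rewrite sequence in $\cxtext{\aalg}{\semlab}{\atrs}$, contradicting its termination.

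First I would fix an infinite outermost ground rewrite sequence
\[
  s_0 \outred_{\atrs} s_1 \outred_{\atrs} s_2 \outred_{\atrs} \cdots
\]
with each $s_i \in \term{\asig}{\setemp}$. Since $\atrs$ is defined over $\asig$, every $\funap{\topsymb}{s_i}$ is a ground term over $\asig_{\topsymb}$, so the labelings $\dolabelg{\funap{\topsymb}{s_i}}$ are well-defined ground terms in the signature of $\cxtext{\aalg}{\semlab}{\atrs}$.

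Next I would invoke Lemma~\ref{lem:xxx} once for each step: from $s_i \outred_{\atrs} s_{i+1}$ and the soundness of the \clabeling{} we obtain
\[
  \dolabelg{\funap{\topsymb}{s_i}} \to_{\cxtext{\aalg}{\semlab}{\atrs},\samumap} \dolabelg{\funap{\topsymb}{s_{i+1}}}\,.
\]
Concatenating these yields an infinite $\samumap$-rewrite sequence in $\cxtext{\aalg}{\semlab}{\atrs}$, directly contradicting the hypothesis that $\cxtext{\aalg}{\semlab}{\atrs}$ is terminating.

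The proof is therefore essentially a one-line application of Lemma~\ref{lem:xxx}; there is no real obstacle, since all the substantive work---showing that the labeling is preserved step by step and that the rewrite position remains $\samumap$-replacing---has already been done in the proofs of Lemmas~\ref{lem:yyy} and~\ref{lem:xxx}. The only thing worth double-checking is that $\outred_{\atrs}$ (defined on $\term{\asig}{\avars}$) is well-founded on ground terms whenever it admits no infinite sequence starting from a ground term, which is immediate since $\atrs$ is ground-closed (outermost reduction of a ground term produces a ground term).
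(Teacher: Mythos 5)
Your proposal is correct and is essentially the paper's own proof: contrapose, take an infinite outermost ground rewrite sequence, and apply Lemma~\ref{lem:xxx} to each step to obtain an infinite $\samumap$-rewrite sequence on the terms $\dolabelg{\funap{\topsymb}{s_i}}$. The closing remark about ground-closedness is harmless but unnecessary, since the statement already quantifies over ground terms and Lemma~\ref{lem:xxx} is stated exactly for ground steps.
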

%
\proof
  Assume that $\atrs$\, admits an infinite outermost rewrite sequence:
  \[
    t_1 \outred_{\atrs} t_2 \outred_{\atrs} t_3 \outred_{\atrs} \ldots
  \]
  Then from Lemma~\ref{lem:xxx} it follows that 
  $\cxtext{\aalg}{\semlab}{\atrs}$ admits an infinite rewrite sequence:
  $$
    \dolabelg{\funap{\topsymb}{t_1}} 
    \to_{\cxtext{\aalg}{\semlab}{\atrs},\samumap}
    \dolabelg{\funap{\topsymb}{t_2}} 
    \to_{\cxtext{\aalg}{\semlab}{\atrs},\samumap}
    \dolabelg{\funap{\topsymb}{t_3}} 
    \to_{\cxtext{\aalg}{\semlab}{\atrs},\samumap}
    \ldots\eqno{\qEd}
  $$
%

The following three examples illustrate why our method is sound, but not complete
when applied to non-left-linear (and non-quasi-left-linear) TRSs.
The first example can be handled by our approach employing the \clabeling{} constructed in Section~\ref{sec:minmax}.
The second example fails using the \clabeling{} from Section~\ref{sec:minmax},
but can successfully be treated using a manually constructed \clabeling{}.
For the third example, we show that there exists no \clabeling{}
that can be employed for proving outermost ground termination;
this example is out of reach for the approach proposed in this paper.
\begin{example}\label{ex:nonlin}
\newcommand{\trsnl}{\atrs_4}
\newcommand{\algnl}{\aalg_4}
  We consider the non-left-linear TRS $\trsnl$ with three rules:
  \begin{align*}
    \gb{x}{x} &\to \fb{\fb{x}{x}}{x} &
    \fb{x}{x} &\to \gb{x}{x} &
    \fb{x}{y} &\to y
    \tag{$\trsnl$}
    \label{ex:nl}
  \end{align*}
  over the signature $\asig = \{\tf,\tg,\ta\}$ where $\ta$ is a constant (necessary for the existence of ground terms).
  We choose the algebra $\algnl = \{\bot\}$ with
  $\interpret{\ta} = \bot$,
  $\bfunap{\interpret{\tf}}{\bot}{\bot} = \bot$, and
  $\bfunap{\interpret{\tg}}{\bot}{\bot} = \bot$.
  We label the symbols with the interpretations of their arguments,
  and define $\asigred = \{ \svoodoolabel{\tf}{\bot,\bot} \}$.

  Note that $\asigred$ does not contain $\svoodoolabel{\tg}{\bot,\bot}$.
  The reason is that using a finite algebra we can (in general) not recognize
  redex positions with respect to non-left-linear rules.
  By excluding $\svoodoolabel{\tg}{\bot,\bot}$ from $\asigred$ we allow rewriting
  below $\tg$ even when $\tg$ is the root of a redex. 
  This is sound for proving outermost termination
  as it does not restrict the possible rewrite steps, 
  but allows only additional steps.
  The symbol $\svoodoolabel{\tf}{\bot,\bot}$ is part of $\asigred$;
  due to the rule $\fb{x}{y} \to y$ each occurrence of $\tf$ is a redex position.

  The dynamic labeling $\cxtext{\aalg}{\semlab}{\trsnl}$ is then formed by:
  \begin{align*}
    \bfunap{\svoodoolabel{\tg}{\bot,\bot}}{x}{x} &\to \bfunap{\svoodoolabel{\tf}{\bot,\bot}}{\bfunap{\svoodoolabel{\tf}{\bot,\bot}}{x}{x}}{x} 
    \\
    \bfunap{\svoodoolabel{\tf}{\bot,\bot}}{x}{x} &\to \bfunap{\svoodoolabel{\tg}{\bot,\bot}}{x}{x} 
    \tag{$\cxtext{\aalg}{\semlab}{\trsnl}$}
    \\
    \bfunap{\svoodoolabel{\tf}{\bot,\bot}}{x}{y} &\to y
  \end{align*}
  where $\amumap{\svoodoolabel{\tf}{\bot,\bot}} = \setemp$ and $\amumap{\svoodoolabel{\tg}{\bot,\bot}} = \{1,2\}$.
  This system is terminating which can be seen as follows.
  After an application of the first rule:
  \begin{align*}
    \contextfill{\acontext}{\bfunap{\svoodoolabel{\tg}{\bot,\bot}}{t}{t}} 
    &\to_{\cxtext{\aalg}{\semlab}{\trsnl},\,\samumap}
    \contextfill{\acontext}{\bfunap{\svoodoolabel{\tf}{\bot,\bot}}{\bfunap{\svoodoolabel{\tf}{\bot,\bot}}{t}{t}}{t}}
  \end{align*}
  the replacement map $\samumap$ prevents us from reducing the inner $\svoodoolabel{\tf}{\bot,\bot}$.
  Moreover, the second rule cannot be applied to the outer $\svoodoolabel{\tf}{\bot,\bot}$
  since the left and the right subterm are not equal.
  Thus the only rule applicable to the displayed subterm is $\bfunap{\svoodoolabel{\tf}{\bot,\bot}}{x}{y} \to y$
  which reduces the size of the term, and we can conclude termination by induction.

  Hence we conclude outermost ground termination of $\trsnl$ by Theorem~\ref{thm:cxtext:sound}.
  Actually the same \clabeling{} allows also to infer outermost termination, see Lemma~\ref{lem:ground}
  (we simply add a fresh constant $\zer$ and a unary symbol $\ssuc$ with interpretations
  $\interpret{\zer} = \bot$ and $\funap{\interpret{\ssuc}}{\bot} = \bot$).
\end{example}

\begin{example}\label{ex:nonlinb}
  \newcommand{\trsnlb}{\atrs_5}
  \newcommand{\algnlb}{\aalg_5}
  \newcommand{\signlb}{\iasig{5}}
  We consider the non-left-linear TRS $\trsnlb$ over the signature $\signlb = \{\tg,\ta,\tb\}$:
  \begin{align*}
    \ta &\to \gb{\ta}{\ta} &
    \gb{x}{x} &\to \tb
    \tag{$\trsnlb$}
    \label{ex:nlb}
  \end{align*}
  This TRS is outermost terminating.
  However, there exists no \clabeling{} that 
  recognizes redex positions with respect to the non-left-linear rule $\gb{x}{x} \to \tb$.
  A finite algebra cannot be used to check whether two arbitrary subterms $t_1$ and $t_2$ 
  of $\gb{t_1}{t_2}$ 
  are equal.
  Thus it appears that, in order to have a sound transformation, 
  we cannot include any symbol $\svoodoolabel{\tg}{\alab}$ in the set $\asigred$ of redex symbols.
  But then rewriting below $\tg$ is allowed, and the rule $\ta \to \gb{\ta}{\ta}$ 
  would lead to non-termination of the dynamic labeling $\cxtext{\aalg}{\semlab}{\trsnlb}$.
  
  Nonetheless, in this particular example, the problem can be solved.
  If some element $e$ of the algebra 
  is the interpretation of precisely one ground term $t$,
  then, of course, $\interpret{t_1} = \interpret{t_2} = e$ implies that $t_1 = t_2$. 
  Let us take the algebra $\algnlb = \{\bot,a\}$ with  
  $\interpret{\ta} = a$, $\interpret{\tb} = \bot$, and
  $\bfunap{\interpret{\tg}}{x}{y} = \bot$ for all $x,y \in \algnlb$.
  We use maximal labeling 
  and define $\asigred = \{ \svoodoolabel{\tg}{a,a} \}$.
  That is, we mark redex positions $\gb{t}{t}$ only for the special case $t = \ta$.
  This \clabeling{} is sound since only redex positions are marked, 
  but it is not complete; not all redex positions are marked.
  Nevertheless, this labeling can be used to prove outermost ground termination of $\trsnlb$.
  The dynamic labeling $\cxtext{\aalg}{\semlab}{\trsnlb}$ of $\trsnlb$ consists of:
  \begin{align}
    \bfunap{\svoodoolabel{\tg}{a,\bot}}{\ta}{x} &\to \bfunap{\svoodoolabel{\tg}{\bot,\bot}}{\bfunap{\svoodoolabel{\tg}{a,a}}{\ta}{\ta}}{x} 
    & 
    \bfunap{\svoodoolabel{\tg}{a,a}}{x}{x} &\to \tb 
    \notag
    \\
    \bfunap{\svoodoolabel{\tg}{\bot,a}}{x}{\ta} &\to \bfunap{\svoodoolabel{\tg}{\bot,\bot}}{x}{\bfunap{\svoodoolabel{\tg}{a,a}}{\ta}{\ta}} 
    & \bfunap{\svoodoolabel{\tg}{\bot,\bot}}{x}{x} &\to \tb 
    \tag{$\cxtext{\aalg}{\semlab}{\trsnlb}$}
    \\
    \funap{\svoodoolabel{\topsymb}{\ta}}{\ta} &\to \funap{\svoodoolabel{\topsymb}{\bot}}{\bfunap{\svoodoolabel{\tg}{a,a}}{\ta}{\ta}}
    \notag
  \end{align}
  The employed \clabeling{} is not complete, and so the \csTRS{}
  $\cxtext{\aalg}{\semlab}{\trsnlb}$
  admits rewrite sequences (starting from correctly labeled terms)
  that do not correspond to outermost rewriting, e.g:
  \begin{align*}
    \funap{\svoodoolabel{\topsymb}{\bot}}{\bfunap{\svoodoolabel{\tg}{\bot,\bot}}{\bfunap{\svoodoolabel{\tg}{a,a}}{\ta}{\ta}}{\bfunap{\svoodoolabel{\tg}{a,a}}{\ta}{\ta}}}
    &\to_{\cxtext{\aalg}{\semlab}{\trsnlb},\samumap}
    \funap{\svoodoolabel{\topsymb}{\bot}}{\bfunap{\svoodoolabel{\tg}{\bot,\bot}}{\tb}{\bfunap{\svoodoolabel{\tg}{a,a}}{\ta}{\ta}}}
    \\
    &\to_{\cxtext{\aalg}{\semlab}{\trsnlb},\samumap}
    \funap{\svoodoolabel{\topsymb}{\bot}}{\bfunap{\svoodoolabel{\tg}{\bot,\bot}}{\tb}{\tb}}
    \\
    &\to_{\cxtext{\aalg}{\semlab}{\trsnlb},\samumap}
    \funap{\svoodoolabel{\topsymb}{\bot}}{\tb}
  \end{align*}
  Despite of this, the \csTRS{} can be shown to be terminating,
  and since the \clabeling{} was sound, we conclude outermost ground termination of $\trsnlb$
  by Theorem~\ref{thm:cxtext:sound}.
\end{example}

\begin{example}\label{ex:nonlinc}
  \newcommand{\trsnlc}{\atrs_6}
  \newcommand{\algnlc}{\aalg_6}
  In Examples~\ref{ex:nonlin} and~\ref{ex:nonlinb} we have seen 
  how our method can be applied
  to prove outermost termination of non-quasi-left-linear TRSs.
  We now consider an example which shows that not every
  non-left-linear TRS can be handled by our method:
  \begin{align*}
    \f{x} &\to \gb{\f{x}}{\f{x}} &
    \gb{x}{x} &\to \tb
    \tag{$\trsnlc$}
    \label{ex:nlc}
  \end{align*}
  This TRS is outermost terminating.
  Now the trick used in Example~\ref{ex:nonlinb} does not work. 
  In order to construct a terminating \csTRS{} 
  $\cxtext{\aalg}{\semlab}{\trsnlc}$
  we need to forbid rewriting in all terms of the form
  $\gb{\f{t}}{\f{t}}$.
  This is impossible using a finite algebra.
\end{example}

We need the following adaptation of~\cite[Proposition~5.5.24]{ohle:02} for \csTRS{s};
the proof proceeds along the same lines.
\begin{theorem}\label{thm:adapt:ohlebusch}
  Let $\pair{\atrs}{\samumap}$ be a terminating many-sorted \csTRS.
  If the \csTRS{} obtained from $\pair{\atrs}{\samumap}$ by dropping sorts admits an infinite rewrite sequence,
  then $\pair{\atrs}{\samumap}$ is collapsing and duplicating.
  \qed
\end{theorem}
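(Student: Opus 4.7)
\medskip\noindent\textbf{Proof plan.}
The plan is to follow the proof of \cite[Proposition~5.5.24]{ohle:02} and adapt it to the context-sensitive setting. I would proceed by contraposition: assume that $\atrs$ is either non-collapsing or non-duplicating, and show that termination of the many-sorted system $\pair{\atrs}{\samumap}$ implies termination of its unsorted projection. So suppose, toward a contradiction, that the unsorted projection admits an infinite rewrite sequence $t_1 \muredr{\atrs} t_2 \muredr{\atrs} \cdots$.

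The decisive observation that makes the context-sensitive adaptation seamless is that the replacement map $\samumap$ is defined purely on function symbols. Sort-erasure and (well-sorted) sort-introduction touch neither symbol identities nor positions, so $p \in \mupos{s}$ if and only if $p \in \mupos{\hat{s}}$, where $\hat{s}$ denotes any well-sorted lift of an unsorted term $s$. Consequently, an unsorted step $s \muredr{\atrs} t$ at a $\samumap$-replacing position $p$ lifts to a many-sorted step $\hat{s} \muredr{\atrs} \hat{t}$ as soon as one produces a compatible well-sorted lift $\hat{t}$ of $t$, and no extra work is needed for the replacement constraint.

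The core sort-introduction argument is then essentially the unrestricted one. Given a well-sorted $\hat{s}$ and a step $s \redat{\atrs,p} t$ using rule $\ell \to r$ with substitution $\asubst$, the match $\subtrmat{\hat{s}}{p} = \subst{\hat{\asubst}}{\hat{\ell}}$ against the fixed typing $\hat{\ell}\to\hat{r}$ determines sorts on the variables of the rule. If $\atrs$ is non-collapsing, then $\hat{r}$ has a definite root sort, which must coincide with the sort demanded by the surrounding context at position $p$, and one simply plugs in $\subst{\hat{\asubst}}{\hat{r}}$ unchanged. If $\atrs$ is non-duplicating, then every variable of $\ell$ occurs at most once in $r$, so the sort at position $p$ in the lifted result can be retro-fitted to match the context without over-constraining the variable sort assignment.

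Iterating this lifting along the infinite unsorted sequence $t_1 \muredr{\atrs} t_2 \muredr{\atrs} \cdots$ produces an infinite many-sorted rewrite sequence starting from a well-sorted lift of $t_1$, contradicting termination of $\pair{\atrs}{\samumap}$. The main obstacle lies in the non-duplicating case: single-step lifts may force variables of the rule instance to be re-typed, so a na\"{\i}ve forward construction need not cohere into a single infinite many-sorted sequence. The standard remedy is a K\"{o}nig's lemma argument on the finitely branching tree of partial sort-consistent lifts (finitely branching because the sort set is finite), which is precisely where Ohlebusch's original proof invests its effort; thanks to the replacement-map-insensitivity noted above, that argument transfers verbatim to the $\samumap$-setting.
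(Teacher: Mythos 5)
There is a genuine gap, and it sits at the heart of your argument. The paper proves this theorem only by reference --- ``the proof proceeds along the same lines as \cite[Proposition~5.5.24]{ohle:02}'' --- and that proof is a \emph{layer decomposition} argument, not a lifting argument. Your plan is to lift the infinite unsorted sequence to a well-sorted one, but this cannot work: an ill-sorted term has no well-sorted ``lift'' at all (the sort declarations of the symbols are fixed, so a term either respects them or it does not), and, conversely, well-sorted terms are closed under rewriting --- if $s$ is well-sorted and $s \to_{\atrs} t$, then matching $\ell$ against the well-sorted subterm at the redex position forces the substitution to be sort-respecting, so $t$ is well-sorted. Hence any infinite sequence that could be lifted step by step would already live entirely inside the many-sorted system, contradicting its assumed termination. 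The theorem is precisely about infinite reductions that must pass through ill-sorted terms, and your key claim for the non-collapsing case (``$\hat{r}$ has a definite root sort, which must coincide with the sort demanded by the surrounding context'') is exactly what fails there. You also misstate non-duplication (it means no variable occurs \emph{more often} in $r$ than in $\ell$, not ``at most once in $r$''), and the K\"onig's-lemma step you attribute to Ohlebusch is not where his proof invests its effort.

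The actual argument decomposes an ill-sorted term into a maximal well-sorted \emph{cap} and the \emph{aliens} hanging below sort-mismatched positions (recursively, a tree of well-sorted layers). A rewrite step either happens inside one layer --- and then, after abstracting that layer's aliens by terms of the expected sort, it is a step of the many-sorted system --- or it interacts with the layer structure, which only collapsing steps (merging layers) and duplicating steps (multiplying aliens) can do in a harmful way; a projection/multiset argument then extracts an infinite well-sorted reduction when $\atrs$ is non-collapsing or non-duplicating. Your one genuinely useful observation --- that $\samumap$ depends only on function symbols and not on sorts --- is indeed the reason the adaptation to \csTRS{s} is routine, but it has to be deployed in this setting: taking the cap of a term and projecting into an alien both preserve the symbols on the path from the (new) root to any surviving position, so $\samumap$-replacing positions remain $\samumap$-replacing under the decomposition. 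That is the check the paper's appeal to Ohlebusch implicitly relies on, and it is absent from your proposal.
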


While for soundness of the transformation (Theorem~\ref{thm:cxtext:sound}) a sound labeling suffices,
for a complete transformation we need the \clabeling{} to be complete, maximal and core:

\begin{theorem}\label{thm:abstract:maxcomplete}
  Let $\atrs$ be a TRS over $\asig$, 
  and $\triple{\aalg}{\semlab}{\asigred}$ a complete, maximal, and core \clabeling\ for $\atrs$.
  Then $\cxtext{\aalg}{\semlab}{\atrs}$ is terminating if $\atrs$\ is outermost ground terminating.
\end{theorem}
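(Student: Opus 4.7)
I would prove the contrapositive, via a many-sorted refinement of $\cxtext{\aalg}{\semlab}{\atrs}$ together with Theorem~\ref{thm:adapt:ohlebusch}. The plan is: (i) define a many-sorted version of the transformed system whose termination follows from outermost ground termination of $\atrs$; (ii) use Remark~\ref{rem:cxtext:non-collapsing} (the transformation produces no collapsing rules) and Theorem~\ref{thm:adapt:ohlebusch} to lift termination to the unsorted system. The role of coreness and maximality is to make the many-sorted system align perfectly with correct labelings, and the role of completeness is to turn "no redex symbol above" into "no redex above" when unlabeling.

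\textbf{Step 1: many-sorted refinement.} Take $\aalg$ as the sort set. By maximality, each labeled symbol $\svoodoolabel{\tf}{\alab}$ of $\labelsig{\asig_\topsymb}$ is identified with a unique tuple $\pair{a_1}{\ldots,a_n}\in\aalg^n$ through $\alab=\labelf{\tf}{a_1,\ldots,a_n}$; assign it the profile $a_1\times\cdots\times a_n\to\funap{\interpret{\tf}}{a_1,\ldots,a_n}$. For a rule $\dolabel{\contextfill{\acontext}{\ell}}{\alpha}\to\dolabel{\contextfill{\acontext}{r}}{\alpha}$ of $\cxtext{\aalg}{\semlab}{\atrs}$, give each variable $x$ the sort $\funap{\alpha}{x}$. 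By the fixed-point condition in Definition~\ref{def:cxtextpairs}, $\interpreta{\contextfill{\acontext}{\ell}}{\alpha}=\interpreta{\contextfill{\acontext}{r}}{\alpha}$, so the rule is well-sorted. Because $\aalg$ is core (Lemma~\ref{lem:core}), every sort is inhabited by a ground term.

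\textbf{Step 2: well-sorted ground terms are canonical labelings.} By maximality, a well-sorted ground term over the labeled signature is precisely $\dolabelg{t}$ for a unique $t\in\term{\asig_\topsymb}{\setemp}$; the sort of each subterm equals the interpretation of its unlabeling.

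\textbf{Step 3: termination of the many-sorted system.} Assume, towards a contradiction, that the many-sorted \csTRS\ admits an infinite reduction; by instantiating variables with well-sorted ground terms (using coreness) this yields an infinite ground $\samumap$-reduction
\[
  \dolabelg{s_1}\mured \dolabelg{s_2}\mured \dolabelg{s_3}\mured\cdots
\]
I claim each step unlabels to an \emph{outermost} step $s_i\outred_\atrs s_{i+1}$, contradicting the hypothesis. A step at position $q$ uses some rule $\dolabel{\contextfill{\acontext}{\ell}}{\alpha}\to\dolabel{\contextfill{\acontext}{r}}{\alpha}$; its unlabeling contracts an instance of $\ell\to r$ at position $q\cdot p_\acontext$ in $s_i$, where $p_\acontext$ is the hole-position of the prepended context $\acontext$. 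For any proper prefix $q'$ of $q\cdot p_\acontext$ two cases arise. If $q'$ is a strict prefix of $q$, then $\symbat{\dolabelg{s_i}}{q'}\not\in\asigred$ because the $\samumap$-constraint forbids reducing strictly below a redex symbol. If $q'=q\cdot q''$ with $q''$ a strict prefix of $p_\acontext$, then $\symbat{\dolabelg{s_i}}{q'}$ is a symbol of $\dolabel{\acontext}{\alpha}$ strictly above its hole; by the iterative construction of $\cxtextpairs{\semlab}{\atrs}$, every flat-context extension step admits only extensions whose new root is not in $\asigred$, hence inductively every symbol of $\dolabel{\acontext}{\alpha}$ strictly above the hole lies outside $\asigred$. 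In both cases, by completeness of the \clabeling, the subterm $\subtrmat{s_i}{q'}$ is not a redex of $\atrs$, so the step is outermost.

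\textbf{Step 4: lifting to the unsorted system.} The many-sorted $\pair{\cxtext{\aalg}{\semlab}{\atrs}}{\samumap}$ is terminating. By Remark~\ref{rem:cxtext:non-collapsing}, $\cxtext{\aalg}{\semlab}{\atrs}$ contains no collapsing rules. Theorem~\ref{thm:adapt:ohlebusch} (contrapositive) therefore guarantees termination of the unsorted \csTRS, completing the proof.

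\textbf{Main obstacle.} The crux is Step~3(b): tracking through the inductive construction of $\cxtextpairs{\semlab}{\atrs}$ that every symbol of the prepended context strictly above the hole escapes $\asigred$. Once this invariant is secured, completeness converts the absence of redex symbols on the path from the root to the contracted position into the outermost property. The supporting bookkeeping---checking that the many-sorted rules are well-sorted, that sorts are populated (coreness), and that the $\topsymb$-prefix behaves correctly---is routine given the setup in Definitions~\ref{def:cxtextpairs} and~\ref{def:cxtext}.
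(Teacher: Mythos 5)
Your proposal is correct and follows essentially the same route as the paper's proof: the same many-sorted refinement with sorts drawn from $\aalg$ (exploiting maximality), coreness to instantiate sorts with ground terms, the identification of well-sorted ground terms with correct labelings, the two-case argument (replacement map above the rule pattern, no $\asigred$ symbols in the prepended context) combined with completeness to recover outermostness, and Theorem~\ref{thm:adapt:ohlebusch} plus non-collapsingness to pass between the sorted and unsorted systems. The only detail you defer to ``routine bookkeeping'' that the paper makes explicit is fixing the output sort of the $\svoodoolabel{\topsymb}{\alab}$ symbols to a dedicated sort $\topsymb$ so that these symbols can only occur at the root.
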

\begin{proof}
  Assume that $\cxtext{\aalg}{\semlab}{\atrs}$ is not terminating.
  We turn $\cxtext{\aalg}{\semlab}{\atrs}$ into a sorted TRS.
  The sorts are chosen from the set $\aalg \join \{\topsymb\}$.
  Since the \clabeling{} is maximal, for each $n$-ary symbol
  $\tf^\lambda \in \labelsig{\asig_{\topsymb}}$ 
  we have $\lambda = {\tuple{a_1,\ldots,a_n}}$.
  We let $\svoodoolabel{\tf}{\alab}$ have input sort $\alab$ 
  and output sort $\funap{\interpret{\tf}}{a_1,\ldots,a_n}$.
  The only exception is the output sort of the symbols $\svoodoolabel{\topsymb}{\alab}$ 
  which we fix to be the sort $\topsymb$.
  Then by Theorem~\ref{thm:adapt:ohlebusch} 
  together with non-collapsingness of $\cxtext{\aalg}{\semlab}{\atrs}$ 
  yields the existence of a well-sorted infinite 
  rewrite sequence $\tau$ in $\cxtext{\atrs}{\semlab}{\atrs}$.
  Since the \clabeling{} is core, by Lemma~\ref{lem:core} there exists a ground term for every sort in $\aalg$.
  Thus by applying a ground substitution to $\tau$ 
  we obtain a well-sorted infinite ground term rewrite sequence $\tau'$.

  Well-sortedness implies correct labeling:
  for each well-sorted term $t \in \term{\labelsig{\asig_{\topsymb}}}{\setemp}$
  there exists a term $t' \in \term{\asig_{\topsymb}}{\setemp}$ such that $t = \dolabelg{t'}$.
  Moreover, 
  a symbol $\svoodoolabel{\topsymb}{\alab}$ can only occur at the top of a term.
  Without loss of generality we assume that every term in $\tau'$ 
  has $\svoodoolabel{\topsymb}{\alab}$ (for some $\alab \in \aalg$) as root
  (as rewriting below $\svoodoolabel{\topsymb}{\alab}$ is allowed 
  and context-sensitive rewriting is closed under $\samumap$-replacing contexts).
  Hence it suffices to show that
  for all terms $s,t \in \term{\asig}{\setemp}$ with 
  $\dolabelg{\funap{\topsymb}{s}} \red_{\cxtext{\atrs}{\semlab}{\atrs},\samumap} \dolabelg{\funap{\topsymb}{t}}$
  we have $s \outred_{\atrs} t$.
  By construction, each rule in $\cxtext{\atrs}{\semlab}{\atrs}$ 
  is the result of prepending contexts to, and labeling of, a rule in $\atrs$. 
  Let $\rho \funin s \red_{\atrs} t$ be the step corresponding to 
  $\dolabelg{\funap{\topsymb}{s}} \red_{\cxtext{\atrs}{\semlab}{\atrs},\samumap} \dolabelg{\funap{\topsymb}{t}}$.
  We show that $\rho$ is an outermost step.
  Assume there would be a redex $u$ above the rewrite position.
  Then by completeness of the \clabeling{} we get $\rootsymb{\dolabelg{u}} \in \asigred$.
  But then this symbol must be in $\dolabelg{\funap{\topsymb}{s}}$,
  either above the applied rule from $\cxtext{\atrs}{\semlab}{\atrs}$ 
  or within the prepended context.
  Both cases yield a contradiction:
  the former since $\amumap{\rootsymb{\dolabelg{u}}} = \setemp$
  would prohibit the $\samumap$-step, and
  the latter because we do not prepend symbols from $\asigred$.
\end{proof}

Let us consider the three conditions of Theorem~\ref{thm:abstract:maxcomplete} on \clabeling{s}: complete, maximal and core.
To see that completeness and maximality are necessary, 
we refer to Examples~\ref{ex:nonlinc}, 
and~\ref{ex:noncomplete}, respectively.
The following example shows the need to restrict to core algebras:
\begin{example}\label{ex:whycore}
  \newcommand{\trscore}{\iatrs{7}}
  \newcommand{\algcore}{\iaalg{7}}
  Let $\trscore$ be the following term rewriting system:
  \begin{align}
    \f{x} &\to \gb{x}{\f{x}} &
    \gb{\ta}{x} &\to \ta &
    \gb{\f{x}}{y} &\to \ta &
    \gb{\gb{x}{y}}{z} &\to \ta
    \tag{$\trscore$}
  \end{align}
  This TRS is outermost ground terminating:
  First note that without the first rule $\trscore$ is terminating.
  So consider a rewrite step $\f{t} \to \gb{t}{\f{t}}$ for $t\in\ter{\{\tf,\tg,\ta\}}{\setemp}$.
  Then one of the three $\tg$-rules matches $\gb{t}{\f{t}}$ and blocks all inner rules 
  by the outermost strategy.

  We take the \cmodel{} $\algcore = \{0,1\}$ with 
  $\interpret{\ta} = \funap{\interpret{\tf}}{x} = \bfunap{\interpret{\tg}}{x}{y} = 0$,
  for all $x,y\in\algcore$.
  We let $\semlab$ be the maximal labeling 
  and define $\asigred = \{\svoodoolabel{\tf}{0}, \svoodoolabel{\tg}{0,0}\}$.
  Then the dynamic context extension $\cxtext{\aalg}{\semlab}{\trscore}$ contains, amongst others, 
  the following two rules:
  \begin{align*}
  \funap{\svoodoolabel{\tf}{0}}{x} &\to \bfunap{\svoodoolabel{\tg}{0,0}}{x}{\funap{\svoodoolabel{\tf}{0}}{x}} &
  \funap{\svoodoolabel{\tf}{1}}{x} &\to \bfunap{\svoodoolabel{\tg}{1,0}}{x}{\funap{\svoodoolabel{\tf}{1}}{x}}
  \end{align*}
  where $\amumap{\svoodoolabel{\tg}{0,0}} = \setemp$ and $\amumap{\svoodoolabel{\tg}{1,0}} = \{1,2\}$.
  Consequently, the second rule is not terminating, although the original TRS is outermost ground terminating.
  The \clabeling\ $\triple{\algcore}{\semlab}{\asigred}$ is complete for $\atrs$ and maximal,
  but not core.
  Note that there exists no ground term which has the interpretation $1$, 
  and hence the label $1$ should never occur.
\end{example}

Theorems~\ref{thm:cxtext:sound} and~\ref{thm:abstract:maxcomplete} are about outermost ground termination.
This is not a severe restriction, as by adding a fresh constant $\zer$ and a fresh unary symbol $\ssuc$
outermost ground termination implies (and for quasi-left-linear TRSs coincides with) outermost termination:
\begin{lemma}\label{lem:ground}
  A TRS $\atrs$ over $\asig$ is outermost terminating
  if $\atrs$ over $\asig \cup \{\ssuc,\zer\}$ is outermost ground terminating.
  If $\atrs$ is also quasi-left-linear, the converse direction holds as well.
\end{lemma}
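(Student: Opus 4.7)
Both directions are handled by contraposition, transferring an infinite outermost sequence from one setting to the other.

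For the forward direction I would start from an infinite outermost sequence $t_1 \outred t_2 \outred \ldots$ in $\atrs$ over $\asig$ and instantiate via a ground substitution $\bsubst$ sending each variable $x_i \in \vars{t_1}$ (a finite set, since $\vars{t_{j+1}} \subseteq \vars{t_j}$) to an $\ssuc$-$\zer$-term of $\ssuc$-depth $i$. The crucial feature of $\bsubst$ is that it is injective on $\ter{\asig}{\vars{t_1}}$: its images have root in $\{\ssuc,\zer\}$, which is disjoint from $\asig$, and distinct variables yield distinct $\ssuc$-depths, so a routine induction on term structure gives injectivity. Each step then lifts to $\subst{\bsubst}{t_i} \red \subst{\bsubst}{t_{i+1}}$ at the same position, and outermostness lifts too: a would-be proper ancestor redex in $\subst{\bsubst}{t_i}$ matches a left-hand side whose non-variable positions lie in $\asig$, so agreement with $\subst{\bsubst}{t_i}$ at those positions coincides with agreement with $t_i$; injectivity of $\bsubst$ handles the non-linear left-hand sides, yielding a proper ancestor redex in $t_i$ and contradicting outermostness of the original step.

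For the converse direction, assuming $\atrs$ is quasi-left-linear, I would fix an infinite outermost ground sequence $s_1 \outred s_2 \outred \ldots$ in $\atrs$ over $\asig \cup \{\ssuc,\zer\}$ and apply a canonical abstraction map $\phi$ that replaces every maximal subterm rooted in $\{\ssuc,\zer\}$ by a variable, using the same variable for equal subterms and distinct variables for distinct subterms. Since neither $\ssuc$ nor $\zer$ occurs in any rule, a step at position $p$ using $\ell \to r$ and substitution $\asubst$ transfers to $\phi(s_i) \red \phi(s_{i+1})$ at the same position with substitution $\phi \circ \asubst$; canonicity of $\phi$ is exactly what makes this match well-defined in the presence of non-linear $\ell$.

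The main obstacle, and the unique use of quasi-left-linearity, is preservation of the outermost property under $\phi$. Suppose some position $q$ strictly above the rewrite position turns into a redex in $\phi(s_i)$ via a rule $\ell' \to r' \in \atrs$ and substitution $\csubst$. When $\ell'$ is linear, the non-variable positions of $\ell'$ carry symbols from $\asig$ which $\phi$ leaves untouched, so a matching substitution in $s_i$ can be read off directly, making $\subtrmat{s_i}{q}$ a redex and contradicting outermostness of $s_i \outred s_{i+1}$. When $\ell'$ is non-linear, quasi-left-linearity supplies a linear left-hand side $\ell''$ of some rule of $\atrs$ together with a substitution $\rho$ such that $\ell' = \subst{\rho}{\ell''}$, and then $\phi(\subtrmat{s_i}{q}) = \subst{\csubst}{\ell'} = \subst{(\csubst \circ \rho)}{\ell''}$ exhibits $\phi(\subtrmat{s_i}{q})$ as a redex for the linear left-hand side $\ell''$, reducing to the previous case. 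Hence $\phi(s_1) \outred \phi(s_2) \outred \ldots$ is an infinite outermost sequence in $\atrs$ over $\asig$, contradicting outermost termination of $\atrs$.
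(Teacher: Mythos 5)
Your forward direction is essentially the paper's own proof: the same injective ground substitution $x \mapsto \funap{\ssuc^{i}}{\zer}$, the same observation that it preserves and reflects equality of subterms (which is what handles non-linear left-hand sides), and hence lifts both the steps and their outermostness. The converse direction, however, has a genuine gap, located in the sentence ``a step at position $p$ \ldots{} transfers to $\phi(s_i) \red \phi(s_{i+1})$ at the same position.'' This is only true when $p$ lies outside every maximal $\{\ssuc,\zer\}$-rooted subterm, and an infinite outermost ground sequence may perform infinitely many steps \emph{inside} such subterms; the paper's counterexample for the non-quasi-left-linear case, built around $\tfunap{\tf}{x}{y}{y} \to \funap{\ta}{\tfunap{\tf}{x}{x}{y}}$, does exactly this, rewriting $\tb$ to $\funap{\ta}{\tb}$ underneath an $\ssuc$ at every other step. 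Your canonical \emph{injective} abstraction makes such steps unrecoverable: rewriting inside an abstracted subterm changes that subterm, hence changes the variable it is canonically mapped to, so $\phi(s_i)$ and $\phi(s_{i+1})$ differ by a variable replacement that is neither a rewrite step nor an identity. The paper avoids this with two ingredients you are missing: it takes a counterexample term of \emph{minimal size}, which forces infinitely many steps to lie in the $\{\ssuc,\zer\}$-free prefix, and it collapses all $\{\ssuc,\zer\}$-rooted subterms to a \emph{single} fixed variable, so that the remaining inside-steps become stuttering ($\phi(s_i)=\phi(s_{i+1})$) and can simply be dropped.

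A symptom that something is off is that your appeal to quasi-left-linearity is vacuous: since your $\phi$ reflects equality of subterms, a redex of $\phi(s_i)$ at a position $q$ with respect to a non-linear left-hand side already yields a redex of $s_i$ at $q$ directly, with no detour through a linear instance. So your argument, were the transfer claim correct, would establish the converse for \emph{arbitrary} TRSs, which the paper's counterexample refutes; the error must therefore sit in the transfer of steps, as described above. Quasi-left-linearity becomes genuinely necessary only once one collapses to a single variable: that identification can \emph{create} redexes for non-linear left-hand sides at positions where $s_i$ had none, and it is precisely there that one needs every such created redex to be an instance of a linear left-hand side, whose matching (being linear and $\{\ssuc,\zer\}$-free) is reflected back to $s_i$ and thus was already accounted for by the outermost strategy.
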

\begin{proof}
  Let $\avars$ be countably infinite, and $\phi \funin \avars \to \nat$ a bijection.
  We define a substitution $\asubst$ by $\funap{\asubst}{x} = \funap{\ssuc^{\funap{\phi}{x}}}{\zer}$.
  Then, we have $\subst{\asubst}{s} \outred \subst{\asubst}{t}$
  whenever $s \outred t$ with $s,t \in \ter{\asig}{\avars}$,
  since the symbols $\ssuc$ and $\zer$ do not occur in any pattern of a rule,
  and
  for all $\apos, \bpos \in \pos{s}$ we have 
  $\subtrmat{s}{\apos} = \subtrmat{s}{\bpos} 
  \Leftrightarrow 
  \subtrmat{\subst{\asubst}{s}}{\apos} = \subtrmat{\subst{\asubst}{s}}{\bpos}$.
  This concludes the proof of the first part of the theorem.

  For the converse direction, let $\atrs$ be a quasi-left-linear TRS
  such that $\atrs$ over $\asig \cup \{\ssuc,\zer\}$ is not outermost ground terminating.
  Let $t \in \ter{\asig \cup \{\ssuc,\zer\}}{\setemp}$ be a ground term of minimal size
  admitting an infinite rewrite sequence $t = t_1 \outred t_2 \outred t_3 \outred \ldots$.
  By minimality, infinitely many of these steps must be in the prefix of $t$ not containing 
  $\ssuc$ and $\zer$.
  Let $t' \in \ter{\asig}{\avars}$ be obtained from $t$ by replacing
  all subterms with root symbol $\ssuc$ or $\zer$ in $t$ by a (arbitrary, but fixed) variable $x$.
  Then $t'$  admits an infinite outermost rewrite sequence as well.
  Note that by replacing the subterms no redex in the $\{\ssuc,\zer\}$-free prefix of $t$ is destroyed
  since the symbols $\ssuc$ and $\zer$ do not occur any rule pattern.
  Fresh redexes with respect to non-left-linear rules may be created 
  (but not with respect to left-linear rules).
  By quasi-left-linearity, at each position where a redex is created,
  there is also redex with respect to a left-linear rule.
  Hence, no additional redexes get blocked by outermost strategy.
\end{proof}

The following example shows that extending the signature with a single fresh constant~$\fun{0}$ only
is not enough for the implication:
\textit{$\atrs$ over the extended signature is outermost ground terminating}
$\mbox{}\implies\mbox{}$ \textit{$\atrs$ is outermost terminating}.

\begin{example}
  Consider the following term rewriting system $\atrs$:
  \begin{align*}
    \fb{x}{y} &\to \funap{\ta}{\fb{x}{y}} &
    \funap{\ta}{\fb{\tb}{x}} &\to \tb &
    \funap{\ta}{\fb{x}{\tb}} &\to \tb \\
    \funap{\ta}{\fb{x}{x}} &\to \tb &
    \funap{\ta}{\fb{\funap{\ta}{x}}{y}} &\to \tb &
    \funap{\ta}{\fb{x}{\funap{\ta}{y}}} &\to \tb \\
    &&
    \funap{\ta}{\fb{\fb{x}{y}}{z}} &\to \tb &
    \funap{\ta}{\fb{x}{\fb{y}{z}}} &\to \tb
  \end{align*}
  Because of the first rule $\atrs$ is not outermost terminating:
  \[
    \fb{x}{y} \outred \funap{\ta}{\fb{x}{y}} \outred \funap{\ta}{\funap{\ta}{\fb{x}{y}}} \outred \ldots
  \]
  but the TRS over the extended signature $\asig' = \{\ta,\tf,\tb,\fun{0}\}$ \emph{is} outermost ground terminating:
  Consider a step $\fb{s}{t} \to \funap{\ta}{\fb{s}{t}}$ with $s,t\in\ter{\asig'}{\setemp}$.
  If $s\neq\fun{0}$ one of the rules in the second column applies, 
  and if $t\neq\fun{0}$ then one of the rules in the third column is applicable.
  However if $s = t = \fun{0}$ then the rule $\funap{\ta}{\fb{x}{x}} \to \tb$ matches.
\end{example}

Note that for the second part of Lemma~\ref{lem:ground} we require quasi-left-linearity.
This requirement was erroneously missing from~\cite{endr:hend:09}, 
but is necessary as the following example illustrates.

\begin{example}
  We consider the following term rewriting system $\atrs$: 
  \begin{align*}
    \tfunap{\tf}{x}{y}{y} &\to \funap{\ta}{\tfunap{\tf}{x}{x}{y}}
    &
    \funap{\ta}{ \tfunap{\tf}{x}{\funap{\ta}{y}}{\funap{\ta}{z}} } &\to \bot
    \\    
    \tb &\to \funap{\ta}{\tb}
    &
    \funap{\ta}{ \tfunap{\tf}{x}{\tb}{\tb} } &\to \bot
    \\
    \funap{\ta}{ \tb } &\to \bot
    &
    \funap{\ta}{ \tfunap{\tf}{x}{\tfunap{\tf}{y_1}{y_2}{y_3}}{\tfunap{\tf}{z_1}{z_2}{z_3}} } &\to \bot
    \\
    \funap{\ta}{ \tfunap{\tf}{x}{x}{x} } &\to \bot
    &
    \funap{\ta}{ \tfunap{\tf}{x}{\bot}{\bot} } &\to \bot
  \end{align*}
  and explain why this system is outermost terminating.
  Without the rule $\tfunap{\tf}{x}{y}{y} \to \funap{\ta}{\tfunap{\tf}{x}{x}{y}}$ 
  outermost termination of $\atrs$ is obvious;
  Hence, in an infinite rewrite sequence this rule must be applied infinitely often.
  Let us consider a rewrite step
  $\cxtap{\acxt}{\tfunap{\tf}{t}{u}{u}} \outred \cxtap{\acxt}{\funap{\ta}{\tfunap{\tf}{t}{t}{u}}}$.
  If $t \in \avars$, then $u = t$ since no non-variable term rewrites to a variable;
  then $\funap{\ta}{ \tfunap{\tf}{x}{x}{x} } \to \bot$ is applicable
  and has priority (by outermost strategy) over all inner rewrite steps (and we terminate).
  If $t \not\in \avars$, then the second argument $u$ and third argument $t$ in $\funap{\ta}{\tfunap{\tf}{t}{t}{u}}$
  have to rewrite to a common non-variable reduct (in order to make the first rule applicable again).
  However, as soon as the common reduct is reached, one of the rules displayed on the right 
  would be applicable and have priority by outermost rewriting strategy.

  Nevertheless, $\atrs$ over the signature $\asig \cup \{\ssuc,\zer\}$ is not outermost ground terminating:
  \begin{align*}
    \funap{\ta}{\tfunap{\tf}{\funap{\fun{s}}{\tb}}{\funap{\fun{s}}{\funap{\ta}{\tb}}}{\funap{\fun{s}}{\funap{\ta}{\tb}}}}
    &\outred
    \funap{\ta}{\funap{\ta}{\tfunap{\tf}{\funap{\fun{s}}{\tb}}{\funap{\fun{s}}{\tb}}{\funap{\fun{s}}{\funap{\ta}{\tb}}}}} \\
    &\outred
    \funap{\ta}{\funap{\ta}{\tfunap{\tf}{\funap{\fun{s}}{\tb}}{\funap{\fun{s}}{\funap{\ta}{\tb}}}{\funap{\fun{s}}{\funap{\ta}{\tb}}}}} \\
    & \outred \ldots
  \end{align*}
\end{example}

\section{Dynamic Labeling}\label{sec:dynlab}

This paper is about employing context-sensitive rewriting to model outermost rewriting.
We do so by marking redexes, and forbid rewriting below them.
As we have seen, contracting a redex may create another redex higher up in the term tree.
Hence it may be necessary to update some labels during a rewrite step.
In Section~\ref{sec:cxtext} we defined a transformation where this updating 
was accounted for by extending rules with contexts.
Here we give an alternative transformation from TRSs to context-sensitive TRS{s}.
We call this tranformation `dynamic labeling'.
Instead of extending rules with contexts, 
we now employ rewriting to propagate the changed information upward in the term tree,
and set the labels in the surrounding context right, step by step.
Again the \cdepth{} (Definition~\ref{def:cmodel}) serves as a bound:
here on the number of successive ancestor nodes that have to be relabeled.
Each original rewrite step will give rise to a corresponding step and a bounded number ($\leq$ the \cdepth{}) 
of auxiliary steps in the transformed system. 
Thus, although the derivational complexity (the length of rewrite sequences) 
is changed, this is only by a constant factor.
We prove that dynamic labeling is sound for arbitrary TRSs.
Moreover, for left-linear TRSs, the method is complete in a weakened sense, 
see Theorem~\ref{thm:dynlab:complete}.
In Section~\ref{sec:evaluation}, we compare the performance of this method 
to the one of dynamic context extension described in Section~\ref{sec:cxtext}.

We begin with an analysis for evaluating 
which value changes can occur by rewriting and need to be propagated upward.
As we will see, this restricts the number of auxiliary `relabel symbols', and, in particular,
the number of `relabeling rules'.

\begin{definition}\label{def:value-change-pairs}
  Let $\atrs$ be a TRS over $\asig$, 
  and let $\triple{\aalg}{\semlab}{\asigred}$ be a \clabeling\ for $\atrs$.
  For $i = 0,1,2,\ldots$\,, we define the set $\ireachables{i}\subseteq\aalg\times\aalg$ inductively by:
  \begin{align*}
    \ireachables{0}
    =
    \{\,
      \pair{\interpreta{l}{\alpha}}{\interpreta{r}{\alpha}}
      \where
      {\ell \to r} \in \atrs,\,
      & \mbox{}
      \alpha \funin \vars{\ell} \to \aalg,\,
      \interpreta{l}{\alpha} \neq \interpreta{r}{\alpha}
    \,\}
    \\
    \ireachables{i+1} 
    =
    \ireachables{i}
    \join
    \{\,
      \pair{\funap{\interpret{\tf}}{\vec{a},b,\vec{c}}}{\funap{\interpret{\tf}}{\vec{a},b',\vec{c}}}
      \where \mbox{}
      & 
      \tf\in\asig,\,
      \vec{a} \cdot b \cdot \vec{c}\,\in\aalg^{\arity{\tf}},\,
      \pair{b}{b'} \in \ireachables{i}, 
      \\
      &
      \svoodoolabel{\tf}{\funap{\slabelf{\tf}}{\vec{a},b,\vec{c}}} \in { \labelsig{\topsig{\asig}} \setminus \asigred }      
    \,\}
  \end{align*}
  Then we define the set $\reachables{\semlab}{\atrs}$ of \emph{value-change pairs} by:
  \[
    \reachables{\semlab}{\atrs} = \ireachables{i} \setminus \{\pair{a}{a}\}_{a\in\aalg}
  \]
  with $i$\, the least number
  such that $\ireachables{i+1} = \ireachables{i}$.
\end{definition}

The `dynamic labeling' $\dynlab{\aalg}{\semlab}{\atrs}$ of a TRS~$\atrs$\, is partitioned into two sets of rules.
The first set is denoted by $\dynlaborg{\aalg}{\semlab}{\atrs}$
and consists of a semantic labeling of the original rules,
where, additionally, a right-hand side is prefixed by a symbol $\srelabel{a}{a'}$ 
whenever application of the rule causes a change of interpretation from $a$ to $a'$.
The second set, $\dynlabprp{\aalg}{\semlab}{\atrs}$, 
is a set of rules for relabeling the context of the rule application.
A symbol $\srelabel{a}{a'}$, with $\pair{a}{a'}\in\reachables{\semlab}{\atrs}$,
indicates that the value of its subterm has changed from $a$ to $a'$,
and the rules in $\dynlabprp{\aalg}{\semlab}{\atrs}$ 
take care of propagating this change of value upward in the term.

\begin{definition}[Dynamic labeling]\label{def:dynlab}
  Let $\atrs$ be a TRS over $\asig$, 
  and let $\triple{\aalg}{\semlab}{\asigred}$ be a \clabeling\ for $\atrs$.
  The TRS $\dynlab{\aalg}{\semlab}{\atrs}$ over the signature 
  $\labelsig{\asig_{\topsymb}} \join \{ \srelabel{a}{a'} \where { \pair{a}{a'} \in \reachables{\semlab}{\atrs} } \}$
  is defined by 
  $\dynlab{\aalg}{\semlab}{\atrs} = \dynlaborg{\aalg}{\semlab}{\atrs} \join \dynlabprp{\aalg}{\semlab}{\atrs}$. 
  Here the set $\dynlaborg{\aalg}{\semlab}{\atrs}$ of \emph{labeled rules} contains,   
  for each rule ${\ell\to r}\in\atrs$ and assignment $\alpha\funin{\vars{\ell}\to\aalg}$,
  one of the rules:
  \begin{align*}
    \dolabel{\ell}{\alpha} \to 
    \begin{cases}
      \dolabel{r}{\alpha}
      & \text{if $\interpreta{\ell}{\alpha} \eq \interpreta{r}{\alpha}$}
      \\
      \relabel{\interpreta{\ell}{\alpha}}{\interpreta{r}{\alpha}}{\dolabel{r}{\alpha}}
      & \text{otherwise}
    \end{cases}  
  \end{align*}
  Secondly, the set $\dynlabprp{\aalg}{\semlab}{\atrs}$ of 
  \emph{relabeling rules} 
  contains,
  for each $n$-ary $\tf\in\asig$, $\pair{b}{b'}\in\reachables{\semlab}{\atrs}$, 
  and $\tuple{\vec{a},b,\vec{c}} \in \aalg^n$ 
  such that 
  $\svoodoolabel{\tf}{\alab} \in { \labelsig{\topsig{\asig}} \setminus \asigred }$ 
  with $\alab = \funap{\slabelf{\tf}}{\vec{a},b,\vec{c}}$,
  one of the rules:
  \begin{align*}
    \funap{\svoodoolabel{\tf}{\alab}}{\vec{x},\relabel{b}{b'}{y},\vec{z}}
    \to
    \begin{cases}
      \funap{\svoodoolabel{\tf}{\alab'}}{\vec{x},y,\vec{z}}
      &\text{if $d \eq d'$} \\
      \relabel{d}{d'}{\funap{\svoodoolabel{\tf}{\alab'}}{\vec{x},y,\vec{z}}}
      &\text{otherwise}
    \end{cases}
  \end{align*}
  where 
  $\alab' = \funap{\slabelf{\tf}}{\vec{a},b',\vec{c}}$,
  $d = \funap{\interpret{\tf}}{\vec{a},b,\vec{c}}$,
  $d' = \funap{\interpret{\tf}}{\vec{a},b',\vec{c}}$,
  $\veclength{\vec{x}} = \veclength{\vec{a}}$, 
  and $\veclength{\vec{z}} = \veclength{\vec{c}}$.

  The \emph{dynamic labeling of $\atrs$ (with respect to the \clabeling~$\triple{\aalg}{\semlab}{\asigred}$)}
  is the context-sensitive TRS~$\pair{\dynlab{\aalg}{\semlab}{\atrs}}{\samumap}$,
  where the replacement map $\samumap$ is defined by 
  $\amumap{\srelabel{a}{a'}} = \setemp$ for all $\pair{a}{a'}\in\reachables{\semlab}{\atrs}$,
  $\amumap{\tf} = \setemp$ if $\tf \in \asigred$,
  and $\amumap{\tf} = \{1,\ldots,\arity{\tf}\}$ otherwise,
  for all $\tf\in\labelsig{\asig_{\topsymb}}$.
  Whenever $\asigred$ is clear from the context, we leave $\samumap$ implicit,
  and overload the notation $\dynlab{\aalg}{\semlab}{\atrs}$ to denote 
  $\pair{\dynlab{\aalg}{\semlab}{\atrs}}{\samumap}$.
\end{definition}

\begin{example}
  We revisit the TRS~$\trsfffx$ from Example~\ref{ex:fffx:cmodel} 
  for which we worked out the static and dynamic context extensions 
  in Examples~\ref{ex:fffx:static:cxtext} and~\ref{ex:fffx:dyn:cxtext}.
  We repeat its definition and the \clabeling{} from Example~\ref{ex:fffx:clabeling}: 
  $\trsfffx$ is the TRS over $\asig = \{\ta,\tf,\tg\}$ 
  consisting of the rules:
  \begin{align}
    \f{\g{x}} &\to \f{\f{\g{x}}} 
    & \f{\f{\f{x}}} &\to x
    \tag{$\trsfffx$}
  \end{align} 
  A \cmodel{} for $\trsfffx$ is formed by the $\asig$-algebra 
  $\algfffx = \{\bot,f,\mit{ff},g\}$
  with interpretation:
  \begin{align}
    \interpret{\fun{c}} = \bot 
    && 
    \funap{\interpret{\tf}}{\bot} = \funap{\interpret{\tf}}{g} = f
    && 
    \funap{\interpret{\tf}}{f} = \funap{\interpret{\tf}}{\mit{ff}} = \mit{ff}
    &&
    \funap{\interpret{\tg}}{x} = g
    \tag{$\algfffx$}
  \end{align}
  for all $x\in\algfffx$.
  Furthermore, $\pair{\algfffx}{\semlab}$ denotes the maximal labeling for $\trsfffx$, 
  and $\sigredex{\asig} = \{ \svoodoolabel{\tf}{g}, \svoodoolabel{\tf}{\mit{ff}} \}$.
  Then $\triple{\algfffx}{\semlab}{\sigredex{\asig}}$
  forms a sound and complete \clabeling\ of $\trsfffx$.
  Also note that $\algfffx$ forms a core algebra;
  for each value $e\in\algfffx$ there is a ground term $t$ such that 
  $\interpret{t} = e$.
  We first compute the set $\reachables{\semlab}{\trsfffx}$ of value-change pairs. 
  For the initial set $\ireachables{0}$, note that the rule 
  $\f{\g{x}} \to \f{\f{\g{x}}}$ 
  changes the interpretation from $f$ to $\mit{ff}$, regardless of the value assigned to $x$.
  The other rule creates three value-change pairs; one for each of the values $g,\bot,f$ assigned to $x$. 
  If the interpretation of $x$ is $\mit{ff}$ there is no change.
  Hence we get:
  \[
    \ireachables{0} = \{ \pair{f}{\mit{ff}} , \pair{\mit{ff}}{\bot} , \pair{\mit{ff}}{f} , \pair{\mit{ff}}{g} \}
  \]
  All symbols $\srelabel{e}{e'}$ with $\pair{e}{e'} \in \ireachables{0}$ 
  will disappear in one relabeling step, whence $\reachables{\semlab}{\trsfffx} = \ireachables{0}$.
  The dynamic labeling of $\trsfffx$ then is
  $\dynlab{\algfffx}{\semlab}{\trsfffx} = \dynlaborg{\algfffx}{\semlab}{\trsfffx} \join \dynlabprp{\algfffx}{\semlab}{\trsfffx}$
  where $\dynlaborg{\algfffx}{\semlab}{\trsfffx}$ consists of the rules:
  \begin{align*}
    \funap{\svoodoolabel{\tf}{g}}{\funap{\svoodoolabel{\tg}{e}}{x}} 
    & \to \relabel{f}{\mit{ff}}{\funap{\svoodoolabel{\tf}{f}}{\funap{\svoodoolabel{\tf}{g}}{\funap{\svoodoolabel{\tg}{e}}{x}}}}
    & \text{for all $e \in \algfffx$} 
    \\
    \funap{\svoodoolabel{\tf}{\mit{ff}}}{\funap{\svoodoolabel{\tf}{\mit{ff}}}{\funap{\svoodoolabel{\tf}{e'}}{x}}} 
    & \to \relabel{e}{e'}{x}
    & \text{for all $\pair{e}{e'} \in \{ \pair{\mit{ff}}{\bot} , \pair{\mit{ff}}{f} , \pair{\mit{ff}}{g} \}$}
    \\
    \funap{\svoodoolabel{\tf}{\mit{ff}}}{\funap{\svoodoolabel{\tf}{\mit{ff}}}{\funap{\svoodoolabel{\tf}{\mit{ff}}}{x}}} &\to x
  \end{align*}
  and where $\dynlabprp{\algfffx}{\semlab}{\trsfffx}$ is formed by:
  \begin{align*}
    \funap{\svoodoolabel{\tg}{e}}{\relabel{e}{e'}{x}} &\to \funap{\svoodoolabel{\tg}{e'}}{x}
    &\text{for all $\pair{e}{e'}\in\reachables{\semlab}{\trsfffx}$ } 
    \\
    \funap{\svoodoolabel{\tf}{f}}{\relabel{f}{\mit{ff}}{x}} &\to \funap{\svoodoolabel{\tf}{\mit{ff}}}{x}
  \end{align*}
  In total the dynamic labeling of $\trsfffx$ has 13 rules. 
  %
  Had we not restricted the construction of the set of the relabeling rules to 
  the `reachable' symbols $\srelabel{e}{e'}$ 
  (by the requirement $\pair{e}{e'}\in\reachables{\semlab}{\trsfffx}$ in Definition~\ref{def:dynlab}),
  we would have come up with 18 instead of 5 relabeling rules.
\end{example}

\begin{example}
  We reconsider the term rewrite system~$\trsduplrhs$ from Example~\ref{ex:dupl_rhs}:
  \begin{align*}
    \fb{\h{x}}{\fun{c}}
    & \to \fb{\funap{\fun{i}}{x}}{\funap{\fun{s}}{x}}
    &
    \funap{\fun{i}}{x}
    & \to \h{x}
    \\
    \fb{\funap{\fun{i}}{x}}{y}
    & \to x
    &
    \h{x}
    & \to \fb{\h{x}}{\fun{c}}
  \end{align*}
  and the algebra $\algduplrhs = \pair{\{\bot,c,h,i\}}{\sinterpret}$ 
  with $\sinterpret$ defined, for all $x,y\in\algduplrhs$, as follows:
  \begin{align*}
    \interpret{\fun{c}}=c
  &&
  \funap{\interpret{\sh}}{x} = h
  &&
  \funap{\interpret{\fun{i}}}{x} = i
  &&
  \bfunap{\interpret{\tf}}{x}{y} = \funap{\interpret{\fun{s}}}{x} = \bot
  \end{align*}
  Moreover, we employ minimal labeling again; see Example~\ref{ex:dupl_rhs}. 

  The set of change-value pairs is:
  \begin{align*}
    \reachables{\semlab}{\trsduplrhs}
    = \{ \pair{\bot}{c} , \pair{\bot}{h} , \pair{\bot}{i} , \pair{i}{h} , \pair{h}{\bot} \}
  \end{align*}
  The set $\dynlaborg{\algduplrhs}{\semlab}{\trsduplrhs}$ of labeled rules is constructed thus: 
  \begin{align*}
    \bfunap{\rmark{\tf}}{\funap{\rmark{\sh}}{x}}{\fun{c}}
    & \to \bfunap{\rmark{\tf}}{\funap{\rmark{\fun{i}}}{x}}{\funap{\fun{s}}{x}} 
    \\
    \bfunap{\rmark{\tf}}{\funap{\rmark{\fun{i}}}{x}}{y}
    & \to x
    \\
    \bfunap{\rmark{\tf}}{\funap{\rmark{\fun{i}}}{x}}{y}
    & \to \relabel{e}{e'}{x}
    & \text{$\pair{e}{e'} \in \{ \pair{\bot}{c} , \pair{\bot}{h} , \pair{\bot}{i} \}$}
    \\
    \funap{\rmark{\fun{i}}}{x}
    & \to \relabel{i}{h}{\funap{\rmark{\sh}}{x}}
    \\
    \funap{\rmark{\sh}}{x}
    & \to \relabel{h}{\bot}{\bfunap{\rmark{\tf}}{\funap{\rmark{\sh}}{x}}{\fun{c}}}
  \end{align*}%
  There are four rules with left-hand side $\ell = \bfunap{\rmark{\tf}}{\funap{\rmark{\fun{i}}}{x}}{y}$,
  one for each value assigned to~$x$. 
  In case $\funap{\alpha}{x} = \bot$ there is no change of interpretation, 
  for we have that $\interpreta{\ell}{\alpha} = \bot$ for all $\alpha\funin\{x,y\}\to\algduplrhs$
  and so no $\ssrelabel$ symbol is inserted.
  But if, for instance, $\interpret{\funap{\asubst}{x}} = c$ for some substitution $\asubst$,
  then some labels in the context~$\acxt$ of a rewrite step 
  $\cxtfill{\acxt}{\subst{\asubst}{\ell}} \to \cxtfill{\acxt}{\funap{\asubst}{x}}$
  have to be updated, since the value of $\cxthole$ has changed from $\bot$ to $c$,
  whence the insertion of $\srelabel{\bot}{c}$ to the right-hand side $x$.

  The set $\dynlabprp{\algduplrhs}{\semlab}{\trsduplrhs}$ 
  of relabeling rules is formed by:
  \begin{align*}
    \fb{\relabel{e}{e'}{x}}{y}
    & \to \fb{x}{y} 
    & \text{$\pair{e}{e'} \in \{ \pair{\bot}{c} , \pair{\bot}{h} , \pair{h}{\bot} \}$}
    \\
    \fb{\relabel{e}{e'}{x}}{y}
    & \to \bfunap{\rmark{\tf}}{x}{y}
    & \text{$\pair{e}{e'}\in\{\pair{\bot}{h} , \pair{\bot}{i}\}$}
    \\
    \fb{x}{\relabel{\bot}{c}{y}}
    & \to \bfunap{\rmark{\tf}}{x}{y} 
    \\
    \fb{x}{\relabel{\bot}{c}{y}}
    & \to \fb{x}{y} 
    & \text{$\pair{e}{e'} \in \reachables{\semlab}{\trsduplrhs}$}
    \\
    \funap{\fun{s}}{\relabel{e}{e'}{x}}
    & \to \funap{\fun{s}}{x} 
    & \text{$\pair{e}{e'}\in\reachables{\semlab}{\trsduplrhs}$}
    \\
    \funap{\topsymb}{\relabel{e}{e'}{x}}
    & \to \funap{\topsymb}{x} 
    & \text{$\pair{e}{e'}\in\reachables{\semlab}{\trsduplrhs}$}
  \end{align*}
  Some remarks for clarification: 
  First, note that all $\ssrelabel$ symbols disappear upon one relabeling step.
  Secondly, observe the overlap in, for example, the rules with left-hand side 
  $\fb{\relabel{\bot}{h}{x}}{y}$.
  If the value assigned to $y$ is $c$, 
  then a redex is created; this is witnessed by the marked symbol $\rmark{\tf}$ on the right.
  For other values for $y$, this is not the case.
  Also note that there is no rule for
  $t = \fb{\relabel{i}{h}{x}}{y}$. 
  This is because when the left argument of $\tf$ is interpreted as $i$,
  then $t$ forms a redex, and so $\tf$ should be marked. 
  Definition~\ref{def:dynlab} does not allow $\ssrelabel$ symbols 
  to commute with redex symbols.
  Intuitively, a $\ssrelabel$ symbol is a witness of a rewrite step 
  which we do not want to occur inside other redexes, as we want to model outermost terminination. 
  However, more technically, sometimes illegal (i.e., non-outermost) $\ssrelabel$ steps are allowed.
  This is illustrated in Example~\ref{ex:dyncomplete}.
  The point is that by preventing $\ssrelabel$ symbols to commute with redex symbols,
  for local completeness (Theorem~\ref{thm:dynlab:complete})
  it is as if illegal steps never happened.
\end{example}

\begin{remark}
  \newcommand{\ssrelabelto}{\ssrelabel}%
  \newcommand{\srelabelto}{\super{\ssrelabelto}}%
  \newcommand{\relabelto}[1]{\funap{\srelabelto{#1}}}%
  \newcommand{\fgfxnr}{8}%
  \newcommand{\trsfgfx}{\iatrs{\fgfxnr}}%
  \newcommand{\algfgfx}{\iaalg{\fgfxnr}}%
  We elaborate on the role of the element~$a$ in $\srelabel{a}{a'}$.
  Whenever the application of a rule 
  $\contextfill{\acontext}{\subst{\asubst}{\ell}} \to \contextfill{\acontext}{\subst{\asubst}{r}}$ 
  changes the interpretation,
  i.e., $\interpret{\subst{\asubst}{\ell}} \neq \interpret{\subst{\asubst}{r}}$,
  then a symbol $\srelabel{\interpret{\subst{\asubst}{\ell}}}{\interpret{\subst{\asubst}{r}}}$ is inserted.
  A term of the form $\relabel{a}{a'}{t'}$ can be thought of as a witness 
  of a rewrite step $t\to t'$ 
  causing a change of interpretation from $a = \interpret{t}$ to $a' = \interpret{t'}$.
  This change of the value then needs to be propagated upward to update the labels accordingly,
  using the relabeling rules from $\dynlabprp{\aalg}{\semlab}{\atrs}$.
  At first sight, the value $a$ in $\relabel{a}{a'}{t}$ seems redundant for relabeling:
  why would we store the previous value? 
  However, the label $a$ is important in order to restrict the number of applicable rules,
  and to have a bound on the number of relabeling steps. 
  To see this, consider the system~$\trsfgfx$ with single rewrite rule:
  \begin{align}
    \f{\g{\f{x}}} \to \fun{d}
    \tag{$\trsfgfx$}
    \label{ex:fgfx}
  \end{align}
  and the algebra $\algfgfx = \{ \bot, f, \mit{gf} \}$ with
  $\funap{\interpret{\tf}}{x} = f$ for all $x \in \algfgfx$,
  $\funap{\interpret{\tg}}{\mit{f}} = \mit{gf}$, $\funap{\interpret{\tg}}{x} = \bot$ for all $x \ne f$,
  and $\interpret{\fun{d}} = \bot$.
  We employ minimal labeling, that is, 
  only $\funap{\slabelf{\tf}}{\mit{gf}} = \star$\,,
  and all the other symbols are unlabeled.
  
  The dynamic labeling $\dynlab{\algfgfx}{\semlab}{\trsfgfx}$ 
  gives rise to two labelings of the original rule:
  \begin{align}
    \funap{\rmark{\tf}}{\g{\f{x}}} &\to \relabel{f}{\bot}{\fun{d}}
    \label{rule:fgfx:dynlab:1}
    \\
    \funap{\rmark{\tf}}{\g{\funap{\rmark{\tf}}{x}}} &\to \relabel{f}{\bot}{\fun{d}}
    \label{rule:fgfx:dynlab:2}
  \end{align}
  And, among the fourteen rules in $\dynlab{\algfgfx}{\semlab}{\trsfgfx}$ for updating labels,
  we find the following two:
  \begin{align}
    \g{\relabel{f}{\bot}{x}} &\to \relabel{\mit{gf}}{\bot}{\g{x}}
    \label{rule:fgfx:dynlab:3}
    \\ 
    \g{\relabel{\mit{gf}}{\bot}{x}} &\to \g{x}
    \label{rule:fgfx:dynlab:4}
  \end{align}

  \noindent
  Now consider the term $t = \funap{\topsymb}{\g{\cdots(\g{\g{\g{\funap{\svoodoolabel{\tf}{\star}}{\g{\f{\fun{d}}}}}}})}}$,
  and the rewrite sequence:
  \begin{align*}
    t
    \to_{\ref{rule:fgfx:dynlab:1},\samumap} {}
    &
    \funap{\topsymb}{\g{\cdots(\g{\g{\g{\relabel{f}{\bot}{\fun{d}}}}})}}
    \\
    {} \to_{\ref{rule:fgfx:dynlab:3},\samumap} {}
    &
    \funap{\topsymb}{\g{\cdots(\g{\g{\relabel{\mit{gf}}{\bot}{\g{\fun{d}}}}})}}
    \\
    {} \to_{\ref{rule:fgfx:dynlab:4},\samumap} {}
    &
    \funap{\topsymb}{\g{\cdots(\g{\g{\g{\fun{d}}}})}}
  \end{align*}
  After an application of~\eqref{rule:fgfx:dynlab:1}, relabeling takes two steps,
  resulting in a correctly labeled term.

  In the alternative, let us say `forgetful' version of dynamic labeling,
  where the `from' value $a$ in symbols $\srelabel{a}{b}$ is omitted,
  the rules~\eqref{rule:fgfx:dynlab:1}--\eqref{rule:fgfx:dynlab:4} look like this:
  \begin{align}
    \funap{\rmark{\tf}}{\g{\f{x}}} &\to \relabelto{\bot}{\fun{d}}
    \label{rule:fgfx:dynlab:forgetful:1}
    \tag{$\ref{rule:fgfx:dynlab:1}'$}
    \\
    \funap{\rmark{\tf}}{\g{\funap{\rmark{\tf}}{x}}} &\to \relabelto{\bot}{\fun{d}}
    \label{rule:fgfx:dynlab:forgetful:2}
    \tag{$\ref{rule:fgfx:dynlab:2}'$}
    \\ 
    \g{\relabelto{\bot}{x}} &\to \relabelto{\bot}{\g{x}}
    \label{rule:fgfx:dynlab:forgetful:3}
    \tag{$\ref{rule:fgfx:dynlab:3}'$}
    \\
    \g{\relabelto{\bot}{x}} &\to \g{x}
    \label{rule:fgfx:dynlab:forgetful:4}
    \tag{$\ref{rule:fgfx:dynlab:4}'$}
  \end{align}
  Due to the overlap in rules~\eqref{rule:fgfx:dynlab:forgetful:3} and~\eqref{rule:fgfx:dynlab:forgetful:4},
  the resulting \csTRS\ has a rewrite sequence 
  from $t$ where the symbol $\srelabelto{\bot}$ goes up all the way to the top:
  \begin{align*}
    t \to_{\text{\ref{rule:fgfx:dynlab:forgetful:1}},\samumap} {}
    &
    \funap{\topsymb}{\g{\cdots(\g{\g{\g{\relabelto{\bot}{\fun{d}}}}})}}
    \\
    {} \to_{\text{\ref{rule:fgfx:dynlab:forgetful:3}},\samumap} {}
    &
    \funap{\topsymb}{\g{\cdots(\g{\g{\relabelto{\bot}{\g{\fun{d}}}}})}}
    \\
    {} \to_{\text{\ref{rule:fgfx:dynlab:forgetful:3}},\samumap} {}
    &
    \funap{\topsymb}{\g{\cdots(\g{\relabelto{\bot}{\g{\g{\fun{d}}}}})}}
    \\
    {} \to_{\text{\ref{rule:fgfx:dynlab:forgetful:3}},\samumap} {}
    &
    \ldots
  \end{align*}
\end{remark}

From the following lemma it follows that every $\ssrelabel$ symbol
can be rewritten at most $\cdpth{\aalg}{\atrs}$ times (before it vanishes).
By rewriting a `$\ssrelabel$ symbol' we refer to a notion of
residuals that extends the usual definition of orthogonal projection~\cite{terese:03}
with a concept suggested by the definition of $\dynlabprp{\aalg}{\semlab}{\atrs}$\,:
Whenever we have a rule of the form:
\begin{align*}
  \funap{\svoodoolabel{\tf}{\alab}}{\vec{x},\relabel{a}{a'}{\dolabelg{t}},\vec{z}}
  \to \relabel{b}{b'}{\funap{\svoodoolabel{\tf}{\blab}}{\vec{x},\dolabelg{t},\vec{z}}}
\end{align*}
then we call $\srelabel{b}{b'}$ in the right-hand side
a residual of $\srelabel{a}{a'}$ in the left-hand side.

\newcommand{\reachrel}{\leadsto}
\newcommand{\srelabelw}{w}
\newcommand{\relabelw}{\funap{\srelabelw}}
\begin{lemma}\label{lem:relabel:bound}
  Let $\atrs$ be a TRS over $\asig$, 
  and let $\triple{\aalg}{\semlab}{\asigred}$ be a \clabeling\ for $\atrs$.
  We define the relation ${\reachrel} \subseteq {\reachables{\semlab}{\atrs} \times \reachables{\semlab}{\atrs}}$ by:
  \begin{align*}
    \pair{b}{b'} \reachrel \pair{\funap{\interpret{\tf}}{\vec{a},b,\vec{c}}}{\funap{\interpret{\tf}}{\vec{a},b',\vec{c}}}
    && \text{ for every } &\ \tf\in\asig,\, 
       \pair{b}{b'} \in \reachables{\semlab}{\atrs},\,
       \vec{a} \cdot b \cdot \vec{c}\,\in\aalg^{\arity{\tf}},\\
    &&   &\ \svoodoolabel{\tf}{\funap{\slabelf{\tf}}{\vec{a},b,\vec{c}}} \in { \labelsig{\topsig{\asig}} \setminus \asigred }
  \end{align*}
  Then $\reachrel$ is well-founded and every $\reachrel$ path has length $\le \cdpth{\aalg}{\atrs}$.
\end{lemma}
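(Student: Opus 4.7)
The plan is to establish a witness principle: every pair $\pair{b}{b'} \in \reachables{\semlab}{\atrs}$ admits a triple $(\ell \to r, C, \alpha)$ with $\ell \to r \in \atrs$, $C \in \bfunap{\contexts}{\asig}{\avars}$, and $\alpha \funin \avars \to \aalg$ such that $\interpreta{\cxtap{C}{\ell}}{\alpha} = b$ and $\interpreta{\cxtap{C}{r}}{\alpha} = b'$. This is proved by induction on the stage $i$ at which the pair enters $\ireachables{i}$. The base case $i = 0$ uses $C = \cxthole$. For the inductive step, a pair $\pair{\funap{\interpret{\tf}}{\vec{a},b,\vec{c}}}{\funap{\interpret{\tf}}{\vec{a},b',\vec{c}}}$ added at stage $i+1$ from $\pair{b}{b'} \in \ireachables{i}$ inherits, from a witness $(\ell \to r, C, \alpha)$, a new witness $(\ell \to r, C^+, \alpha^+)$, where $C^+ = \funap{\tf}{x_1,\ldots,x_m, C, y_1,\ldots,y_p}$ with $m = \veclength{\vec{a}}$, $p = \veclength{\vec{c}}$, and $x_i, y_i$ fresh in $\vars{C} \cup \vars{\ell}$, and $\alpha^+$ extends $\alpha$ by $x_i \mapsto a_i$, $y_i \mapsto c_i$. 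A direct computation yields the witness conditions, and crucially $\mathrm{depth}(C^+) = \mathrm{depth}(C) + 1$.

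The key observation is that the relation $\reachrel$ is defined by exactly this same wrapping step. Hence, starting from any witness $(\ell \to r, C_0, \alpha_0)$ for $\pair{b_0}{b_0'}$ supplied by the witness principle, iterating the construction along a $\reachrel$-path $\pair{b_0}{b_0'} \reachrel \pair{b_1}{b_1'} \reachrel \cdots \reachrel \pair{b_k}{b_k'}$ produces a witness $(\ell \to r, C_k, \alpha_k)$ for $\pair{b_k}{b_k'}$ using the \emph{same} underlying rule $\ell \to r$, with $\mathrm{depth}(C_k) = \mathrm{depth}(C_0) + k \ge k$.

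To conclude, $\pair{b_k}{b_k'} \in \reachables{\semlab}{\atrs}$ implies $b_k \ne b_k'$, so $\interpreta{\cxtap{C_k}{\ell}}{\alpha_k} \ne \interpreta{\cxtap{C_k}{r}}{\alpha_k}$. By Definition~\ref{def:cmodel}, any context of depth $\ge \cdpth{\aalg}{\ell \to r}$ equates the interpretations of $\ell$ and $r$ under any assignment (a routine compositional extension from the case of depth exactly $\cdpth{\aalg}{\ell \to r}$, by decomposing the deeper context at its innermost subcontext of depth $\cdpth{\aalg}{\ell \to r}$). Consequently $\mathrm{depth}(C_k) < \cdpth{\aalg}{\ell \to r} \le \cdpth{\aalg}{\atrs}$, and combined with $\mathrm{depth}(C_k) \ge k$ we obtain $k < \cdpth{\aalg}{\atrs}$, hence $k \le \cdpth{\aalg}{\atrs}$ as required. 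The same uniform bound delivers well-foundedness of $\reachrel$ for free. The only fiddly point in the whole argument is the freshness choice of the wrapping variables so that the extension of $\alpha$ stays well-defined, a purely syntactic renaming.
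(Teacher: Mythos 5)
Your proof is correct and follows essentially the same route as the paper's: both trace a $\reachrel$-path back to a root pair $\pair{\interpreta{\ell}{\alpha}}{\interpreta{r}{\alpha}}$, build a witnessing (thin) context whose depth grows by one per $\reachrel$-step, and bound that depth via $\cdpth{\aalg}{\ell\to r}$ using that value-change pairs stay off the diagonal. The only presentational differences are that you argue the bound directly rather than by contradiction, and that you make explicit the step (glossed over in the paper) that contexts of depth \emph{greater than} the \cdepth{} also equate the interpretations of $\ell$ and $r$.
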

\begin{proof}
  By definition of value-change pairs we have that for every pair $\pair{b}{b'} \in \reachables{\semlab}{\atrs}$
  there exists a rule $\arule \in \atrs$ and assignment $\alpha \funin \vars{\alhs} \to \aalg$
  such that $\pair{\interpreta{\alhs}{\alpha}}{\interpreta{\arhs}{\alpha}} \reachrel^* \pair{b}{b'}$.
  
  Assume, to arrive at a contradiction, there exists a sequence 
  \begin{align*}
   \pair{\interpreta{\alhs}{\alpha}}{\interpreta{\arhs}{\alpha}} 
   = \pair{b_0}{b'_0} \reachrel \pair{b_1}{b'_1} \reachrel \ldots \reachrel \pair{b_m}{b'_m}
  \end{align*}
  with $m > \cdpth{\aalg}{\atrs}$.
  For $i = 0,1,\ldots,m$ we construct thin contexts $\bcxt_i$ 
  and assignments $\alpha_i : \vars{\bcxt_i} \to \aalg$ 
  such that 
  $b_i = \interpreta{\cxtfill{\bcxt_i}{\alhs}}{\alpha_i}$
  and $b'_i = \interpreta{\cxtfill{\bcxt_i}{\arhs}}{\alpha_i}$.
  We begin with $\bcxt_0 = \cxthole$ and $\alpha_0 = \alpha$.
  Then we have $b_0 = \interpreta{\alhs}{\alpha}$ and $b'_0 = \interpreta{\arhs}{\alpha}$.
  For $i = 1,\ldots,m$ there exist
  $\tf_i \in \asig$, and $\vec{a_i} \cdot b_{i-1} \cdot \vec{c_i} \in \aalg^{\arity{\tf_i}}$
  such that
  $b_i = \funap{\interpret{\tf_i}}{\vec{a_i},b_{i-1},\vec{c_i}}$
  and $b'_i = \funap{\interpret{\tf_i}}{\vec{a_i},b'_{i-1},\vec{c_i}}$.
  We pick fresh variables $\vec{x_i}$ and $\vec{z_i}$
  with $\lstlength{\vec{x_i}} = \lstlength{\vec{a_i}}$ and $\lstlength{\vec{z_i}} = \lstlength{\vec{c_i}}$,
  and define $\bcxt_i = \funap{\tf_i}{\vec{x_i},\bcxt_{i-1},\vec{z_i}}$,
  and $\alpha_i$ is $\alpha_{i-1}$ extended by
  mapping variables $\vec{x_i}$ to the corresponding $\vec{a_i}$ and $\vec{z_i}$ to $\vec{c_i}$.
  It follows that $b_i = \interpreta{\cxtfill{\bcxt_i}{\alhs}}{\alpha_i}$
  and $b'_i = \interpreta{\cxtfill{\bcxt_i}{\arhs}}{\alpha_i}$.
  But then
  $\interpreta{\contextfill{D}{\alhs}}{\alpha} = b_m \ne b_m' = \interpreta{\contextfill{D}{\arhs}}{\alpha}$
  which contradicts that $\cdpth{\aalg}{\arule}$ is the \cdepth{} of $\arule$.
\end{proof}

%

\begin{corollary}
  Every $\ssrelabel$ symbol disappears at latest 
  after having applied $\cdpth{\aalg}{\atrs}$ many 
  relabeling rules (to this symbol).
\end{corollary}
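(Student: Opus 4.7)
The plan is to reduce the statement to Lemma~\ref{lem:relabel:bound} by matching each application of a relabeling rule on a single $\ssrelabel$ symbol with one step of the relation $\reachrel$ on value-change pairs. Inspecting Definition~\ref{def:dynlab}, every relabeling rule acts on a subterm of the form $\funap{\svoodoolabel{\tf}{\alab}}{\vec{x},\relabel{b}{b'}{y},\vec{z}}$. In the case $d = d'$ the symbol $\srelabel{b}{b'}$ is simply deleted, and we are done. In the case $d \ne d'$, the rule rewrites the subterm to $\relabel{d}{d'}{\funap{\svoodoolabel{\tf}{\blab}}{\vec{x},y,\vec{z}}}$, producing exactly one fresh $\ssrelabel$ symbol, namely $\srelabel{d}{d'}$, one position higher in the term tree. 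The side conditions of this rule (namely $\svoodoolabel{\tf}{\alab} \in \labelsig{\topsig{\asig}} \setminus \asigred$ with $\alab = \funap{\slabelf{\tf}}{\vec{a},b,\vec{c}}$) are precisely those used to define $\reachrel$ in Lemma~\ref{lem:relabel:bound}, whence $\pair{b}{b'} \reachrel \pair{d}{d'}$ in this case.

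Fix now a $\ssrelabel$ symbol, say carrying the initial value-change pair $\pair{b_0}{b'_0}$, produced by an application of a rule from $\dynlaborg{\aalg}{\semlab}{\atrs}$, and follow its unique residual across successive relabeling steps. By the previous paragraph, this descent is either aborted by a removing step, or it yields a chain $\pair{b_0}{b'_0} \reachrel \pair{b_1}{b'_1} \reachrel \cdots$ in $\reachables{\semlab}{\atrs}$ whose length equals the number of relabeling steps applied to the tracked symbol. Lemma~\ref{lem:relabel:bound} bounds the length of any such chain by $\cdpth{\aalg}{\atrs}$, so the $\ssrelabel$ symbol cannot survive more than $\cdpth{\aalg}{\atrs}$ applications of relabeling rules.

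I do not expect any significant obstacle. The argument rests on two elementary observations: each relabeling rule has at most one $\ssrelabel$ symbol on its right-hand side, so that the residual of the tracked symbol is uniquely defined at every step; and the non-removing branch of Definition~\ref{def:dynlab} matches, condition for condition, the defining clause of $\reachrel$ in Lemma~\ref{lem:relabel:bound}. Both observations are direct from the definitions, so the corollary drops out.
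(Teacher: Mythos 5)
Your proposal is correct and follows essentially the same route as the paper: the paper's proof consists of the single observation that every relabeling rule producing a fresh $\srelabel{d}{d'}$ from $\srelabel{b}{b'}$ satisfies $\pair{b}{b'} \reachrel \pair{d}{d'}$, after which the bound follows from Lemma~\ref{lem:relabel:bound}. You merely spell out the residual-tracking and the deleting case ($d = d'$) that the paper leaves implicit.
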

\begin{proof}
  For every rule in $\dynlabprp{\aalg}{\semlab}{\atrs}$ of the form:
  \begin{align*}
    \funap{\svoodoolabel{\tf}{\alab}}{\vec{x},\relabel{a}{a'}{\dolabelg{t}},\vec{z}}
    \to \relabel{b}{b'}{\funap{\svoodoolabel{\tf}{\blab}}{\vec{x},\dolabelg{t},\vec{z}}}
  \end{align*}
  we have that $\pair{a}{a'} \reachrel \pair{b}{b'}$.
\end{proof}

For the dynamic context extension, the `intended' terms in $\ter{\labelsig{\asig}}{\setemp}$
are those terms that can be obtained by correctly labeling terms in $\ter{\asig}{\setemp}$.
For the purpose of adapting this definition to dynamic labeling,
we enrich the (unlabeled) signature $\asig$ to $\asig_{+}$:
\begin{align*}
  \asig_{+} &= \asig \join \{ \srelabelone{a} \where \pair{a}{a'}\in\reachables{\semlab}{\atrs} \} 
\end{align*}
and extend the \clabeling{} to $\asig_{+}$ by:
\begin{align*}
  \labelf{\srelabelone{b}}{b'} &= b' & \interpret{\srelabelone{b}} = b
\end{align*}
for all $b,b'\in\aalg$ such that $\pair{b}{c}\in \reachables{\semlab}{\atrs}$ for some $c\in\aalg$.
Then labeled symbols are identified by $\svoodoolabel{(\srelabelone{a})}{a'} = \srelabel{a}{a'}$.

We obtain the following lemma: 
\begin{lemma}\label{lem:relabel}
  Let $\atrs$ be a TRS over $\asig$, 
  and let $\triple{\aalg}{\semlab}{\asigred}$ be a \clabeling\ for $\atrs$.
  Whenever we have a ground term $s$ of the form:
  \[
    s = \dolabelg{\contextfill{\acxt}{\f{s_1,\ldots,\relabelone{a}{t},\ldots,s_n}}}
  \]
  with $\pair{a}{a'}\in\reachables{\semlab}{\atrs}$, $a' = \interpret{t}$,
  and where the displayed $\ssrelabel$ symbol is at a $\samumap$-replacing position,
  then one of the following steps applies:
  \begin{align}
    s
    & \to_{\dynlabprp{\aalg}{\semlab}{\atrs},\samumap} 
    \dolabelg{\contextfill{\acxt}{\relabelone{b}{\f{s_1,\ldots,t,\ldots,s_n}}}}
    \\
    s
    & \to_{\dynlabprp{\aalg}{\semlab}{\atrs},\samumap} 
    \dolabelg{\contextfill{\acxt}{\f{s_1,\ldots,t,\ldots,s_n}}}
  \end{align}
  where $b = \interpreta{\f{s_1,\ldots,\cxthole,\ldots,s_n}}{\cxthole\mapsto a}$.
\end{lemma}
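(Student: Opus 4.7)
The plan is to locate, inside $\dynlabprp{\aalg}{\semlab}{\atrs}$, the unique relabeling rule that matches at the position of the displayed $\tf$ in $s$, and then to verify by unfolding the labeling operation that applying this rule produces exactly one of the two claimed terms.

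First I would compute the label of the displayed $\tf$. Let $e_j = \interpret{s_j}$ for $j \ne i$ (with $i$ the argument position of $\srelabelone{a}$), and recall that the extension of the algebra to $\asig_+$ sets $\interpret{\srelabelone{a}{t}} = a$. Hence the root of the subterm $\dolabelg{\f{s_1,\ldots,\relabelone{a}{t},\ldots,s_n}}$ in $s$ carries the label $\alab = \labelf{\tf}{e_1,\ldots,a,\ldots,e_n}$. Likewise, the corresponding labeled subterm in $\dolabelg{\contextfill{\acxt}{\f{s_1,\ldots,t,\ldots,s_n}}}$ has root label $\alab' = \labelf{\tf}{e_1,\ldots,a',\ldots,e_n}$, using $a' = \interpret{t}$.

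Next I would check the two ingredients required by Definition~\ref{def:dynlab} for the rule to exist. The pair $\pair{a}{a'}$ lies in $\reachables{\semlab}{\atrs}$ by assumption. Moreover, the displayed $\srelabel{a}{a'}$ is at a $\samumap$-replacing position, and this position lies strictly below $\tf$; thus $\tf$ itself sits at a $\samumap$-replacing position of $s$, which forces $\svoodoolabel{\tf}{\alab} \notin \asigred$. Consequently the relabeling rule
\[
  \funap{\svoodoolabel{\tf}{\alab}}{\vec{x},\relabel{a}{a'}{y},\vec{z}}
  \to \mathrm{rhs}
\]
is in $\dynlabprp{\aalg}{\semlab}{\atrs}$, where $\mathrm{rhs} = \funap{\svoodoolabel{\tf}{\alab'}}{\vec{x},y,\vec{z}}$ when $d = d'$, and $\mathrm{rhs} = \relabel{d}{d'}{\funap{\svoodoolabel{\tf}{\alab'}}{\vec{x},y,\vec{z}}}$ otherwise, with $d = \funap{\interpret{\tf}}{e_1,\ldots,a,\ldots,e_n}$ and $d' = \funap{\interpret{\tf}}{e_1,\ldots,a',\ldots,e_n}$. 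Observe that $d$ coincides with the value $b = \interpreta{\f{s_1,\ldots,\cxthole,\ldots,s_n}}{\cxthole\mapsto a}$ of the lemma statement.

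Now I would apply this rule at the position of the displayed $\tf$. All positions from the root of $s$ down to that position are $\samumap$-replacing, since the still deeper position of $\srelabel{a}{a'}$ is; so the step is a genuine $\samumap$-step. Matching the rule against $s$ instantiates $\vec{x}$ with $\dolabelg{s_1},\ldots$, $y$ with $\dolabelg{t}$, and $\vec{z}$ with $\ldots,\dolabelg{s_n}$. In the case $d = d'$ the contractum equals $\dolabelg{\contextfill{\acxt}{\f{s_1,\ldots,t,\ldots,s_n}}}$, since the relabeling inside $\acxt$ is unchanged and the new root label $\alab'$ is exactly the one produced by $\sdolabel$. In the case $d \ne d'$ we have $\pair{d}{d'}\in\reachables{\semlab}{\atrs}$ by construction of this set, and using the identifications $\interpret{\srelabelone{d}} \equiv d$ and $\svoodoolabel{(\srelabelone{d})}{d'} = \srelabel{d}{d'}$ the contractum becomes $\dolabelg{\contextfill{\acxt}{\relabelone{d}{\f{s_1,\ldots,t,\ldots,s_n}}}}$, as required.

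The only non-routine point is the bookkeeping around the extended algebra: one must be careful to use $\interpret{\srelabelone{a}{t}} = a$ (rather than $a'$) when computing the label of $\tf$ in $s$, and dually $\interpret{\srelabelone{d}{\cdot}} = d$ on the right-hand side when $d \neq d'$. Everything else is immediate from Definition~\ref{def:dynlab} and the definition of the replacement map $\samumap$.
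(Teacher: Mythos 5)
Your proposal is correct and follows essentially the same route as the paper's proof: unfold the labeling of the $\tf$-subterm using $\interpret{\srelabelone{a}} = a$, identify the corresponding relabeling rule from Definition~\ref{def:dynlab} (with $d = b$ and the two cases $d = d'$, $d \neq d'$), and apply it at a $\samumap$-replacing position. You are in fact slightly more explicit than the paper on two side conditions it leaves implicit, namely that $\svoodoolabel{\tf}{\alab}\notin\asigred$ follows from the $\samumap$-replacing hypothesis and that $\pair{d}{d'}\in\reachables{\semlab}{\atrs}$ by closure of that set.
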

\begin{proof}
  Let $b' = \interpret{\f{s_1,\ldots,t,\ldots,s_n}}$. 
  Note that $b' = \interpreta{\f{s_1,\ldots,\cxthole,\ldots,s_n}}{\cxthole\mapsto a'}$.
  Then:
  \begin{align*}
    \dolabelg{\f{s_1,\ldots,\relabelone{a}{t},\ldots,s_{n}}} 
    & = \funap{\svoodoolabel{\tf}{\alab}}{\dolabelg{s_1},\ldots,\relabel{a}{a'}{\dolabelg{t}},\ldots,\dolabelg{s_{n}}} 
    \\
    \dolabelg{\relabelone{b}{\f{s_1,\ldots,t,\ldots,s_n}}} 
    & = \relabel{b}{b'}{\funap{\svoodoolabel{\tf}{\blab}}{\dolabelg{s_1},\ldots,\dolabelg{t},\ldots,\dolabelg{s_n}}}
  \end{align*}
  where $\alab = \labelf{\tf}{\interpret{s_1},\ldots,a,\ldots,\interpret{s_{n}}}$
  and $\blab = \labelf{\tf}{\interpret{s_1},\ldots,a',\ldots,\interpret{s_{n}}}$.

  By Definition~\ref{def:dynlab} the dynamic labeling $\dynlab{\aalg}{\semlab}{\atrs}$ contains a rule of the form:
  \begin{align*}
  \funap{\svoodoolabel{\tf}{\alab}}{\vec{x},\relabel{a}{a'}{\dolabelg{t}},\vec{z}}
  & \to 
  \contextfill{\acxt}{\funap{\svoodoolabel{\tf}{\blab}}{\vec{x},\dolabelg{t},\vec{z}}}
  \end{align*}
  with $\acxt = \cxthole$ or $\acxt = \relabel{b}{b'}{\cxthole}$.
  Consequently we have a step of the form:
  \begin{align*}
    \dolabelg{\f{s_1,\ldots,\relabelone{a}{t},\ldots,s_n}} &\to_{\dynlab{\aalg}{\semlab}{\atrs},\samumap} \dolabelg{\contextfill{\bcxt}{\f{s_1,\ldots,t,\ldots,s_n}}}
  \end{align*}
  with $\bcxt = \cxthole$ or $\bcxt = \relabelone{b}{\cxthole}$.

  Now the claim follows since $s = \contextfill{\dolabel{\acxt}{\cxthole \mapsto b}}{\dolabelg{\f{s_1,\ldots,\relabelone{a}{t},\ldots,s_n}}}$.
\end{proof}

\begin{lemma}\label{lem:dynlab:sound}
  Let $\atrs$ be a TRS over $\asig$, 
  and let $\triple{\aalg}{\semlab}{\asigred}$ be a sound \clabeling\ for $\atrs$.
  Let $s,t \in \term{\asig}{\setemp}$ be ground terms
  such that $s \outred_{\atrs} t$. 
  Then, for some $m \leq \cdpth{\aalg}{\atrs}$\,:
  \[
    \dolabelg{\funap{\topsymb}{s}} 
    \relcomp
      {\to_{\dynlaborg{\aalg}{\semlab}{\atrs},\samumap}}
      {\to^{m}_{\dynlabprp{\aalg}{\semlab}{\atrs},\samumap}}
    \dolabelg{\funap{\topsymb}{t}}
  \]
\end{lemma}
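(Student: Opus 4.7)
The plan is to unfold the outermost step $s \outred_{\atrs} t$ as $s = \cxtap{\acxt}{\subst{\asubst}{\alhs}}$ and $t = \cxtap{\acxt}{\subst{\asubst}{\arhs}}$ for some rule $\alhs \to \arhs \in \atrs$, ground substitution $\asubst$ and context $\acxt$, and then to simulate this step in two phases inside $\dynlab{\aalg}{\semlab}{\atrs}$.

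For the first phase I apply the labeled rule from $\dynlaborg{\aalg}{\semlab}{\atrs}$ corresponding to $\alhs \to \arhs$ together with the assignment $\alpha = \interpret{\asubst}$. By Lemma~\ref{lem:zzz} the labeled left-hand side $\dolabel{\alhs}{\alpha}$ matches the labeled redex inside $\dolabelg{\funap{\topsymb}{s}}$, and by Lemma~\ref{lem:yyy} together with soundness of the \clabeling{} no position strictly above the rewrite position in $\dolabelg{\funap{\topsymb}{s}}$ carries a symbol from $\asigred$, so all those ancestors are $\samumap$-replacing and the rule fires. When $\interpret{\subst{\asubst}{\alhs}} = \interpret{\subst{\asubst}{\arhs}}$ the step yields $\dolabelg{\funap{\topsymb}{t}}$ directly, witnessing the claim with $m = 0$. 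Otherwise, setting $a_0 = \interpret{\subst{\asubst}{\alhs}}$ and $a_0' = \interpret{\subst{\asubst}{\arhs}}$, we have $\pair{a_0}{a_0'} \in \reachables{\semlab}{\atrs}$ and the step produces a term of the form $\dolabelg{\cxtap{\acxt}{\relabelone{a_0}{\subst{\asubst}{\arhs}}}}$.

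In the second phase I push the $\ssrelabel$ symbol upward by repeatedly applying Lemma~\ref{lem:relabel}. The invariant after $i$ relabeling steps is that the current term equals $\dolabelg{\cxtap{\acxt_i}{\relabelone{a_i}{u_i}}}$ where $\acxt_i$ is a strict prefix of $\funap{\topsymb}{\acxt}$, the unlabeled term $\cxtap{\acxt_i}{u_i}$ equals $\funap{\topsymb}{t}$, and $\pair{a_i}{a_i'} \in \reachables{\semlab}{\atrs}$ with $a_i' = \interpret{u_i}$. The proper ancestors of the $\ssrelabel$-position lie strictly above the original rewrite position, so Lemma~\ref{lem:yyy} again makes that position $\samumap$-replacing. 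Lemma~\ref{lem:relabel} then applies and yields either a step that consumes the $\ssrelabel$ and hence produces $\dolabelg{\funap{\topsymb}{t}}$, or a step that moves it one layer up while extending the chain by $\pair{a_i}{a_i'} \reachrel \pair{a_{i+1}}{a_{i+1}'}$ in the sense of Lemma~\ref{lem:relabel:bound}, both pairs still living in $\reachables{\semlab}{\atrs}$.

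To bound $m$ I then invoke Lemma~\ref{lem:relabel:bound}: every $\reachrel$-chain has length at most $\cdpth{\aalg}{\atrs}$, so after at most that many propagation steps the non-dissolving alternative of Lemma~\ref{lem:relabel} can no longer fire and the dissolving one is forced; should the $\ssrelabel$ reach the layer directly beneath $\topsymb$ earlier, the constancy of $\interpret{\topsymb}$ forces equal interpretations on both sides and therefore the dissolving case as well. The main obstacle is to maintain the invariant precisely through the propagation phase, in particular to verify that each freshly produced pair $\pair{a_{i+1}}{a_{i+1}'}$ obtained from Lemma~\ref{lem:relabel} is indeed still an element of $\reachables{\semlab}{\atrs}$ (so that Lemma~\ref{lem:relabel:bound} is actually applicable) and that the unlabeled shape below the $\ssrelabel$ continues to coincide with the matching subterm of $\funap{\topsymb}{t}$, which is what guarantees that the final dissolving step lands on $\dolabelg{\funap{\topsymb}{t}}$ rather than some mislabeled variant.
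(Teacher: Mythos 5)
Your proposal is correct and follows essentially the same route as the paper's proof: decompose the outermost step via Lemma~\ref{lem:zzz}, apply the corresponding rule from $\dynlaborg{\aalg}{\semlab}{\atrs}$ with a case split on whether the interpretation changes, and then propagate the $\ssrelabel$ symbol upward via Lemma~\ref{lem:relabel}, with the bound $m \leq \cdpth{\aalg}{\atrs}$ supplied by Lemma~\ref{lem:relabel:bound}. The only difference is that you make explicit the appeal to Lemma~\ref{lem:yyy} for the $\samumap$-replacingness of the rewrite and propagation positions (and the invariant that labels strictly above the $\ssrelabel$ symbol are unchanged), a point the paper leaves implicit.
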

\begin{proof}
  Assume $s \outred_{\atrs,p} t$ for some position $p \in \pos{s}$.
  Then there exists a rule $\arule\in\atrs$, 
  a context $\acontext$ with $\symbat{\acontext}{p} = \contexthole$
  and a ground substitution $\asubst\funin{\avars\to\term{\asig}{\setemp}}$ such that 
  $s = \contextfill{\acontext}{\subst{\asubst}{\alhs}}$ 
  and $t = \contextfill{\acontext}{\subst{\asubst}{\arhs}}$.
  \newcommand{\labacxt}[1]{\overline{\acontext}_{#1}}%
  \newcommand{\labasubst}{\overline{\asubst}}%
  Let $\labacxt{a} = \dolabel{\funap{\topsymb}{\acontext}}{\cxthole \mapsto a}$ 
  and $\labasubst = \dolabelg{\asubst}$, then by Lemma~\ref{lem:zzz} we obtain:
  \begin{align}
    \dolabelg{\funap{\topsymb}{s}}
    = 
    \dolabelg{\funap{\topsymb}{\contextfill{\acontext}{\subst{\asubst}{\alhs}}}}
    =
    \contextfill{\labacxt{\interpret{\subst{\asubst}{\ell}}}}{\dolabelg{\subst{\asubst}{\alhs}}}
    =
    \contextfill{\labacxt{\interpret{\subst{\asubst}{\ell}}}}{\subst{\,\labasubst}{\dolabel{\alhs}{\interpret{\asubst}}}}
    \label{eq:simpl:s}
  \end{align}
  \newcommand{\refeqsimplt}{\ensuremath{\ref{eq:simpl:t}}}
  and, likewise:
  \begin{align}
  \dolabelg{\funap{\topsymb}{t}} = \contextfill{\labacxt{\interpret{\subst{\asubst}{\arhs}}}}{\subst{\,\labasubst}{\dolabel{\arhs}{\interpret{\asubst}}}}
  \label{eq:simpl:t}
  \end{align}

  By definition of dynamic labeling, one of the following rules is in $\dynlaborg{\aalg}{\semlab}{\atrs}$: 
  \begin{align}
    \dolabel{\alhs}{\interpret{\asubst}} & \to \dolabel{\arhs}{\interpret{\asubst}} 
    & \text{if $\interpret{\subst{\asubst}{\alhs}} \eq \interpret{\subst{\asubst}{\arhs}}$}
    \label{dynlabrule:1}
    \\
    \dolabel{\alhs}{\interpret{\asubst}} 
    & \to \relabel{\interpret{\subst{\asubst}{\alhs}}}{\interpret{\subst{\asubst}{\arhs}}}{\dolabel{\arhs}{\interpret{\asubst}}}
    & \text{if $\interpret{\subst{\asubst}{\alhs}} \neq \interpret{\subst{\asubst}{\arhs}}$}
    \label{dynlabrule:2}
  \end{align}
  (Note that 
  $\interpreta{\alhs}{\interpret{\asubst}} = \interpret{\subst{\asubst}{\alhs}}$, 
  and $\interpreta{\arhs}{\interpret{\asubst}} = \interpret{\subst{\asubst}{\arhs}}$ by Lemma~\ref{lem:xyz}.)
  
  In case $\interpret{\subst{\asubst}{\alhs}} \eq \interpret{\subst{\asubst}{\arhs}}$, 
  no relabeling is needed and we take $m = 0$:
  \begin{align*}
    \dolabelg{\funap{\topsymb}{s}}
    & \stackrel{\eqref{eq:simpl:s}}{=} 
    \contextfill{\labacxt{\interpret{\subst{\asubst}{\alhs}}}}{\subst{\,\labasubst}{\dolabel{\alhs}{\interpret{\asubst}}}}
    \stackrel{\eqref{dynlabrule:1}}{\to_{\dynlaborg{\aalg}{\semlab}{\atrs},\samumap}}
    \contextfill{\labacxt{\interpret{\subst{\asubst}{\alhs}}}}{\subst{\,\labasubst}{\dolabel{\arhs}{\interpret{\asubst}}}}
    \stackrel{\refeqsimplt}{=} 
    \dolabelg{\funap{\topsymb}{t}}
  \end{align*}
  
  If $\interpret{\subst{\asubst}{\alhs}} \neq \interpret{\subst{\asubst}{\arhs}}$, 
  we get:
  \begin{align*}
    \dolabelg{\funap{\topsymb}{s}}
    \stackrel{\eqref{eq:simpl:s}}{=} &\ 
    \contextfill{\labacxt{\interpret{\subst{\asubst}{\alhs}}}}{\subst{\,\labasubst}{\dolabel{\alhs}{\interpret{\asubst}}}}
    \\
    \stackrel{\eqref{dynlabrule:2}}{\to_{\dynlaborg{\aalg}{\semlab}{\atrs},\samumap}} &\ 
    \contextfill{\labacxt{\interpret{\subst{\asubst}{\alhs}}}}{\subst{\,\labasubst}{\relabel{\interpret{\subst{\asubst}{\alhs}}}{\interpret{\subst{\asubst}{\arhs}}}{\dolabel{\arhs}{\interpret{\asubst}}}}}
    = \dolabelg{\funap{\topsymb}{\contextfill{\acxt}{\relabelone{\interpret{\subst{\asubst}{\alhs}}}{\subst{\asubst}{\arhs}}}}}
  \end{align*} 
  By Lemma~\ref{lem:relabel} the $\ssrelabel$ symbol can `walk' upward until it disappears,
  and at the latest it vanishes when it meets $\topsymb$. 
  Hence we have:
  \begin{align*}
    \dolabelg{\funap{\topsymb}{s}}
    \to_{\dynlaborg{\aalg}{\semlab}{\atrs},\samumap}
    \dolabelg{\funap{\topsymb}{\contextfill{\acxt}{\relabelone{\interpret{\subst{\asubst}{\alhs}}}{\subst{\asubst}{\arhs}}}}}
    \to_{\dynlabprp{\aalg}{\semlab}{\atrs},\samumap}^m
    \dolabelg{\funap{\topsymb}{\contextfill{\acxt}{\subst{\asubst}{\arhs}}}}
  \end{align*}
  for some $m \le \cdpth{\aalg}{\atrs}$ by Lemma~\ref{lem:relabel:bound}.
\end{proof}

\begin{theorem}\label{thm:dynlab:sound}
  Let $\atrs$ be a TRS over $\asig$, 
  and $\triple{\aalg}{\semlab}{\asigred}$ a sound \clabeling\ for $\atrs$.
  Then $\atrs$\ is outermost ground terminating if $\dynlab{\aalg}{\semlab}{\atrs}$ is terminating.
\end{theorem}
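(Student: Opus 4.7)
The plan is to prove the contrapositive: assuming that $\atrs$ admits an infinite outermost ground rewrite sequence, I will construct an infinite $\samumap$-rewrite sequence in $\dynlab{\aalg}{\semlab}{\atrs}$. This mirrors the proof of Theorem~\ref{thm:cxtext:sound}, except that each original step is now simulated by a bounded but nonzero number of steps in the target system rather than exactly one.

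Concretely, suppose $t_1 \outred_{\atrs} t_2 \outred_{\atrs} t_3 \outred_{\atrs} \cdots$ is an infinite outermost rewrite sequence of ground terms in $\ter{\asig}{\setemp}$. For each index $i$, I apply Lemma~\ref{lem:dynlab:sound} to the step $t_i \outred_{\atrs} t_{i+1}$ to obtain some $m_i \le \cdpth{\aalg}{\atrs}$ and a rewrite sequence
\[
  \dolabelg{\funap{\topsymb}{t_i}}
  \to_{\dynlaborg{\aalg}{\semlab}{\atrs},\samumap}\;
  \to^{m_i}_{\dynlabprp{\aalg}{\semlab}{\atrs},\samumap}\;
  \dolabelg{\funap{\topsymb}{t_{i+1}}}.
\]
Concatenating these simulating sequences for $i = 1, 2, 3, \ldots$ yields an infinite $\samumap$-rewrite sequence in $\dynlab{\aalg}{\semlab}{\atrs}$, contradicting its assumed termination.

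The only point to verify is that the concatenation really is infinite; this is immediate because every simulated original step contributes at least one $\dynlaborg{\aalg}{\semlab}{\atrs}$-step (from the leading ${\to_{\dynlaborg{\aalg}{\semlab}{\atrs},\samumap}}$ in the lemma), so we obtain at least one rewrite step of $\dynlab{\aalg}{\semlab}{\atrs}$ per index $i$. Note that the bound $m_i \le \cdpth{\aalg}{\atrs}$, which was the hard content of Lemma~\ref{lem:dynlab:sound} (via Lemma~\ref{lem:relabel:bound}), is in fact not even needed for soundness here; all we use is that $m_i$ is finite, so that the simulation of each outermost step is a finite piece of the target sequence and the pieces can be concatenated. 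The actual work, namely guaranteeing that each outermost step has a matching simulation in $\dynlab{\aalg}{\semlab}{\atrs}$ and that the labeling is preserved, has already been carried out in Lemma~\ref{lem:dynlab:sound}, so the theorem follows by a short contrapositive argument with no further obstacle.
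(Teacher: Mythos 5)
Your proof is correct and takes essentially the same route as the paper: the paper also derives the theorem directly from Lemma~\ref{lem:dynlab:sound} by simulating each outermost step with a nonempty finite block of $\samumap$-steps (one $\dynlaborg{\aalg}{\semlab}{\atrs}$-step followed by at most $\cdpth{\aalg}{\atrs}$ relabeling steps) and concatenating. Your observation that only finiteness of $m_i$ and the guaranteed leading step matter for soundness is accurate, but it does not change the argument.
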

\proof
  Assume that $\atrs$\, admits an infinite outermost rewrite sequence of ground terms:
  \[
    t_1 \outred_{\atrs} t_2 \outred_{\atrs} t_3 \outred_{\atrs} \ldots
  \]
  Then from Lemma~\ref{lem:dynlab:sound} it follows that 
  $\dynlab{\aalg}{\semlab}{\atrs}$ admits an infinite rewrite sequence: 
  $$
    \dolabelg{\funap{\topsymb}{t_1}} 
    \to_{\dynlab{\aalg}{\semlab}{\atrs},\samumap}^{+}
    \dolabelg{\funap{\topsymb}{t_2}} 
    \to_{\dynlab{\aalg}{\semlab}{\atrs},\samumap}^{+}
    \dolabelg{\funap{\topsymb}{t_3}} 
    \to_{\dynlab{\aalg}{\semlab}{\atrs},\samumap}^{+}
    \ldots\eqno{\qEd}
  $$

We note that Theorem~\ref{thm:dynlab:sound} can be strengthened
by weakening the termination of $\dynlab{\aalg}{\semlab}{\atrs}$
to local termination of $\dynlab{\aalg}{\semlab}{\atrs}$ on 
the set of correctly labeled ground terms without $\ssrelabel$ symbols.
Let us denote this set by $\dolabelg{\ter{\asig}{\setemp}}$:
\[
  \dolabelg{\ter{\asig}{\setemp}}
  =
  \{\dolabelg{t} \where t \in \ter{\asig}{\setemp} \}
\]

Theorem~\ref{thm:dynlab:complete} below states
that dynamic labeling is complete with respect to 
local termination on $\dolabelg{\ter{\asig}{\setemp}}$.
More precisely, outermost ground termination of $\atrs$ implies termination of 
$\dynlab{\aalg}{\semlab}{\atrs}$ on $\dolabelg{\ter{\asig}{\setemp}}$.
The following example helps to understand the proof of that theorem;
it illustrates that even when starting from terms in
$\dolabelg{\ter{\asig}{\setemp}}$, not every rewrite step in 
$\dynlab{\aalg}{\semlab}{\atrs}$ corresponds to an outermost step in $\atrs$.

\begin{example}\label{ex:dyncomplete}
  Let $\atrs$ consist of the following rules:
  \begin{align*}
    \fb{\tb}{x} &\to \ta &
    \fb{x}{\tb} &\to \ta \\
    \fb{\tb}{\tb} &\to \fb{\ta}{\ta} &
    \ta &\to \tb
  \end{align*}
  Moreover, let $\aalg = \{\bot,b\}$ with
  $\interpret{\ta} = \bot$, $\interpret{\tb} = b$,
  and $\bfunap{\interpret{\tf}}{x}{y} = \bot$ for all $x,y \in \aalg$.
  Labeling symbols with the value of their arguments, 
  we obtain for $\dynlaborg{\aalg}{\semlab}{\atrs}$:
  \begin{align*}
    \bfunap{\svoodoolabel{\tf}{b,\bot}}{\tb}{x} &\to \ta &
    \bfunap{\svoodoolabel{\tf}{\bot,b}}{x}{\tb} &\to \ta \\
    \bfunap{\svoodoolabel{\tf}{b,b}}{\tb}{x} &\to \ta &
    \bfunap{\svoodoolabel{\tf}{b,b}}{x}{\tb} &\to \ta \\
    \bfunap{\svoodoolabel{\tf}{b,b}}{\tb}{\tb} &\to \bfunap{\svoodoolabel{\tf}{\bot,\bot}}{\ta}{\ta} &
    \ta &\to \relabel{\bot}{b}{\tb}
  \end{align*}
  and for $\dynlabprp{\aalg}{\semlab}{\atrs}$:
  \begin{align*}
    \bfunap{\svoodoolabel{\tf}{\bot,\bot}}{\relabel{\bot}{b}{x}}{y} &\to \bfunap{\svoodoolabel{\tf}{b,\bot}}{x}{y} 
    & 
    \bfunap{\svoodoolabel{\tf}{\bot,\bot}}{x}{\relabel{\bot}{b}{y}} &\to \bfunap{\svoodoolabel{\tf}{\bot,b}}{x}{y}
  \end{align*}
  where $\sigredex{\asig} = \{\ta, \svoodoolabel{\tf}{b,\bot}, \svoodoolabel{\tf}{\bot,b}, \svoodoolabel{\tf}{b,b}\}$.
  Then we obtain the following rewrite sequence in $\dynlab{\aalg}{\semlab}{\atrs}$:
  \begin{align*}
    \dolabelg{\fb{\ta}{\ta}} =&\ \bfunap{\svoodoolabel{\tf}{\bot,\bot}}{\ta}{\ta}\\
    \to_{\dynlaborg{\aalg}{\semlab}{\atrs},\samumap} &\ 
      \bfunap{\svoodoolabel{\tf}{\bot,\bot}}{\relabel{\bot}{b}{\tb}}{\ta}\\
    \to_{\dynlaborg{\aalg}{\semlab}{\atrs},\samumap}&\ 
      \bfunap{\svoodoolabel{\tf}{\bot,\bot}}{\relabel{\bot}{b}{\tb}}{\relabel{\bot}{b}{\tb}}\\
    \to_{\dynlabprp{\aalg}{\semlab}{\atrs},\samumap}&\ 
      \bfunap{\svoodoolabel{\tf}{b,\bot}}{\tb}{\relabel{\bot}{b}{\tb}}
  \end{align*}
  The second step in this rewrite sequence
  does not correspond to an outermost step.
  Nevertheless, Theorem~\ref{thm:dynlab:complete}
  states that such `illegal' steps do not harm completeness of the transformation.
  The reason is that if the relabeling rules create a redex above
  some $\ssrelabel$ symbol, then this $\ssrelabel$ symbol
  is prevented from further propagating its information upward
  (until it becomes $\samumap$-replacing again).
  The crucial point is that above $\ssrelabel$ symbols
  the labels are unchanged, thus as if the step would not have taken place.
  Moreover, it is essential that $\dynlab{\aalg}{\semlab}{\atrs}$ 
  prohibits $\ssrelabel$ to propagate over symbols from $\sigredex{\asig}$.
  For instance, in the above example $\dynlab{\aalg}{\semlab}{\atrs}$
  does not contain a rule of the form:
  \begin{align}
    \bfunap{\svoodoolabel{\tf}{b,\bot}}{x}{\relabel{\bot}{b}{y}} &\to \bfunap{\svoodoolabel{\tf}{b,b}}{x}{y}
    \label{illrelabel}
  \end{align}
  This rule would cause non-termination: 
  \begin{align*}
    \bfunap{\svoodoolabel{\tf}{\bot,\bot}}{\ta}{\ta} 
    & \mured^2 
    \bfunap{\svoodoolabel{\tf}{\bot,\bot}}{\relabel{\bot}{b}{\tb}}{\relabel{\bot}{b}{\tb}} \\
    & \mured
    \bfunap{\svoodoolabel{\tf}{b,\bot}}{\tb}{\relabel{\bot}{b}{\tb}}
    \\
    & \stackrel{\text{\eqref{illrelabel}}}{\mured}
    \bfunap{\svoodoolabel{\tf}{b,b}}{\tb}{\tb}
    \\
    & \mured
    \bfunap{\svoodoolabel{\tf}{\bot,\bot}}{\ta}{\ta}
  \end{align*}
\end{example}

\begin{theorem}\label{thm:dynlab:complete}
  Let $\atrs$ be a left-linear TRS over $\asig$, 
  and $\triple{\aalg}{\semlab}{\asigred}$ a complete, maximal, and core \clabeling\ for $\atrs$.
  Then $\dynlab{\aalg}{\semlab}{\atrs}$ is terminating on the set of terms $\dolabelg{\ter{\asig}{\setemp}}$
  if $\atrs$ is outermost ground terminating.
\end{theorem}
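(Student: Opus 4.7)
I would prove the contrapositive: from any infinite $\samumap$-rewrite sequence $u_1 \mured u_2 \mured \cdots$ with $u_1 \in \dolabelg{\ter{\asig_\topsymb}{\setemp}}$, extract an infinite outermost ground $\atrs$-rewrite sequence. Call a step \emph{original} if it uses a rule of $\dynlaborg{\aalg}{\semlab}{\atrs}$ and \emph{relabeling} otherwise. By Lemma~\ref{lem:relabel:bound} and its corollary, each $\srelabel{a}{a'}$ symbol is consumed after at most $\cdpth{\aalg}{\atrs}$ relabeling steps, and each original step introduces at most one $\srelabel$ symbol, so the sequence must contain infinitely many original steps. Now define the projection $\pi$ into $\ter{\asig_\topsymb}{\setemp}$ by stripping labels and erasing $\srelabel$ wrappers. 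Relabeling steps preserve $\pi$. For an original step at position $p$, the $\samumap$-replacing condition forbids both $\srelabel$s and $\asigred$-symbols strictly above $p$ in $u$, so the ancestor path of $p$ is left unchanged by $\pi$; combined with left-linearity of $\atrs$, the match of $\dolabel{\ell}{\alpha}$ in $u$ projects to a match of $\ell$ in $\pi(u)$ at the same position $p$ (any $\srelabel$s inside subterms substituted for variables of $\ell$ are harmlessly erased). Thus the original steps project to honest $\atrs$-steps $\pi(u) \redat{\atrs,p} \pi(v)$, yielding an infinite rewrite sequence $t_1 \to_\atrs t_2 \to_\atrs \cdots$ in $\atrs$ with $t_1 = \pi(u_1)$.

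The remaining and central task is to show each step $t_k \to_\atrs t_{k+1}$ is outermost. Suppose, towards a contradiction, that $\pi(u_k)$ carried a redex at some position $q'$ strictly above $p_k$. I would maintain throughout the sequence the invariant $u_i = \dolabelg{\pi_1(u_i)}$ over the extended signature $\asig_{+} \cup \{\topsymb\}$ with $\interpret{\srelabelone{a}} = a$ as introduced just before Lemma~\ref{lem:relabel}, where $\pi_1$ strips only labels. This invariant is preserved by both kinds of step: an original step that changes interpretation from $a$ to $a'$ introduces a $\srelabelone{a}$ that conserves the extended interpretation at every ancestor position, and a relabeling step adjusts exactly one parent label accordingly. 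Using coreness of $\aalg$, concretize every $\srelabelone{a}$-subterm of $\pi_1(u_k)$ lying beneath $q'$ by a ground term of interpretation $a$ to obtain $\tilde u_k \in \ter{\asig_\topsymb}{\setemp}$. A routine structural calculation shows that for every child index $i$, the extended interpretation of $\pi_1(u_k)|_{q'.i}$ equals the ordinary interpretation of $\tilde u_k|_{q'.i}$, so the root labels at $q'$ in $u_k$ and in $\dolabelg{\tilde u_k}$ coincide. Completeness and maximality of the \clabeling{} then transfer a redex at $q'$ in $\tilde u_k$ (obtained from the assumed redex at $q'$ in $\pi(u_k)$ via left-linearity) to a symbol of $\asigred$ at the root of $u_k|_{q'}$, contradicting the $\samumap$-replacing property of $p_k$ and completing the proof.

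The main obstacle is to guarantee that $\tilde u_k$ really carries an $\atrs$-redex at $q'$: in principle a $\srelabelone{a}$ of $\pi_1(u_k)$ might occupy a non-variable position of the matching rule's left-hand side, and then an arbitrary ground representative of interpretation $a$ need not continue the pattern. What makes the argument go through is the clause of Definition~\ref{def:dynlab} forbidding a relabeling rule to push a $\srelabel$ past a symbol in $\asigred$: any $\srelabel$ that obstructs the match is already guarded by an $\asigred$-symbol strictly between it and $q'$ in $u_k$, so a careful choice of the concretization together with left-linearity of $\atrs$ always allows the redex at $q'$ in $\pi(u_k)$ to be transferred to $\tilde u_k$ and the completeness argument to close.
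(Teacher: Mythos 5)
There is a genuine gap, and it sits exactly at the point you yourself flag as ``the remaining and central task'': the projected steps are \emph{not} all outermost, so the property you are trying to establish for them is simply false. Example~\ref{ex:dyncomplete} of the paper is a direct counterexample and satisfies every hypothesis of the theorem (left-linear, complete, maximal, core). Starting from $\dolabelg{\fb{\ta}{\ta}} = \bfunap{\svoodoolabel{\tf}{\bot,\bot}}{\ta}{\ta}$, after the first original step the term is $u = \bfunap{\svoodoolabel{\tf}{\bot,\bot}}{\relabel{\bot}{b}{\tb}}{\ta}$, and the second original step rewrites the $\ta$ at position $2$. That position is $\samumap$-replacing because the only symbol strictly above it carries the stale label $\tuple{\bot,\bot}$ (the pending $\srelabel{\bot}{b}$ has not yet propagated), so $\svoodoolabel{\tf}{\bot,\bot} \notin \asigred$. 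Yet your projection gives $\pi(u) = \fb{\tb}{\ta}$, whose root is a redex for $\fb{\tb}{x} \to \ta$; hence the projected step at position $2$ is not outermost. Your proposed rescue in the final paragraph --- that any obstructing $\ssrelabel$ is guarded by an $\asigred$-symbol strictly between it and $q'$ --- fails here: the obstructing symbol is an immediate child of $q' = \posemp$, there is nothing strictly between them, and the symbol at $q'$ itself is not in $\asigred$ precisely because its label has not been updated. Nor can the concretization be repaired by a ``careful choice'': coreness only supplies \emph{some} ground term of interpretation $\bot$, and no such term is $\tb$ (whose interpretation is $b$), so the redex pattern at $q'$ can never be recovered in $\tilde u_k$. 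Since the outermostness violation is real and the rewrite step is nonetheless legal in the \csTRS{}, no in-place argument can derive the contradiction you are after.

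The paper closes this gap by a standardization (permutation) argument rather than a per-step projection. Alongside $t_0 \mured t_1 \mured \cdots$ it maintains a sequence $s_0 \hookrightarrow s_1 \hookrightarrow \cdots$ of correctly labeled, $\ssrelabel$-free terms, where $\hookrightarrow$ is one $\dynlaborg{\aalg}{\semlab}{\atrs}$-step followed by its \emph{complete} relabel chain; on such terms the $\samumap$-condition genuinely coincides with outermostness. Original steps whose pattern lies in the $\ssrelabel$-free prefix are commuted to \emph{before} the pending relabel chains (this is where left-linearity is used, to carry the match across subterms that still differ), and a pending chain is only committed as a $\hookrightarrow$-step once its $\ssrelabel$ symbol actually disappears. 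In the example, the step at position $2$ is thereby performed on $\fb{\ta}{\ta}$, where it \emph{is} outermost. Your counting argument --- infinitely many original steps, via Lemma~\ref{lem:relabel:bound} and termination of the relabel fragment --- matches the paper's and survives; the in-place outermostness argument does not, and without some such reordering the proof cannot be closed.
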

\begin{proof}
  \newcommand{\sto}{\hookrightarrow}%
  \newcommand{\nto}{\stackrel{\makebox(0,0){{\scriptsize$\neg\samumap$}}}{\longrightarrow}}%
  Define $T = \dolabelg{\ter{\asig}{\setemp}}$,
  and ${\sto} = (\relcomp{ \to_{\dynlaborg{\aalg}{\semlab}{\atrs},\samumap} }{ \to^*_{\dynlabprp{\aalg}{\semlab}{\atrs},\samumap} }) \cap (T \times T)$.
  Note that the relation $\sto$ is restricted to terms which contain no $\ssrelabel$ symbols.
  Hence always the maximal number of relabeling rules is applied.
  It is clear that each ${\sto}$ rewrite step corresponds to an outermost rewrite step in the original TRS $\atrs$.
  Therefore it suffices to show that any infinite rewrite sequence
  $t = t_0 \muredr{\dynlab{\aalg}{\semlab}{\atrs}} t_1 \muredr{\dynlab{\aalg}{\semlab}{\atrs}} \ldots$
  gives rise to an infinite rewrite sequence $t = s_0 \sto s_1 \sto\ldots$.
  We prove the claim by a kind of standardization of reductions.
  We first classify the rules from $\dynlab{\aalg}{\semlab}{\atrs}$:
  \begin{align*}
  \dolabel{\ell}{\alpha} &\to \dolabel{r}{\alpha}
  \tag{\ensuremath{c1}}\label{dyn1}\\
  \dolabel{\ell}{\alpha} &\to \relabel{\interpreta{\ell}{\alpha}}{\interpreta{r}{\alpha}}{\dolabel{r}{\alpha}}
        \tag{\ensuremath{c2}}\label{dyn2}\\
  \funap{\svoodoolabel{\tf}{\funap{\slabelf{\tf}}{\vec{a},b,\vec{c}}}}{\vec{x},\relabel{b}{b'}{y},\vec{z}}
  &\to
  \funap{\svoodoolabel{\tf}{\funap{\slabelf{\tf}}{\vec{a},b',\vec{c}}}}{\vec{x},y,\vec{z}}
  \tag{\ensuremath{c3}}\label{dyn3}\\
  \funap{\svoodoolabel{\tf}{\funap{\slabelf{\tf}}{\vec{a},b,\vec{c}}}}{\vec{x},\relabel{b}{b'}{y},\vec{z}}
  &\to
  \relabel{d}{d'}{\funap{\svoodoolabel{\tf}{\funap{\slabelf{\tf}}{\vec{a},b',\vec{c}}}}{\vec{x},y,\vec{z}}}
        \tag{\ensuremath{c4}}\label{dyn4}
  \end{align*}
  For $i = 0,1,\ldots$, we analyse the steps $t_i \to_{\dynlab{\aalg}{\semlab}{\atrs},\samumap} t_{i+1}$
  and construct $s_0 \sto s_1 \sto \ldots \sto s_j$
  in such a way that $s_j \nto_{\ref{dyn2},\ref{dyn4}}^* t_{i+1}$
  where we use $\nto_{\ref{dyn2},\ref{dyn4}}$ to denote standard term rewriting 
  ignoring the replacement map $\samumap$, and using rules from \eqref{dyn2} and \eqref{dyn4} only.
  Observe that then the maximal prefix $C_{i+1}$ of $t_{i+1}$ 
  not containing $\ssrelabel$ symbols is also a prefix of $s_j$
  (since everything changed by \eqref{dyn2} and \eqref{dyn4} is `hidden' inside a $\ssrelabel$ symbol).
  We begin with $t = s_0$, and $i = j = 0$.
  For $i = 0,1,\ldots$, we consider the step $\tau_i : t_i \to_{\dynlab{\aalg}{\semlab}{\atrs},\samumap} t_{i+1}$.

  If $\tau_i$ is a step with respect to a rule from:
  \begin{enumerate}[$-$]

  \item
  \eqref{dyn2} or \eqref{dyn4}, then
  we append $\tau_i$ to the rewrite sequence 
  $s_j \nto_{\ref{dyn2},\ref{dyn4}}^* t_i$
  yielding 
  $s_j \nto_{\ref{dyn2},\ref{dyn4}}^* t_{i+1}$.
  Note that this leaves the $\sto$-rewrite sequence $s_0 \sto^{\ast} s_j$ untouched.

  \item
  \eqref{dyn1},
  then the pattern of $\tau_i$ lies entirely in $C_i$ which is also prefix of $s_j$.
  Then we append $\tau_i$ to $s_0 \sto^{\ast} s_j$ (using left-linearity of $\atrs$) 
  yielding $s_0 \sto^{\ast} s_j \mstackrel{\tau_i}{\sto} s_{j+1}$.
  We have $s_{j+1} \nto_{\ref{dyn2},\ref{dyn4}}^* t_{i+1}$
  by orthogonal projection of the steps $s_j \nto_{\ref{dyn2},\ref{dyn4}}^* t_i$
  over $s_j \mstackrel{\tau_i}{\sto} s_{j+1}$ (all steps in $s_j \nto_{\ref{dyn2},\ref{dyn4}}^* t_i$ are below the prefix $C_i$).
  
  \item
  \eqref{dyn3}, then a $\ssrelabel$ symbol `disappears'.
  We can trace this symbol back to a sequence of steps
  $\sigma_i \funin s_j \relcomp{ \nto_{\ref{dyn2}} }{ \nto_{\ref{dyn4}}^+ } s_j'$,
  that is, it must have been created in $s_j$ by a \eqref{dyn2} step,
  followed by a number of \eqref{dyn4} steps.
  We combine $\sigma_i$ and $\tau_i$ to a $\sto$ step,
  yielding $s_0 \sto^* s_j \mstackrel{\relcomp{\sigma_i}{\tau_i}}{\sto} s_{j+1}$.
  Then $s_{j+1} \nto_{\ref{dyn2},\ref{dyn4}}^* t_{i+1}$
  as the remaining steps from
  $s_j \nto_{\ref{dyn2},\ref{dyn4}}^* t_i$ are not harmed by the permutation (performing $\sigma_i$ first).

  \end{enumerate}

  \noindent
  It remains to be shown that the constructed sequence $s_0 \sto s_1 \sto s_2 \sto \ldots$ is infinite.
  This follows from the fact that an infinite number of steps in
  $t_0 \to_{\dynlab{\aalg}{\semlab}{\atrs},\samumap} t_1 \to_{\dynlab{\aalg}{\semlab}{\atrs},\samumap} \ldots$
  must be of type \eqref{dyn1} or \eqref{dyn3}.
  This is a direct consequence of the fact that
  $\to_{\ref{dyn2},\ref{dyn4}}$ is terminating
  (with every step the prefix in which rewriting is allowed gets smaller).
\end{proof}

The following example demonstrates why the completeness result for dynamic labeling (Theorem~\ref{thm:dynlab:complete}) 
is restricted to the set $\dolabelg{\ter{\asig}{\setemp}}$ of correctly labeled terms 
which do not contain $\ssrelabel$ symbols.
The point is that, although the original TRS is outermost terminating
the transformed system may in general be non-terminating 
due to the existence of `non-reachable' terms.
\begin{example}\label{ex:dynlab:noncomplete}
  We consider the following term rewriting system $\atrs$:
  \begin{align*}
    \ta &\to \tb
    &
    \fb{\tb}{y} &\to \tb
    &
    \fb{\fun{c}}{y} &\to \h{\fb{y}{y}} 
    \\
    \h{\fb{x}{\tb}} &\to \tb
    &  
    \h{\fb{x}{\fun{c}}} &\to \tb
  \end{align*}
  We explain why this TRS is outermost ground terminating.
  Without the rule $\rho : \fb{\fun{c}}{y} \to \h{\fb{y}{y}}$, 
  the system would even be terminating.
  Now note that the rule $\rho$ can only be applied once 
  to each occurrence of $\fb{\fun{c}}{\cxthole}$
  since $\h{\fb{t}{t}} \to^* \h{\fb{\fun{c}}{t'}}$ implies that $t = \fun{c}$,
  and then the rule $\h{\fb{x}{\fun{c}}} \to \tb$ has priority by the strategy of outermost rewriting.

  We define a maximal, complete, core \clabeling{} 
  $\triple{\aalg}{\asemlab}{\asigred}$ for $\atrs$
  (isomorphic to the result of the construction given in the next section) 
  where the algebra $\aalg = \{ \mit{bc}, \mit{fbc}, \bot \}$ 
  with the interpretation function defined by:
  \begin{align*}
  \bfunap{\interpret{\tf}}{x}{\mit{bc}} &= \mit{fbc}
  &  
  \interpret{\tb} & = \interpret{\fun{c}} = \mit{bc}
  \\
  \bfunap{\interpret{\tf}}{x}{y} &= \bot
  &
  \interpret{\ta} & = \funap{\interpret{\sh}}{x} = \bot 
  \end{align*}
  for all $x,y \in \aalg$ with $y \ne \mit{bc}$,
  and with $\asigred = \{ \ta , \svoodoolabel{\tf}{\mit{bc}} , \svoodoolabel{\sh}{\mit{fbc}} \}$.

  The dynamic labeling $\dynlab{\aalg}{\semlab}{\atrs}$ of $\atrs$ with respect to this \clabeling{} then includes the rules:
  \begin{align*}
    \ta &\to \relabel{\bot}{\mit{bc}}{\tb} \\
    \bfunap{\svoodoolabel{\tf}{\mit{bc},\bot}}{\fun{c}}{y} &\to \funap{\svoodoolabel{\sh}{\bot}}{\bfunap{\svoodoolabel{\tf}{\bot,\bot}}{y}{y}} \\
    \bfunap{\svoodoolabel{\tf}{\bot,\bot}}{\relabel{\bot}{\mit{bc}}{x}}{y} &\to \bfunap{\svoodoolabel{\tf}{\mit{bc},\bot}}{x}{y}
  \end{align*}
  Now the context-sensitive TRS $\dynlab{\aalg}{\semlab}{\atrs}$ 
  admits the following infinite rewrite sequence:
  \begin{align*}
    \bfunap{\svoodoolabel{\tf}{\mit{bc},\bot}}{\fun{c}}{\relabel{\bot}{\mit{bc}}{\fun{c}}} 
    &\to_{\dynlab{\aalg}{\semlab}{\atrs},\samumap}
    \funap{\svoodoolabel{\sh}{\bot}}{\bfunap{\svoodoolabel{\tf}{\bot,\bot}}{\relabel{\bot}{\mit{bc}}{\fun{c}}}{\relabel{\bot}{\mit{bc}}{\fun{c}}}}\\
    &\to_{\dynlab{\aalg}{\semlab}{\atrs},\samumap}
    \funap{\svoodoolabel{\sh}{\bot}}{\bfunap{\svoodoolabel{\tf}{\mit{bc},\bot}}{\fun{c}}{\relabel{\bot}{\mit{bc}}{\fun{c}}}}\\
    &\to_{\dynlab{\aalg}{\semlab}{\atrs},\samumap}
    \ldots
  \end{align*}
  %
  Observe that this anomaly is caused by the subterm $\relabel{\bot}{\mit{bc}}{c}$,
  which is not reachable from any term in $\dolabelg{\ter{\asig}{\setemp}}$.
\end{example}

\begin{remark}
  Theorem~\ref{thm:dynlab:complete} 
  states completeness of dynamic labeling with respect to 
  local termination on the set of terms $\dolabelg{\ter{\asig}{\setemp}}$.
  We briefly indicate how the theorem can be generalized 
  to termination on $\ter{\labelsig{\asig}}{\setemp}$
  by altering the definition of $\dynlab{\aalg}{\semlab}{\atrs}$.
  Note that
  $\dolabelg{\ter{\asig}{\setemp}} \subsetneq \ter{\labelsig{\asig}}{\setemp}$.
  In particular,
  the set $\ter{\labelsig{\asig}}{\setemp}$ includes terms that are not correctly labeled.
  The necessary modification of the definition of dynamic labeling 
  concerns the elimination of collapsing rules $\alhs \to x$.
  This can be achieved by wrapping the right-hand side into $\relabel{a}{a}{\cxthole}$
  even when the interpretations of the left and right-hand side are equal.
  Additionaly, we let the symbols $\srelabel{a}{a}$ disappear after one relabeling step.
  By an application of Theorem~\ref{thm:adapt:ohlebusch} 
  it then follows that 
  termination on $\dolabelg{\ter{\asig}{\setemp}}$ 
  coincides with termination on $\ter{\labelsig{\asig}}{\setemp}$.
\end{remark}

\section{Constructing Suitable Algebras}\label{sec:constructing}
\newcommand{\cutterms}{\sub{\sterm}{\bot}}%

We construct \cmodel{s} which are able to
recognize redex positions with respect to left-linear rules.
The construction of \cmodel{s} is similar to the construction 
of a deterministic tree automaton (DTA, \cite{tata:07}) for recognizing left-linear redexes~\cite{como:00}.
A DTA is a $\asig$-algebra $\pair{\aset}{\sinterpret}$
with a distinguished set $\aset_F \subseteq \aset$ of final states.
A term $t$ is accepted by 
the automaton whenever $\interpret{t} \in \aset_F$.
A difference with the construction of a DTA is that for the construction of a \cmodel{}
we do not distinguish final and non-final states,
but instead have a family of functions $\sisredex{\tf} \funin \aset^{\arity{\tf}} \to \bool$
for indicating the presence of a redex.
\begin{definition}[Redex-algebra]\label{def:redex:algebra}
  A \emph{redex-algebra $\pair{\aalg}{\sisredex{}}$}
  consists of a $\asig$-algebra $\aalg$
  together with a family $\afam{\sisredex{\tf}}{\tf \in \asig}$
  of functions $\sisredex{\tf} \funin \aalg^{\arity{\tf}} \to \bool$.
  The \emph{language of $\aalg$} is the set:
  \begin{align*}
    \lang{\aalg} 
    = 
    \{ 
      \f{t_1,\ldots,t_n} \in \term{\asig}{\setemp} 
      \where 
      \isredex{\tf}{\interpret{t_1},\ldots,\interpret{t_n}} = \true
    \}
  \end{align*}
  Let $\atrs$ be a TRS. A redex-algebra $\aalg$ is called \emph{sound for $\atrs$} 
  whenever $t\in\lang{\aalg}$ implies that $t$ is a redex,
  and $\aalg$\,  is called \emph{complete for $\atrs$} 
  if for all redexes $t\in\ter{\asig}{\setemp}$ we have $t\in\lang{\aalg}$.
\end{definition}
Intuitively, a redex-algebra needs to `remember' only the subterms $t_1,\ldots,t_n$
and not the term $\f{t_1,\ldots,t_n}$ itself.
To see this, consider the one-rule system: 
\[
  \f{\g{x}} \to \ta
\]
A tree automaton which can recognize redex positions for this TRS needs at least three states: 
one for indicating a redex $\f{\g{\ldots}}$,
one for $\g{\ldots}$,
and one garbage state.
For redex-algebras two states suffice:
one state for $\g{\ldots}$ and one for garbage,
and then we use $\isredex{\tf}{\g{\ldots}} = \true$ and $\false$, otherwise.

We now describe a syntactical construction of redex-algebras.
The algebras we construct are \cmodel{s}:
the \cdepth{} of a rule $\arule$ is the maximal pattern depth of a left-hand side (minus $1$),
since for recognizing the subterms of left-hand sides we may `forget' all information
that lies below the patterns.
We first define some auxiliary functions, introduced with different notations in \cite{{klop:midd:91},{huet:levy:91}}. 

\begin{definition}
  Let $\asig$ be a signature and $\avars$ a set of variables.
  We let $\bot$ be a new symbol, $\bot \not\in \asig$, 
  and we define $\cutterms = \term{\asig \cup \{\bot\}}{\setemp}$.
  The function $\scut \funin \term{\asig}{\avars} \to \cutterms$ 
  is defined such that $\cut{t}$ is the result of 
  replacing all variables in a term $t\in\term{\asig}{\avars}$ by 
  $\bot$:
   \begin{align*}
   \cut{x} &= \bot & 
   \cut{\f{t_1,\ldots,t_n}} &= \f{\cut{t_1},\ldots,\cut{t_n}}
   \end{align*}
  We define the function $\smatch \funin \cutterms \times \cutterms \to \bool$
  such that $\match{s}{t} = \true$ if $s$ can be obtained from $t$ by
  replacing subterms of $t$ by $\bot$, and we let $\match{s}{t} = \false$, otherwise.
  Further, we let $\merge{s}{t}$ be the `most general common instance' of $s$ and~$t$, 
  that is, $\smerge \funin \cutterms \times \cutterms \pto \cutterms$ is the partial function defined by:
  \begin{align*}
  \merge{\bot}{t} 
  & =
  \merge{t}{\bot}
  = t
  \\
  \merge{\f{s_1,\ldots,s_n}}{\f{t_1,\ldots,t_n}} 
  & = \f{\merge{s_1}{t_1},\ldots,\merge{s_n}{t_n}}
  \end{align*}
  Hence $\merge{s}{t}$ is undefined whenever there exists a position $\apos \in \pos{s}$ 
  such that $\symbat{s}{\apos} \in \asig$, $\symbat{t}{\apos} \in \asig$, 
  and $\symbat{s}{\apos} \ne \symbat{t}{\apos}$.
  For a term $s\in\cutterms$ and a set $T\subseteq\cutterms$
  we define the term $\shrink{s}{T}$ as the largest $t \in T$ (with respect to the number of symbols)
  such that $\match{t}{s} = \true$.
  Note that $\shrink{s}{T}$ is well-defined whenever $T$ is closed under $\smerge$ and $\bot \in T$:
  whenever two terms $t_1 \neq t_2$ of equal size match $s$
  then $\merge{t_1}{t_2}$ is larger and matches $s$.
\end{definition}

\newcommand{\sralgconstr}{F}%
\newcommand{\ralgconstr}{\funap{\sralgconstr}}%
\begin{definition}[Construction of redex-algebra]\label{def:talgebra}
  We define a mapping $\sralgconstr$ which
  constructs a redex-algebra for a given TRS~$\atrs$.
  Let $\ralgconstr{\atrs} = \pair{\aalg}{\sisredex{}}$ 
  where $\aalg$ is the smallest set such that:
  \begin{enumerate}[$-$]
  
    \item
      $\bot \in \aalg$\,,
  
    \item
      $t \in \aalg$ for every proper subterm $t$ of $\cut{\ell}$ 
      with $\ell$ a left-hand side of a rule in $\atrs$\,,
    \item
      $\merge{s}{t} \in \aalg$ whenever $s,t \in \aalg$ and $\merge{s}{t}$ is defined.
  
  \end{enumerate}
  The interpretation function $\sinterpret$ of $\aalg$, 
  and the functions $\ssisredex{}$ are defined by:
  \begin{align*}
    \funap{\interpret{\tf}}{t_1,\ldots,t_n}
      & = \shrink{\f{t_1,\ldots,t_n}}{\aalg}
    \\
    \isredex{\tf}{t_1,\ldots,t_n} 
      & =
      \begin{cases}
        \true  & \text{if $\match{\cut{\ell}}{\f{t_1,\ldots,t_n}} = \true$ for some $\ell \to r \in \atrs$} \\
        \false & \text{otherwise}
     \end{cases}
  \end{align*}
  for all function symbols $\tf\in\asig$ with arity $n$, and terms $t_1,\ldots,t_n$.

  The core of the algebra $\ralgconstr{\atrs}$, which we denote by $\ralgconstr{\atrs}_c$,
  is called the \emph{\cralg{} redex-algebra for $\atrs$}. 
  Furthermore, let $\btrs\subseteq\atrs$ be the set consisting of all left-linear rules of $\atrs$.
  Then $\ralgconstr{\btrs}_c$ is called the \emph{\sralg{} redex-algebra for $\atrs$}.
\end{definition}

Of course, if $\atrs$ is a left-linear TRS, 
then a \sralg{} redex-algebra for $\atrs$ also is a \cralg{} redex-algebra for $\atrs$.
Moreover, if $\atrs$ is a quasi-left-linear TRS, then
the minimized \sralg{} and \cralg{} redex-algebras for $\atrs$ are isomorphic.
Minimization of redex-algebras is introduced in the next section.
We now consider two examples which illustrate that $\ralgconstr{\atrs}_c$
indeed can be a proper subalgebra of $\ralgconstr{\atrs}$.

\begin{example}
  We consider the term rewriting system $\atrs$ which consists of the rules:
  \begin{align*}
    \funap{\ta}{x} &\to \fb{\funap{\ta}{x}}{x} &
    \fb{x}{\funap{\ta}{y}} &\to \tb &
    \fb{x}{\fb{y}{z}} &\to \tb &
    \fb{x}{\tb} &\to \tb
  \end{align*}
  This TRS is outermost ground terminating, but it is not outermost terminating.
  We construct the redex-algebra $\ralgconstr{\atrs} = \pair{\aalg}{\sisredex{}}$
  where $\aalg = \{ \bot, \funap{\ta}{\bot}, \fb{\bot}{\bot}, \tb \}$
  with $\funap{\interpret{\ta}}{x} = \funap{\ta}{\bot}$,
  $\bfunap{\interpret{\tf}}{x}{y} = \fb{\bot}{\bot}$
  and $\interpret{\tb} = \tb$ for all $x,y \in \aalg$.
  But note that $\bot$ is not part of the core, and 
  hence the left-linear (full) redex-algebra 
  $\ralgconstr{\atrs}_c$
  contains only the elements $\{\funap{\ta}{\bot}, \fb{\bot}{\bot}, \tb\}$.
\end{example}
\begin{example}
  Consider the term rewriting system $\atrs$ consisting of the rules:
  \begin{align*}
    \h{\h{\h{x}}} &\to \ta &
    \h{\h{\ta}} &\to \h{\h{\h{\h{\ta}}}}
  \end{align*}  
  The domain of the redex-algebra $\ralgconstr{\atrs}$ is 
  $\{\h{\h{\bot}}, \h{\bot}, \bot, \h{\ta}, \ta\}$
  with the interpretation of the symbols defined by:
  \begin{align*}
    \interpret{\ta} & = \ta
    & \funap{\interpret{\sh}}{\ta} & = \h{\ta}
    & \funap{\interpret{\sh}}{\h{\ta}} & = \h{\h{\bot}}
    & \funap{\interpret{\sh}}{\h{\h{\bot}}} & = \h{\h{\bot}}
  \end{align*}
  The values $\bot$ and $\h{\bot}$ are not part of the core,
  and hence the domain of the left-linear (full) redex-algebra 
  $\ralgconstr{\atrs}_c$ is $\{\h{\h{\bot}}, \h{\ta}, \ta\}$
  with $\isredex{\sh}{\h{\ta}} = \isredex{\sh}{\h{\h{\bot}}} = \true$, 
  and $\false$ otherwise.
\end{example}

The next example illustrates the use of the function $\smerge$ in the construction of a redex-algebra.
\begin{example}\label{ex:merge}
  We construct the \sralg{} (\cralg) redex-algebra for the TRS:
  \begin{align*}
    \fb{x}{y} &\to \funap{\ta}{\fb{\funap{\fun{c}}{x}}{y}} &
    \funap{\ta}{\fb{\funap{\fun{c}}{\funap{\fun{c}}{x}}}{y}} &\to \fun{e} \\
    \fb{x}{y} &\to \funap{\tb}{\fb{x}{\funap{\fun{c}}{y}}} &
    \funap{\tb}{\fb{x}{\funap{\fun{c}}{\funap{\fun{c}}{y}}}} &\to \fun{e}
  \end{align*}
  The subterms of $\cut{\ell}$ of linear left-hand sides $\ell$ are:
  \[  
    S 
    \defdby 
    \{ \bot,\; \fb{\funap{\fun{c}}{\funap{\fun{c}}{\bot}}}{\bot},\; 
       \fb{\bot}{\funap{\fun{c}}{\funap{\fun{c}}{\bot}}},\;
       \funap{\fun{c}}{\funap{\fun{c}}{\bot}},\; \funap{\fun{c}}{\bot}
    \}
  \]
  and closure of $S$ under $\smerge$ yields the algebra 
  $\aalg = S \join \{\fb{\funap{\fun{c}}{\funap{\fun{c}}{\bot}}}{{\funap{\fun{c}}{\funap{\fun{c}}{\bot}}}}\}$
  with the interpretation function defined by:
  \begin{align*}
    \funap{\interpret{\ta}}{x} 
    & = 
    \funap{\interpret{\tb}}{x} = \interpret{\fun{e}} = \bot  
    & \text{for all  $x \in \aset$}
    \\
    \funap{\interpret{\fun{c}}}{\funap{\fun{c}}{\bot}} 
    & = \funap{\fun{c}}{\funap{\fun{c}}{\bot}}
    \\
    \funap{\interpret{\fun{c}}}{\funap{\fun{c}}{\funap{\fun{c}}{\bot}}} 
    & 
      = \funap{\fun{c}}{\funap{\fun{c}}{\bot}}
    \\
    \funap{\interpret{\fun{c}}}{x} 
    & = \funap{\fun{c}}{\bot}
    & \text{for all  $x \not\in \{ \funap{\fun{c}}{\bot} , \funap{\fun{c}}{\funap{\fun{c}}{\bot}} \}$}
    \\
    \bfunap{\interpret{\tf}}{\funap{\fun{c}}{\funap{\fun{c}}{\bot}}}{\funap{\fun{c}}{\funap{\fun{c}}{\bot}}}
    & = \fb{\funap{\fun{c}}{\funap{\fun{c}}{\bot}}}{\funap{\fun{c}}{\funap{\fun{c}}{\bot}}}
    \\
    \bfunap{\interpret{\tf}}{\funap{\fun{c}}{\funap{\fun{c}}{\bot}}}{x}
    & = \fb{\funap{\fun{c}}{\funap{\fun{c}}{\bot}}}{\bot}
    & \text{for all  $x \ne \funap{\fun{c}}{\funap{\fun{c}}{\bot}}$}
    \\
    \bfunap{\interpret{\tf}}{x}{\funap{\fun{c}}{\funap{\fun{c}}{\bot}}}
    & = \fb{\bot}{\funap{\fun{c}}{\funap{\fun{c}}{\bot}}}
    & \text{for all  $x \ne \funap{\fun{c}}{\funap{\fun{c}}{\bot}}$}
    \\
    \bfunap{\interpret{\tf}}{x}{y} 
    & = \bot 
    & \text{otherwise}
  \end{align*}
  For the family of $\sisredex{}$ functions we obtain:
  \begin{align*}
    \isredex{\tf}{x,y} & = \true & \text{for all $x,y \in \aset$}
    \\
    \isredex{\ta}{\fb{\funap{\fun{c}}{\funap{\fun{c}}{\bot}}}{\bot}} & = \true
    \\
    \isredex{\ta}{\fb{\funap{\fun{c}}{\funap{\fun{c}}{\bot}}}{\funap{\fun{c}}{\funap{\fun{c}}{\bot}}}} & = \true
    \\
    \isredex{\tb}{\fb{\bot}{\funap{\fun{c}}{\funap{\fun{c}}{\bot}}}} & = \true 
    \\
    \isredex{\tb}{\fb{\funap{\fun{c}}{\funap{\fun{c}}{\bot}}}{\funap{\fun{c}}{\funap{\fun{c}}{\bot}}}} & = \true
  \end{align*}
  and $\sisredex{}$ returns $\false$ in all remaining cases.
  Finally, note that the core of the constructed algebra is the algebra itself,
  i.e., $\ralgconstr{\atrs}_c = \ralgconstr{\atrs}$,
  because $\fun{e} \in \asig$ with $\interpret{\fun{e}} = \bot$.
\end{example}

The following theorem states that \sralg{} (\cralg{}) redex-algebras 
recognize only redex positions (at least all redex positions).
\begin{theorem}\label{thm:talgebra}
  Let $\atrs$ be a TRS over $\asig$. The following properties hold:
  \begin{enumerate}[\em(i)]
    \item\label{talgebra:left} The \sralg{} redex-algebra for $\atrs$ is sound.
    \item\label{talgebra:full} The \cralg{} redex-algebra for $\atrs$ is complete.
    \item\label{talgebra:quasi} 
          Let $\aalg$ be the \sralg{} redex-algebra for $\atrs$.
          For all $t \in \term{\asig}{\setemp}$ we have 
          $t \in \lang{\aalg}$
          if and only if\, $t$ is a redex with respect to a left-linear rule in $\atrs$.
          Hence, if $\atrs$ is quasi-left-linear, then the \sralg{} redex-algebra 
          for $\atrs$ is sound and complete.
  \end{enumerate}
\end{theorem}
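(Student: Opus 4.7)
The plan is to reduce all three parts to a single key lemma about the constructed interpretation: for every ground term $t \in \term{\asig}{\setemp}$ and every $s \in \aalg$ with $\match{s}{t} = \true$ one has $\match{s}{\interpret{t}} = \true$, and moreover $\match{\interpret{t}}{t} = \true$. Equivalently, $\interpret{t}$ is precisely $\shrink{t}{\aalg}$, the largest element of $\aalg$ matching the ground term $t$ (viewed as a member of $\cutterms$). I would prove this by induction on $t = \f{t_1,\ldots,t_n}$: Definition~\ref{def:talgebra} gives $\interpret{t} = \shrink{\f{\interpret{t_1},\ldots,\interpret{t_n}}}{\aalg}$, and using the IH one verifies that this $\shrink$ agrees with $\shrink{t}{\aalg}$. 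The auxiliary facts needed are that $\aalg$ is closed under taking subterms (every proper subterm of $\cut{\ell}$ is placed in $\aalg$, and subterms of a $\smerge$ are themselves merges of subterms, hence in $\aalg$), together with closure under $\smerge$ itself, which makes the ``largest matching element'' characterization of $\shrink$ meaningful.

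Given the key lemma, soundness (i) is straightforward. Assume $t = \f{t_1,\ldots,t_n} \in \lang{\ralgconstr{\btrs}_c}$, where $\btrs \subseteq \atrs$ is the set of left-linear rules. Then $\match{\cut{\ell}}{\f{\interpret{t_1},\ldots,\interpret{t_n}}} = \true$ for some $\ell \to r \in \btrs$; writing $\cut{\ell} = \f{s_1,\ldots,s_n}$ this yields $\match{s_i}{\interpret{t_i}} = \true$, which composed with $\match{\interpret{t_i}}{t_i} = \true$ from the key lemma (transitivity of $\smatch$) gives $\match{\cut{\ell}}{t} = \true$. Linearity of $\ell$ forces $t = \subst{\asubst}{\ell}$, so $t$ is a redex of $\btrs \subseteq \atrs$. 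For completeness (ii), let $t = \subst{\asubst}{\ell}$ be a redex with $\ell \to r \in \atrs$, and decompose $t = \f{t_1,\ldots,t_n}$, $\cut{\ell} = \f{s_1,\ldots,s_n}$. Each $s_i$ is a proper subterm of $\cut{\ell}$, hence in the domain of $\ralgconstr{\atrs}$, and $\match{s_i}{t_i} = \true$; the key lemma upgrades this to $\match{s_i}{\interpret{t_i}} = \true$, giving $\isredex{\tf}{\interpret{t_1},\ldots,\interpret{t_n}} = \true$. Since interpretations of ground terms are reachable by construction, this argument descends to the core $\ralgconstr{\atrs}_c$ unchanged.

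Part (iii) follows by applying (i) and (ii) to the left-linear sub-TRS $\btrs$: both arguments remain valid with $\atrs$ replaced by $\btrs$, yielding that $t \in \lang{\aalg}$ exactly when $t$ is a redex of some rule in $\btrs$. For the quasi-left-linear corollary, one observes that if a non-linear left-hand side $\ell'$ of $\atrs$ is an instance of a linear $\ell \in \btrs$, then every redex with respect to $\ell'$ is also a redex with respect to $\ell$, so the sets of redexes of $\atrs$ and of $\btrs$ coincide. The main technical obstacle is the key lemma, and in particular the identity $\shrink{\f{\interpret{t_1},\ldots,\interpret{t_n}}}{\aalg} = \shrink{\f{t_1,\ldots,t_n}}{\aalg}$, which is where subterm-closure of $\aalg$ together with a $\smerge$-based maximality argument does the real work; everything else amounts to routine bookkeeping between $\ralgconstr{\atrs}$ and its core.
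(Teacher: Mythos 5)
Your proposal is correct, and for part~(ii) it is essentially the paper's own argument: the paper proves only~(ii), by induction on the structure of $\ell$ establishing $\match{\cut{\subtrmat{\ell}{p}}}{\interpret{\subtrmat{t}{p}}} = \true$ for all non-root positions $p$ of $\ell$, and the crucial step there is exactly the fact you isolate, namely that $\match{a}{u}=\true$ implies $\match{a}{\shrink{u}{\aalg}}=\true$ for $a\in\aalg$, which the paper asserts without proof and which you correctly ground in closure of $\aalg$ under $\smerge$. Where you genuinely go beyond the paper is in parts~(i) and~(iii), which it explicitly leaves to the reader, and in the packaging: your single identity $\interpret{t}=\shrink{t}{\aalg}$ delivers, besides the upgrade $\match{s}{t}\implies\match{s}{\interpret{t}}$ needed for completeness, the complementary fact $\match{\interpret{t}}{t}=\true$ that soundness requires (so that $\match{\cut{\ell}}{\f{\interpret{t_1},\ldots,\interpret{t_n}}}$ can be pushed down to $\match{\cut{\ell}}{t}$ and then, by linearity of $\ell$, to $t$ being an instance of $\ell$). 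The price is having to verify subterm-closure of the domain $\aalg$, a property the paper never states but which holds by the argument you sketch (subterms of merges are merges of subterms). Two points you should make explicit when writing this up: in the soundness step the root symbols of $\cut{\ell}$ and $t$ agree because $\ell$ is not a variable, so $\cut{\ell}\neq\bot$; and the passage to the core is harmless because $\reachable{\sinterpret}$ is by definition the restriction of $\sinterpret$, so interpretations of ground terms, and hence $\lang{\cdot}$, are unchanged.
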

\begin{proof}
  We prove (ii) and leave (i) and (iii) to the reader.

  Let $\pair{\aalg}{\sisredex{}} = \ralgconstr{\atrs}_c$ 
  with $\sralgconstr$ the mapping defined in Definition~\ref{def:talgebra}.
  Let $t\in\ter{\asig}{\setemp}$ be a redex with respect to a rule ${\ell \to r} \in \atrs$.
  
  We show $\match{\cut{\subtrmat{\ell}{p}}}{\interpret{\subtrmat{t}{p}}} = \true$ 
  for all positions $p\in\pos{\ell}\setminus\{\posemp\}$
  by induction on the structure of $\ell$.
  If $\subtrmat{\ell}{p}$ is a variable then $\cut{\subtrmat{\ell}{p}} = \bot$, 
  and we have $\match{\bot}{a} =\true$ for all $a\in\aalg$.
  If $\subtrmat{\ell}{p} = \funap{\fun{f}}{s_1,\ldots,s_n}$,
  then $\subtrmat{t}{p} = \funap{\fun{f}}{t_1,\ldots,t_n}$ 
  and by induction hypothesis we have 
  $\match{\cut{s_i}}{\interpret{t_i}} = \true$.
  Hence $\match{\cut{\funap{\fun{f}}{s_1,\ldots,s_n}}}{\funap{\fun{f}}{\interpret{t_1},\ldots,\interpret{t_n}}} = \true$
  by definition of $\scut$.
  Moreover, $\match{a}{t}=\true$ implies $\match{a}{\shrink{t}{\aalg}}=\true$ for all $a\in\aalg$.
  By definition we have
  $\interpret{\funap{\fun{f}}{t_1,\ldots,t_n}} = \shrink{\funap{\fun{f}}{\interpret{t_1},\ldots,\interpret{t_n}}}{\aalg}$,
  and hence
  $\match{\cut{\funap{\fun{f}}{s_1,\ldots,s_n}}}{\interpret{\funap{\fun{f}}{t_1,\ldots,t_n}}} = \true$.
  
  Let $t = \funap{\fun{g}}{u_1,\ldots,u_m}$ and $\ell = \funap{\fun{g}}{w_1,\ldots,w_m}$.
  Then we know $\match{\cut{w_i}}{\interpret{u_i}} = \true$.
  Hence
  $\match{\cut{\ell}}{\funap{\fun{g}}{\interpret{u_1},\ldots,\interpret{u_m}}} = \true$
  and $t\in\lang{\aalg}$. 
\end{proof}

\section{Minimizing Algebras}\label{sec:minimizing}

In this section we are concerned with the minimization of redex-algebras.
The algorithm is similar to the minimization of deterministic tree automata, see~\cite{tata:07}.
For the set of 291 TRSs of
the outermost termination competition of 2008~\cite{term:comp:08}, 
the redex-algebras constructed according to Definition~\ref{def:talgebra} 
have an average size of $4.6$ elements.
After an application of the minimization algorithm described here, 
the average size falls to $3.4$, a reduction of $27\%$.
This reduction has a polynomial influence on the number of rules of the transformed system.

\begin{definition}
  Two core redex-algebras $\aalg_1$, $\aalg_2$ 
  are called \emph{equivalent} if \mbox{$\lang{\aalg_1} = \lang{\aalg_2}$}.
\end{definition}

\begin{lemma}\label{lem:equivalent}
  Let $\aalg_1$, $\aalg_2$ be equivalent, core redex-algebras.
  Then $\aalg_1$ is sound or complete if and only if $\aalg_2$ has the respective property.
  \qed
\end{lemma}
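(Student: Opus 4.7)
The plan is essentially to observe that both soundness and completeness, as defined in Definition~\ref{def:redex:algebra}, are set-theoretic conditions phrased purely in terms of the language $\lang{\aalg}$, and then to unpack the definition of equivalence.

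First I would restate soundness as the inclusion $\lang{\aalg} \subseteq \{t \in \term{\asig}{\setemp} \where t \text{ is a redex w.r.t.\ } \atrs\}$ and completeness as the reverse inclusion $\{t \in \term{\asig}{\setemp} \where t \text{ is a redex w.r.t.\ } \atrs\} \subseteq \lang{\aalg}$. Under this reformulation, both properties depend on $\aalg$ only through $\lang{\aalg}$.

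Then, given the hypothesis $\lang{\aalg_1} = \lang{\aalg_2}$, I substitute one language for the other in each inclusion to obtain both implications: $\aalg_1$ sound iff $\aalg_2$ sound, and $\aalg_1$ complete iff $\aalg_2$ complete. No use is made of the assumption that the algebras are core; this hypothesis is inherited from the surrounding discussion (which needs it to make equivalence a well-behaved notion, and for Lemma~\ref{lem:core} elsewhere), but it plays no role in this proof.

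There is no real obstacle: the statement is a direct rewriting of the definitions. The only thing worth making explicit in the final write-up is the two-line chain of set inclusions for each property, after which the biconditional follows immediately. A one-sentence proof of the form ``Immediate from Definition~\ref{def:redex:algebra}, since both soundness and completeness are conditions on $\lang{\aalg}$ alone'' would in principle suffice.
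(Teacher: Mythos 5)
Your proposal is correct and matches the paper, which states this lemma with no proof at all (it is marked as immediate): both soundness and completeness are defined purely as inclusions between $\lang{\aalg}$ and the set of ground redexes, so equivalence ($\lang{\aalg_1} = \lang{\aalg_2}$) transfers them directly. Your observation that the ``core'' hypothesis is inert here --- it only enters because the paper defines equivalence for core redex-algebras --- is also accurate.
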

For a given core redex-algebra we now construct a minimal equivalent algebra.
The difference to the minimization of tree automata from~\cite{tata:07}
lies in the initial equivalence~$\classes_0$.
For tree automata this initial equivalence 
consists of two partitions, the final and the non-final states.
In our setting two states are initially equivalent
if they cannot be distinguished using the $\sisredex{}$ functions, that is,
$\isredex{\tf}{\vec{x},a,\vec{y}} = \isredex{\tf}{\vec{x},b,\vec{y}}$
for each symbol $\tf \in \asig$ and each assignment of $\vec{x}$ and $\vec{y}$.
This can yield any number of partitions between $1$ and $\setsize{\aset}$.

\begin{definition}[Minimization of redex-algebra]\label{def:minimization}
  Let $\pair{\aalg}{\sisredex{}}$ be a core redex-algebra over $\asig$.
  We define equivalence relations~$\classes_i$ for $i \in \nat$ on the elements of $\aalg$.
  Initially two elements $a,b \in \aalg$ are equivalent, $a \mathrel{\classes_0} b$, 
  if: 
  \[
    \isredex{\tf}{\vec{x},a,\vec{y}} = \isredex{\tf}{\vec{x},b,\vec{y}}
  \]
  for all symbols $\tf \in \asig_n$, $j \in \{1,\ldots,n\}$,
  $\vec{x} \in \aalg^{j-1}$, and $\vec{y} \in \aalg^{n-j}$.
  Then for $i = 0,1,\ldots$ and $a, b \in \aalg$ we define $a \mathrel{\classes_{i+1}} b$
  if $a \mathrel{\classes_i} b$ holds and:
  \[
    \funap{\interpret{\tf}}{\vec{x},a,\vec{y}} 
    \mathrel{\classes_i} 
    \funap{\interpret{\tf}}{\vec{x},b,\vec{y}}
  \]
  for all $n$-ary symbols $\tf \in \asig$, $j \in \{1,\ldots,n\}$, $\vec{x} \in \aalg^{j-1}$, 
  and $\vec{y} \in \aalg^{n-j}$.
  The process halts when $\classes_{i+1} = \classes_i$ for some $i \in \nat$,
  and then we define $\classes = \classes_i$.
  Let $\classof{a}$ denote the equivalence class of $a \in \aalg$ with respect to~$\classes$.
  The \emph{minimized redex-algebra of $\aalg$}, denoted $\algmin{\aalg}$,
  is defined as
  $\algmin{\aalg} = \triple{\classes}{\sinterpret^\classes}{\sisredex{}^\classes}$
  where:
  \begin{align*}
    \funap{\interpret{\tf}^\classes}{\classof{a_1},\ldots,\classof{a_n}} 
    = \classof{\f{a_1,\ldots,a_n}}
    \\
    \funap{\sisredex{\tf}^\classes}{\classof{a_1},\ldots,\classof{a_n}} 
    = \isredex{\tf}{a_1,\ldots,a_n}
  \end{align*}
  for each symbol $\tf \in \asig$ with arity $n$.
\end{definition}

\begin{lemma}\label{lem:minimization}
  Let $\aalg$ be a core redex-algebra, 
  then $\aalg$ is equivalent to $\algmin{\aalg}$.
  \qed
\end{lemma}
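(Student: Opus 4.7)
The plan is to prove $\lang{\aalg} = \lang{\algmin{\aalg}}$ by adapting the classical tree-automaton minimization argument~\cite{tata:07} to our setting where acceptance is replaced by the family $\afam{\sisredex{\tf}}{\tf\in\asig}$.

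First I would verify that the construction of $\classes$ from Definition~\ref{def:minimization} is well-defined. Since $\classes_0 \supseteq \classes_1 \supseteq \classes_2 \supseteq \ldots$ is a decreasing chain of equivalences on the finite set $\aalg$ (finiteness follows from $\aalg$ being core together with the fact that the construction in Definition~\ref{def:talgebra} produces a finite carrier), it stabilizes, so there exists some least $i$ with $\classes_{i+1}=\classes_i$, and I set $\classes = \classes_i$.

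Next, I would establish the key congruence property: for all $n$-ary $\tf\in\asig$, all $j\in\{1,\ldots,n\}$, all $\vec{x}\in\aalg^{j-1}$, $\vec{y}\in\aalg^{n-j}$, and all $a,b\in\aalg$ with $a\classes b$, we have (i) $\funap{\interpret{\tf}}{\vec{x},a,\vec{y}} \classes \funap{\interpret{\tf}}{\vec{x},b,\vec{y}}$, and (ii) $\isredex{\tf}{\vec{x},a,\vec{y}} = \isredex{\tf}{\vec{x},b,\vec{y}}$. Property~(i) is immediate since $\classes = \classes_{i+1}$, and property~(ii) follows since $\classes \subseteq \classes_0$. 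From this single-argument congruence I get, by an $n$-fold replacement argument, that if $a_k \classes b_k$ for all $k=1,\ldots,n$, then $\funap{\interpret{\tf}}{a_1,\ldots,a_n} \classes \funap{\interpret{\tf}}{b_1,\ldots,b_n}$ and $\isredex{\tf}{a_1,\ldots,a_n} = \isredex{\tf}{b_1,\ldots,b_n}$. This is precisely what is needed for the operations $\funap{\interpret{\tf}^\classes}{\classof{a_1},\ldots,\classof{a_n}} = \classof{\funap{\interpret{\tf}}{a_1,\ldots,a_n}}$ and $\funap{\sisredex{\tf}^\classes}{\classof{a_1},\ldots,\classof{a_n}} = \isredex{\tf}{a_1,\ldots,a_n}$ of $\algmin{\aalg}$ to be well-defined (independent of the chosen representatives).

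Now I would prove by induction on the structure of a ground term $t\in\ter{\asig}{\setemp}$ that $\interpret{t}^{\algmin{\aalg}} = \classof{\interpret{t}}$. The inductive step uses the definition of $\interpret{\tf}^\classes$ directly: if $t = \f{t_1,\ldots,t_n}$ then $\interpret{t}^{\algmin{\aalg}} = \funap{\interpret{\tf}^\classes}{\classof{\interpret{t_1}},\ldots,\classof{\interpret{t_n}}} = \classof{\funap{\interpret{\tf}}{\interpret{t_1},\ldots,\interpret{t_n}}} = \classof{\interpret{t}}$. Combining this with the definition of $\sisredex{\tf}^\classes$, for any $t = \f{t_1,\ldots,t_n}$ we then have
\[
  t \in \lang{\algmin{\aalg}}
  \iff
  \funap{\sisredex{\tf}^\classes}{\classof{\interpret{t_1}},\ldots,\classof{\interpret{t_n}}} = \true
  \iff
  \isredex{\tf}{\interpret{t_1},\ldots,\interpret{t_n}} = \true
  \iff
  t \in \lang{\aalg}\,,
\]
which yields the desired language equality. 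The only subtle point — and the main obstacle — is verifying the congruence property above, because the construction of $\classes$ refines only by single-position replacements; one has to observe that $\classes$ is stable under the one-argument-at-a-time refinement, and then iterate to get the full $n$-argument congruence.
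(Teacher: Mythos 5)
Your proof is correct and supplies exactly the standard tree-automaton-minimization argument that the paper leaves implicit (the lemma is stated with its proof omitted): the fixed-point property of $\classes$ gives single-argument congruence for both $\sinterpret$ and $\sisredex{}$, transitivity lifts this to all $n$ arguments so that $\algmin{\aalg}$ is well-defined, and the structural induction $\interpret{t}^{\algmin{\aalg}} = \classof{\interpret{t}}$ then yields $\lang{\aalg} = \lang{\algmin{\aalg}}$. The only cosmetic additions worth making are that $\algmin{\aalg}$ is itself core (immediate from $\interpret{t}^{\algmin{\aalg}} = \classof{\interpret{t}}$ and coreness of $\aalg$, and needed because the paper's notion of equivalence is defined only for core redex-algebras), and that termination of the refinement chain needs only finiteness of the carrier $\aalg$, which is assumed throughout the paper rather than being a consequence of Definition~\ref{def:talgebra}.
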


\begin{example}
  We consider the TRS $\atrs$ consisting of the following three rules:
  \begin{align*}
    \f{\funap{\fun{i}}{\ta}} &\to \ta &
    \f{\funap{\fun{j}}{\ta}} &\to \ta &
    \f{\ta} &\to \ta
  \end{align*}
  %
  The \sralg{} (\cralg) redex-algebra for $\atrs$ is
  $\aalg = \{\ta,\funap{\fun{i}}{\ta},\funap{\fun{j}}{\ta},\bot\}$ 
  with the interpretation
  $\interpret{\ta} = \ta$, 
  $\funap{\interpret{\fun{i}}}{\ta} = \funap{\fun{i}}{\ta}$,
  $\funap{\interpret{\fun{j}}}{\ta} = \funap{\fun{j}}{\ta}$,
  and the interpretation is $\bot$ in all non-listed cases;
  $\isredex{\tf}{x} = \true$ for all $x \ne \bot$,
  and $\false$, otherwise.

  The minimization algorithm starts with $\classes_0 = \{\{\ta,\funap{\fun{i}}{\ta},\funap{\fun{j}}{\ta}\},\{\bot\}\}$
  as initial equivalence, since $\bot$ can be distinguished from the other elements
  ($\isredex{\tf}{\bot} = \false$).
  The first iteration of the algorithm yields
  $\classes_1 = \{\{\ta\},\{\funap{\fun{i}}{\ta},\funap{\fun{j}}{\ta}\},\{\bot\}\}$
  as $\funap{\interpret{\fun{i}}}{\ta} = \funap{\fun{i}}{\ta}$ 
  whereas $\funap{\interpret{\fun{i}}}{\funap{\fun{i}}{\ta}} = \funap{\interpret{\fun{i}}}{\funap{\fun{j}}{\ta}} = \bot$.
  The elements $\funap{\fun{i}}{\ta}$ and $\funap{\fun{j}}{\ta}$ 
  are indistinguishable, and so in the second iteration we obtain $\classes_2 = \classes_1$.
  Thus the elements $\funap{\fun{i}}{\ta}$ and $\funap{\fun{j}}{\ta}$ are identified
  and we obtain an algebra that has one element less than the algebra we started with.
\end{example}

\section{Constructing Minimal and Maximal \clabeling{s}}\label{sec:minmax}

In the previous sections we have constructed and minimized 
redex-algebras for recognizing redex positions.
For completing the transformation we still have to explain 
how to construct \clabeling{s} from the redex-algebras.

In minimal labeling symbols are marked with a $\star$ if they correspond to redex positions
and stay unlabeled otherwise.
This labeling creates a small signature and thereby results 
in a small number of rules of the transformed system.
\begin{definition}\label{def:minlabel}
  Let $\atrs$ be a TRS over $\asig$, and $\aalg$ a redex-algebra.
  The \emph{minimal labeling with respect to $\aalg$}
  is the \clabeling{} $\triple{\aalg}{\semlab}{\asigred}$ 
  defined for each $n$-ary symbol $\tf \in \asig_{\topsymb}$ by:
  \begin{align*}
    \labelf{\tf}{a_1,\ldots,a_{n}} =
    \begin{cases}
       \star    &\text{if $\isredex{\tf}{a_1,\ldots,a_{n}} = \true$} \\
       \lstemp  &\text{otherwise}
    \end{cases}
  \end{align*}
  The set of redex symbols is defined by $\asigred = \{\svoodoolabel{\tf}{\star} \where \tf \in \asig\}$.
\end{definition}
\begin{theorem}\label{thm:minsound}
  Let $\atrs$ be a TRS, and $\aalg$ a sound redex-algebra for $\atrs$.
  The minimal labeling with respect to $\aalg$ is a sound \clabeling{} for $\atrs$.
\end{theorem}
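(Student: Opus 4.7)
The plan is to unfold the definitions and observe that the statement reduces almost immediately to the soundness of the redex-algebra. First I would take an arbitrary ground term $t \in \term{\asig}{\setemp}$ with $\rootsymb{\dolabelg{t}} \in \asigred$ and show that $t$ must be a redex with respect to $\atrs$; this is precisely what is required of a sound \clabeling{} by Definition~\ref{def:clabeling}.

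Since $t$ is ground and $\rootsymb{\dolabelg{t}} \in \asigred$, and $\asigred = \{\svoodoolabel{\tf}{\star} \where \tf\in\asig\}$ excludes $\topsymb$-labeled symbols, we must have $t = \f{t_1,\ldots,t_n}$ for some $\tf \in \asig_n$ with $t_1,\ldots,t_n \in \term{\asig}{\setemp}$. By the definition of $\dolabelg{\cdot}$ (Definition~\ref{def:labeling}) applied to the ground case, the root label of $\dolabelg{t}$ is $\alab = \labelf{\tf}{\interpret{t_1},\ldots,\interpret{t_n}}$, and the condition $\svoodoolabel{\tf}{\alab}\in\asigred$ forces $\alab = \star$. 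Now by Definition~\ref{def:minlabel} of the minimal labeling, $\labelf{\tf}{\interpret{t_1},\ldots,\interpret{t_n}} = \star$ holds if and only if $\isredex{\tf}{\interpret{t_1},\ldots,\interpret{t_n}} = \true$.

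The last equation, together with Definition~\ref{def:redex:algebra}, is precisely the statement that $t = \f{t_1,\ldots,t_n}\in\lang{\aalg}$. Since $\aalg$ is by assumption a sound redex-algebra for $\atrs$, membership in $\lang{\aalg}$ entails that $t$ is a redex with respect to $\atrs$. This finishes the verification that the minimal labeling with respect to $\aalg$ satisfies clause~(i) of Definition~\ref{def:clabeling}, i.e., it is sound.

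There is no real obstacle here: the argument is a direct chase through Definitions~\ref{def:redex:algebra}, \ref{def:clabeling}, \ref{def:labeling}, and \ref{def:minlabel}. The only subtlety worth remarking is that symbols at positions labeled by $\topsymb$ never lie in $\asigred$, so we do not need to check the case where the root of $t$ is $\topsymb$; and that the argument is pointwise in $t$, so no induction on term structure is needed.
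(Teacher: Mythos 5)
Your proof is correct and follows essentially the same route as the paper's own argument: unfold the definition of minimal labeling to get $\isredex{\tf}{\interpret{t_1},\ldots,\interpret{t_n}} = \true$, conclude $t\in\lang{\aalg}$, and apply soundness of the redex-algebra. The extra remarks about $\topsymb$ and groundness are harmless but not needed, since the soundness clause of Definition~\ref{def:clabeling} already quantifies over ground terms in $\term{\asig}{\setemp}$.
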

\begin{proof}
  Let $t = \f{t_1,\ldots,t_{n}} \in \term{\asig}{\setemp}$ such that $\rootsymb{\dolabelg{t}} \in \asigred$.
  Then by definition of minimal labeling 
  we have $\isredex{\tf}{\interpret{t_1},\ldots,\interpret{t_n}} = \true$.
  Hence $t\in\lang{\aalg}$ and thus $t$ is a redex by definition of sound redex-algebra.
\end{proof}

The construction and minimization of redex-algebras 
(Definitions~\ref{def:talgebra} and~\ref{def:minimization})
give rise to sound minimal \clabeling{s} (Theorem~\ref{thm:talgebra},
Lemmas~\ref{lem:minimization} and~\ref{lem:equivalent}, and Theorem~\ref{thm:minsound}).
In combination with Theorems~\ref{thm:cxtext:sound} and~\ref{thm:dynlab:sound}
this provides us with sound transformations for proving outermost termination:
\begin{corollary}
  Let $\atrs$ be a TRS,
  and let $\triple{\aalg}{\semlab}{\asigred}$ be the minimal labeling 
  with respect to the minimized \sralg{} redex-algebra for $\atrs$.
  Then $\atrs$ is outermost ground terminating whenever
  the dynamic context extension $\cxtext{\aalg}{\semlab}{\atrs}$ 
  or the dynamic labeling $\dynlab{\aalg}{\semlab}{\atrs}$ is terminating.
\end{corollary}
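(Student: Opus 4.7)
The plan is to obtain the corollary as a direct chaining of the soundness results accumulated throughout the paper, rather than proving anything substantively new. First I would invoke Theorem~\ref{thm:talgebra}(\ref{talgebra:left}) to conclude that the \sralg{} redex-algebra $\ralgconstr{\btrs}_c$ associated to $\atrs$ (where $\btrs$ is the subset of left-linear rules) is a sound redex-algebra for $\atrs$ in the sense of Definition~\ref{def:redex:algebra}. Note that soundness is precisely the direction needed for the corollary; completeness (which only holds for the \cralg{} version or in the quasi-left-linear case) is irrelevant here.

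Next I would pass from the sound redex-algebra to its minimization. Lemma~\ref{lem:minimization} gives that $\algmin{\aalg}$ is equivalent to $\aalg$ (with $\aalg = \ralgconstr{\btrs}_c$), i.e., they accept the same language. Lemma~\ref{lem:equivalent} then transfers soundness across this equivalence, so the minimized redex-algebra is still sound for $\atrs$. This step is purely mechanical, but it is the point where one has to verify that the construction of Definition~\ref{def:talgebra} indeed produces a core algebra (as required by Lemma~\ref{lem:equivalent}), which is immediate since we take $\ralgconstr{\cdot}_c$.

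From the sound minimized redex-algebra, Theorem~\ref{thm:minsound} yields that the induced minimal labeling $\triple{\aalg}{\semlab}{\asigred}$ defined in Definition~\ref{def:minlabel} is a sound \clabeling{} for $\atrs$ in the sense of Definition~\ref{def:clabeling}. At this point the hypothesis of both Theorem~\ref{thm:cxtext:sound} and Theorem~\ref{thm:dynlab:sound} is satisfied.

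Finally I would apply the two soundness theorems in the form of a case split: if $\cxtext{\aalg}{\semlab}{\atrs}$ is terminating, Theorem~\ref{thm:cxtext:sound} gives outermost ground termination of $\atrs$; and if $\dynlab{\aalg}{\semlab}{\atrs}$ is terminating, Theorem~\ref{thm:dynlab:sound} yields the same conclusion. There is no real obstacle here, since every ingredient has been established earlier; the corollary is essentially a summary of the pipeline. The only place one might pause is in checking that the minimal \clabeling{} of Definition~\ref{def:minlabel} does fit the signature convention $\asig_\topsymb$ expected by Definitions~\ref{def:cxtext} and~\ref{def:dynlab}, which is immediate from the definition (the symbol $\topsymb$ is assigned an unmarked label just as any non-redex-rooted function symbol).
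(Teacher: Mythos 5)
Your proposal is correct and matches the paper's own justification exactly: the paper presents this corollary without a separate proof, precisely as the chain Theorem~\ref{thm:talgebra}(i) $\Rightarrow$ Lemmas~\ref{lem:minimization} and~\ref{lem:equivalent} $\Rightarrow$ Theorem~\ref{thm:minsound} $\Rightarrow$ Theorems~\ref{thm:cxtext:sound} and~\ref{thm:dynlab:sound}. Nothing is missing.
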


Minimal labeling is sound and efficient, but it is not complete 
(not even for left-linear TRSs where the \sralg{} redex-algebra is complete):
\begin{example}\label{ex:noncomplete}
  Let $\atrs$ be the term rewriting system 
  consisting of the rules:
  \begin{align*}
    \infinite{x} &\to \cons{x}{\infinite{\suc{x}}} 
    & \cons{\suc{x}}{y} &\to \fun{nil}
  \end{align*}
  Obviously, $\atrs$ is outermost terminating.
  The minimized \sralg{} redex-algebra for~$\atrs$ is:
  \begin{align*}
  \aalg = \{ \ssuc , \bot \}
  &&
  \funap{\interpret{\ssuc}}{x} = \ssuc 
  &&
  \interpret{\fun{nil}} = \funap{\interpret{\sinfinite}}{x} = \bfunap{\interpret{\scons}}{x}{y} = \bot
  \end{align*}
  for all $x,y\in\aalg$.
  The \cdepth{} of both rules (with respect to $\aalg$) is $0$.
  Using minimal labeling we obtain
  $\labelf{\scons}{\ssuc,x} = \star$ and $\labelf{\sinfinite}{x} = \star$  for all $x \in \aset$,
  and other symbols are left unmarked~($\epsilon$).
  Thus the set of redex symbols is $\asigred = \{ \rmark{\sinfinite} , \rmark{\scons} \}$.
  
  The dynamic context extension $\cxtext{\aalg}{\semlab}{\atrs}$ of $\atrs$ 
  with respect to the \clabeling~$\triple{\aalg}{\semlab}{\asigred}$ consists of the following rules,
  the first two of which 
  arise from the $\sinfinite$-rule,
  with the values $\bot$ and $\ssuc$ assigned to $x$ respectively:
  \begin{align*}
    \funap{\svoodoolabel{\sinfinite}{\star}}{x} 
    &\to \cons{x}{\funap{\svoodoolabel{\sinfinite}{\star}}{\suc{x}}}
    \\
    \funap{\svoodoolabel{\sinfinite}{\star}}{x} 
    &\to \bfunap{\svoodoolabel{\scons}{\star}}{x}{\funap{\svoodoolabel{\sinfinite}{\star}}{\suc{x}}}
    \\
    \bfunap{\svoodoolabel{\scons}{\star}}{\suc{x}}{y} 
    &\to \fun{nil}
  \end{align*}
  The replacement map is defined by 
  $\amumap{\svoodoolabel{\sinfinite}{\star}} = \amumap{\svoodoolabel{\scons}{\star}} = \setemp$. 
  
  Now $\cxtext{\aalg}{\semlab}{\atrs}$ admits an infinite derivation:
  \begin{gather*}
    \funap{\svoodoolabel{\sinfinite}{\star}}{x} 
    \to_{\cxtext{\aalg}{\semlab}{\atrs},\samumap}
    \cons{x}{\funap{\svoodoolabel{\sinfinite}{\star}}{\suc{x}}}
    \to_{\cxtext{\aalg}{\semlab}{\atrs},\samumap}
    \cons{x}{\cons{\suc{x}}{\funap{\svoodoolabel{\sinfinite}{\star}}{\suc{\suc{x}}}}} \to \ldots
  \end{gather*}
  The third term is labeled incorrectly, as the inner occurrence of $\scons$ should be marked. 
  The reason is that in the second step, instead of 
  the first $\sinfinite^\star$-rule, the second
  should have been applied;
  however, the left-hand side $\funap{\svoodoolabel{\sinfinite}{\star}}{x}$ contains too little information
  to `decide' what the labeling of the right-hand side should be.
\end{example}

This motivates the use of maximal labeling for which
correct labeling is preserved under rewriting.
Function symbols are labeled with the interpretation of their arguments:
\begin{definition}\label{def:maxlabel}
  Let $\atrs$ be a TRS over $\asig$, and let $\aalg$ be a redex-algebra for $\atrs$.
  The \emph{maximal labeling with respect to $\aalg$}
  is the \clabeling{} $\triple{\aalg}{\semlab}{\asigred}$ 
  defined for each $n$-ary $\tf \in \asig_{\topsymb}$ by:
  \begin{align*}
    \labelf{\tf}{a_1,\ldots,a_n} = \tuple{a_1,\ldots,a_n}
  \end{align*}
  The set of redex symbols is defined by:
  $\asigred = \{ \svoodoolabel{\tf}{\tuple{a_1,\ldots,a_n}} \where \isredex{\tf}{a_1,\ldots,a_n} = \true \}$.
\end{definition}

\begin{theorem}\label{thm:maxcomplete}
  Let $\atrs$ be a TRS, and let $\aalg$ be a redex-algebra for $\atrs$.
  The maximal labeling with respect to $\aalg$ is a maximal \clabeling{} for $\atrs$,
  and it is sound, complete, and core whenever $\aalg$ has the respective property.
\end{theorem}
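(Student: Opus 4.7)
The proof is essentially a definition chase, so the plan is to verify each of the four properties stated in the theorem separately, in increasing order of subtlety.

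First I would observe that maximality is immediate: Definition~\ref{def:maxlabel} stipulates $\labelf{\tf}{a_1,\ldots,a_n} = \tuple{a_1,\ldots,a_n}$ for every $\tf \in \asig_{\topsymb}$, which is exactly the condition of Definition~\ref{def:clabeling}(iii). Similarly, the ``core'' clause is immediate from Definition~\ref{def:clabeling}(iv), since the maximal labeling does not alter the underlying $\asig$-algebra $\aalg$, and the only addition is the constant interpretation of $\topsymb$, whose value already lies in $\aalg$ and is therefore reachable by some ground term over $\asig$ when $\aalg$ is core.

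Next I would handle soundness. Assume $\aalg$ is a sound redex-algebra for $\atrs$, and let $t = \f{t_1,\ldots,t_n} \in \term{\asig}{\setemp}$ be a ground term with $\rootsymb{\dolabelg{t}} \in \asigred$. By the definition of the maximal labeling, $\rootsymb{\dolabelg{t}} = \svoodoolabel{\tf}{\tuple{\interpret{t_1},\ldots,\interpret{t_n}}}$, and by the definition of $\asigred$ in Definition~\ref{def:maxlabel} this forces $\isredex{\tf}{\interpret{t_1},\ldots,\interpret{t_n}} = \true$. Hence $t \in \lang{\aalg}$, so by soundness of $\aalg$ as a redex-algebra (Definition~\ref{def:redex:algebra}), $t$ is a redex with respect to $\atrs$. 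This is precisely what is required by Definition~\ref{def:clabeling}(i).

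Completeness is symmetric: assume $\aalg$ is a complete redex-algebra and let $t = \f{t_1,\ldots,t_n} \in \term{\asig}{\setemp}$ be a redex. By completeness of $\aalg$ we have $t \in \lang{\aalg}$, i.e.\ $\isredex{\tf}{\interpret{t_1},\ldots,\interpret{t_n}} = \true$. Therefore by the definition of $\asigred$ in the maximal labeling, $\svoodoolabel{\tf}{\tuple{\interpret{t_1},\ldots,\interpret{t_n}}} \in \asigred$, and this symbol is precisely $\rootsymb{\dolabelg{t}}$. No step requires more than unfolding; there is no real obstacle, the only mild subtlety being to keep track that the label $\tuple{\interpret{t_1},\ldots,\interpret{t_n}}$ of the root symbol of $\dolabelg{t}$ coincides with the tuple of argument values used to evaluate $\sisredex{\tf}$.
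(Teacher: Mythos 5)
Your proof is correct and proceeds exactly as the paper's own proof: maximality and coreness are read off from the definitions, and soundness and completeness follow by unfolding the definitions of the maximal labeling, of $\lang{\aalg}$, and of sound/complete redex-algebras. The only cosmetic difference is that the paper delegates the soundness case to the (identical) argument already given for minimal labeling in Theorem~\ref{thm:minsound}, whereas you spell it out.
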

\begin{proof}
  Maximality of the maximal labeling is immediate from the definition.
  Let $\aalg$ be a complete redex-algebra.
  Let $t = \f{t_1,\ldots,t_{n}} \in \term{\asig}{\setemp}$ be a redex.
  Then by definition of complete redex-algebra
  $\isredex{\tf}{\interpret{t_1},\ldots,\interpret{t_n}} = \true$,
  and it follows that $\rootsymb{\dolabelg{t}} = \svoodoolabel{\tf}{\tuple{\interpret{t_1},\ldots,\interpret{t_n}}} \in \asigred$.
  Hence the \clabeling{} is complete.
  Analogous to the proof of Theorem~\ref{thm:minsound},
  we obtain that maximal labeling is sound whenever the redex-algebra $\aalg$ is sound.
  Note that the remaining claim concerning coreness is immediate by definition.
\end{proof}

The construction and minimization of redex-algebras 
(Definitions~\ref{def:talgebra} and~\ref{def:minimization})
give rise to sound and complete maximal \clabeling{s} (Theorem~\ref{thm:talgebra},
Lemmas~\ref{lem:minimization} and~\ref{lem:equivalent}, and Theorem~\ref{thm:maxcomplete}).
In combination with Theorems~\ref{thm:cxtext:sound}, 
\ref{thm:abstract:maxcomplete} and~\ref{thm:dynlab:sound} 
this provides us with sound transformations for proving
outermost termination for arbitrary TRSs.
For quasi-left-linear TRSs dynamic context extension is both sound and complete.
%
%
\begin{corollary}
  Let $\atrs$ be a TRS,
  and let $\triple{\aalg}{\semlab}{\asigred}$ be the maximal labeling for the minimized \sralg{} redex-algebra for $\atrs$.
  Then $\atrs$ is outermost ground terminating whenever
  the dynamic context extension $\cxtext{\aalg}{\semlab}{\atrs}$ 
  or the dynamic labeling $\dynlab{\aalg}{\semlab}{\atrs}$ is terminating.
  Moreover, if $\atrs$ is quasi-left-linear, 
  then $\atrs$ is outermost ground terminating 
  if and only if the dynamic context extension 
  $\cxtext{\aalg}{\semlab}{\atrs}$ terminates.
\end{corollary}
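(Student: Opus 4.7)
The statement is essentially an assembly of results established earlier, and my plan is to chain them together in two directions.

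\emph{First direction (soundness, arbitrary TRS).} Let $\aalg'$ be the (un-minimized) \sralg{} redex-algebra for $\atrs$, constructed as in Definition~\ref{def:talgebra}. By Theorem~\ref{thm:talgebra}(i), $\aalg'$ is sound. The minimized redex-algebra $\aalg = \algmin{\aalg'}$ is, by Lemma~\ref{lem:minimization}, equivalent to $\aalg'$, and by Lemma~\ref{lem:equivalent} soundness is preserved under equivalence, so $\aalg$ is a sound redex-algebra. Applying Theorem~\ref{thm:maxcomplete}, the maximal labeling $\triple{\aalg}{\semlab}{\asigred}$ is then a sound \clabeling{} for $\atrs$. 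Feeding this \clabeling{} into Theorem~\ref{thm:cxtext:sound} (respectively Theorem~\ref{thm:dynlab:sound}) yields: if $\cxtext{\aalg}{\semlab}{\atrs}$ (respectively $\dynlab{\aalg}{\semlab}{\atrs}$) is terminating, then $\atrs$ is outermost ground terminating. This handles the first claim.

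\emph{Second direction (completeness, quasi-left-linear case).} Suppose additionally that $\atrs$ is quasi-left-linear. Then by Theorem~\ref{thm:talgebra}(iii), the \sralg{} redex-algebra $\aalg'$ is not only sound but also complete, and by its construction in Definition~\ref{def:talgebra} it equals its own core (one works with $\ralgconstr{\btrs}_c$). Equivalence of $\aalg$ and $\aalg'$ together with Lemma~\ref{lem:equivalent} preserves both soundness and completeness, and the minimization of Definition~\ref{def:minimization} preserves coreness since every equivalence class $\classof{a}$ has a ground-term representative inherited from $\aalg'$. Hence $\aalg$ is a sound, complete, and core redex-algebra. By Theorem~\ref{thm:maxcomplete}, the resulting maximal labeling $\triple{\aalg}{\semlab}{\asigred}$ is simultaneously sound, complete, maximal, and core. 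An appeal to Theorem~\ref{thm:abstract:maxcomplete} then gives the converse direction: if $\atrs$ is outermost ground terminating, then $\cxtext{\aalg}{\semlab}{\atrs}$ is terminating. Combined with the first direction, this yields the stated equivalence for dynamic context extension.

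\emph{Expected obstacle.} There is no real mathematical obstacle here; every ingredient has been proved earlier. The only point that requires a moment's care is justifying that the \emph{minimization} step preserves the property of being core: this is not stated as a separate lemma, but follows because the equivalence classes of Definition~\ref{def:minimization} partition an already-core domain, so a ground term witnessing coreness for any representative $a \in \classof{a}$ also witnesses coreness for the class $\classof{a}$ in $\algmin{\aalg'}$, using that $\interpret{\cdot}^\classes$ is defined by pushing interpretations through the quotient. Beyond that, the proof is a short bookkeeping argument citing Theorems~\ref{thm:talgebra}, \ref{thm:cxtext:sound}, \ref{thm:abstract:maxcomplete}, \ref{thm:dynlab:sound}, \ref{thm:maxcomplete}, and Lemmas~\ref{lem:minimization} and~\ref{lem:equivalent}.
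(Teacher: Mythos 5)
Your proof is correct and follows essentially the same route as the paper, which justifies this corollary by exactly the chain you cite: Theorem~\ref{thm:talgebra} and Lemmas~\ref{lem:minimization} and~\ref{lem:equivalent} to get a sound (and, for quasi-left-linear $\atrs$, complete) redex-algebra, Theorem~\ref{thm:maxcomplete} to lift this to a \clabeling{}, and Theorems~\ref{thm:cxtext:sound}, \ref{thm:dynlab:sound} and~\ref{thm:abstract:maxcomplete} for the two directions. Your explicit remark that minimization preserves coreness (every class $\classof{a}$ inherits a ground-term witness from the already-core input algebra) fills in a detail the paper leaves implicit, but it does not change the argument.
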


As a consequence, the full redex-algebra for an arbitrary TRS 
can be used to disprove outermost ground termination:
\begin{corollary}
  Let $\atrs$ be a TRS,
  and let $\triple{\aalg}{\semlab}{\asigred}$ be the maximal labeling 
  for the minimized \cralg{} redex-algebra for $\atrs$.
  Then the dynamic context extension $\cxtext{\aalg}{\semlab}{\atrs}$ is terminating
  whenever $\atrs$ is outermost ground terminating.
\end{corollary}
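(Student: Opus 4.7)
The plan is to reduce the claim to Theorem~\ref{thm:abstract:maxcomplete} by verifying that the given \clabeling{} $\triple{\aalg}{\semlab}{\asigred}$ satisfies the three hypotheses of that theorem: complete, maximal, and core. Since outermost ground termination of $\atrs$ and an appropriate \clabeling{} then imply termination of $\cxtext{\aalg}{\semlab}{\atrs}$, the corollary will follow immediately.

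First I would dispatch \emph{maximality}, which is trivial: by Definition~\ref{def:maxlabel} every symbol is labeled with the tuple of argument values, so the \clabeling{} is maximal by construction, independently of which redex-algebra is used. Next I would address \emph{completeness}. By Theorem~\ref{thm:talgebra}\,(ii), the \cralg{} redex-algebra for $\atrs$ (before minimization) is complete. Minimization preserves the language of a core redex-algebra (Lemma~\ref{lem:minimization}), hence by Lemma~\ref{lem:equivalent} the minimized algebra is again complete as a redex-algebra. Theorem~\ref{thm:maxcomplete} then gives that the associated maximal \clabeling{} is complete.

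The remaining point is \emph{coreness}. The algebra $\ralgconstr{\atrs}_c$ is core by Definition~\ref{def:talgebra}. I would argue that the minimization procedure from Definition~\ref{def:minimization} preserves this property: every element $\classof{a}$ of $\algmin{\aalg}$ is the equivalence class of some $a \in \aalg$, and since $\aalg$ is core, there exists a ground term $t \in \term{\asig}{\setemp}$ with $\interpret{t} = a$; then a straightforward induction on $t$, using the definition $\funap{\interpret{\tf}^\classes}{\classof{a_1},\ldots,\classof{a_n}} = \classof{\funap{\interpret{\tf}}{a_1,\ldots,a_n}}$, shows $\interpret{t}^\classes = \classof{a}$, so $\algmin{\aalg}$ is core. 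Theorem~\ref{thm:maxcomplete} then yields that the maximal \clabeling{} associated with the minimized \cralg{} redex-algebra is core.

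Having established that $\triple{\aalg}{\semlab}{\asigred}$ is complete, maximal, and core, Theorem~\ref{thm:abstract:maxcomplete} applied to it yields exactly the required implication: if $\atrs$ is outermost ground terminating, then $\cxtext{\aalg}{\semlab}{\atrs}$ is terminating. The only step that requires any genuine verification is the preservation of coreness under minimization, but this is essentially bookkeeping; all the real mathematical content has already been deposited in Theorems~\ref{thm:talgebra}, \ref{thm:maxcomplete}, and \ref{thm:abstract:maxcomplete}, so the proof is purely a matter of composing these results.
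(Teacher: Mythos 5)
Your proposal is correct and follows exactly the route the paper intends: the corollary is obtained by composing Theorem~\ref{thm:talgebra}\,(ii), Lemmas~\ref{lem:minimization} and~\ref{lem:equivalent}, and Theorem~\ref{thm:maxcomplete} to establish that the \clabeling{} is complete, maximal, and core, and then invoking Theorem~\ref{thm:abstract:maxcomplete}. Your explicit check that minimization preserves coreness is a small but welcome addition that the paper leaves implicit.
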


\begin{example}\label{ex:complete}
  We revisit Example~\ref{ex:noncomplete},
  but this time we give the dynamic context extension with respect to  
  \emph{maximal} labeling:
  \begin{align*}
    \funap{\svoodoolabel{\sinfinite}{\bot}}{x} \to 
    \bfunap{\svoodoolabel{\scons}{\bot,\bot}}{x}{&\funap{\svoodoolabel{\sinfinite}{\ssuc}}{\funap{\svoodoolabel{\ssuc}{\bot}}{x}}} &
    \funap{\svoodoolabel{\sinfinite}{\ssuc}}{x} \to 
    \bfunap{\svoodoolabel{\scons}{\ssuc,\bot}}{x}{&\funap{\svoodoolabel{\sinfinite}{\ssuc}}{\funap{\svoodoolabel{\ssuc}{\ssuc}}{x}}} \\
    \bfunap{\svoodoolabel{\scons}{\ssuc,\bot}}{\funap{\svoodoolabel{\ssuc}{\bot}}{x}}{y} &\to \fun{nil} &
    \bfunap{\svoodoolabel{\scons}{\ssuc,\bot}}{\funap{\svoodoolabel{\ssuc}{\ssuc}}{x}}{y} &\to \fun{nil} \\
    \bfunap{\svoodoolabel{\scons}{\ssuc,\ssuc}}{\funap{\svoodoolabel{\ssuc}{\bot}}{x}}{y} &\to \fun{nil} &
    \bfunap{\svoodoolabel{\scons}{\ssuc,\ssuc}}{\funap{\svoodoolabel{\ssuc}{\ssuc}}{x}}{y} &\to \fun{nil}
  \end{align*}
  with $\amumap{\svoodoolabel{\sinfinite}{\bot}} = \amumap{\svoodoolabel{\sinfinite}{\ssuc}} = 
        \amumap{\svoodoolabel{\scons}{\ssuc,\bot}} = \amumap{\svoodoolabel{\scons}{\ssuc,\ssuc}} = \setemp$.
  This context-sensitive TRS is terminating
  as opposed to the one constructed in Example~\ref{ex:noncomplete}.
  To prove termination we give a strictly decreasing polynomial interpretation over the natural numbers:
  \begin{align*}
    \interpret{\fun{nil}} &= 0
    & \bfunap{\interpret{\svoodoolabel{\scons}{\bot,\bot}}}{x}{y} &= x + y
    & \funap{\interpret{\svoodoolabel{\sinfinite}{\bot}}}{x} &= x + 3
    & \funap{\interpret{\svoodoolabel{\ssuc}{\bot}}}{x} &= x \\
    &
    & \bfunap{\interpret{\svoodoolabel{\scons}{\ssuc,\bot}}}{x}{y} &= 1
    & \funap{\interpret{\svoodoolabel{\sinfinite}{\ssuc}}}{x} &= 2
    & \funap{\interpret{\svoodoolabel{\ssuc}{\ssuc}}}{x} &= x
  \end{align*}
\end{example}

\section{Evaluation}\label{sec:evaluation}


With the implementation of the transformation by dynamic context extension,
described in Section~\ref{sec:cxtext},
the termination prover \jambox~\cite{jambox} gained first place 
in the category of outermost rewriting of the termination competition
of 2008~\cite{term:comp:08}, see Table~\ref{table:score}.
\begin{table}[h!!]
\begin{center}
  { \renewcommand{\arraystretch}{1.3}
    \extracolsep{1cm}
    \setlength{\arraycolsep}{1cm}
  \begin{tabular}{|@{\hspace{1ex}\extracolsep{1ex}}c@{\hspace{1ex}\extracolsep{1ex}}|c@{\hspace{1ex}\extracolsep{1ex}}|c@{\hspace{1ex}\extracolsep{1ex}}|}
    \hline
      & score & average time\\
    \hline
    \jambox~\cite{jambox} & 72 (93.5\%) & 4.1s\\
    \hline
    \trafo~\cite{raff:zant:09}  & 46 (59.7\%) & 8.1s\\
    \hline
    \aprove~\cite{gies:schn:thie:06} & 27 (35.0\%) & 10.8s\\
    \hline
  \end{tabular} }
  \caption{Results of proving outermost termination in the 
  competition of 2008~\cite{term:comp:08}.}
  \label{table:score}
\end{center}
\end{table}
With an average time of $4.1$ seconds per termination proof, 
\jambox{} was also faster than the other participants,
providing empirical evidence for the efficiency of the transformation of dynamic context extension.
Not listed in Table~\ref{table:score} is \ttt{}~\cite{korp:ster:zank:midd:09}, 
which did not prove outermost termination,
but performed best in \emph{dis}proving outermost termination.

The percentages listed in Table~\ref{table:score}
are 
relative to the total number of term rewriting systems 
which were proven to be outermost terminating 
by some participating tool. 
The TPDB~2008 contained 291~TRSs in the category of outermost rewriting
of which 77 were proven outermost terminating,
161 not outermost terminating,
and 53 remained unsolved in the competition.
We note that around 50 systems in the database are, strictly speaking, not term rewriting systems,
as they contain variables in the right-hand sides that do not occur in the left-hand sides.

In the termination competition 2008,
\jambox{} used exclusively the approach of dynamic context extension (Section~\ref{sec:cxtext}).
If we additionally use dynamic labeling, as defined in Section~\ref{sec:dynlab},
the score of \jambox{} improves by $4$, 
thus proving $76$ systems to be outermost terminating.

The secret behind the efficiency of \jambox{} is threefold:
First, we construct and minimize the algebras employed for marking redex positions,
see Sections~\ref{sec:constructing} and~\ref{sec:minimizing}.
Secondly, we try two labeling strategies: minimal and maximal, see Section~\ref{sec:minmax}.
Minimal labeling is very efficient and contributes to 75\% of the success of \jambox. 
In order to have a complete transformation we also employ maximal labeling.
Thirdly, dynamic context extension is the combination of labeling and context extension,
where the prefixing of contexts to rules depends on the interpretation of the variables.
All these optimizations minimize the number of rules and their size in the transformed systems,
which is important to keep a manageable search space.

\begin{figure}[h]
\vspace{-3.5ex}\hspace*{2ex}
\begin{tikzpicture}[very thick,node distance=2mm]
  \coordinate (u) at (0mm,20mm);
  \coordinate (75) at (0mm,15mm);
  \coordinate (m) at (0mm,10mm);
  \coordinate (25) at (0mm,5mm);
  \coordinate (b) at (0mm,0mm);
  \draw (u) -- (b);
  \draw ($(u)-(1.5mm,0mm)$) -- ($(u)+(1.5mm,0mm)$);
  \draw ($(b)-(1.5mm,0mm)$) -- ($(b)+(1.5mm,0mm)$);
  \draw [fill=white] ($(25)-(1.5mm,0mm)$) rectangle ($(75)+(1.5mm,0mm)$);
  \draw ($(m)-(1.5mm,0mm)$) -- ($(m)+(1.5mm,0mm)$);
  \node [right of=b,anchor=west,yshift=.5mm] {minimum};
  \node [right of=25,anchor=west,yshift=.5mm] {lower quartile};
  \node [right of=m,anchor=west,yshift=.5mm] {median};
  \node [right of=75,anchor=west] {upper quartile};
  \node [right of=u,anchor=west,yshift=.5mm] {maximum};
\end{tikzpicture}
\vspace{-.5ex}
\end{figure}
Next, we compare the performance of 
dynamic context extension and dynamic labeling (Section~\ref{sec:dynlab}).
Figure~\ref{fig:size} shows the size of the transformed systems
in relation to the size of the input system, as measured on the TPDB~\cite{term:comp:08}.
For each input size we display the minimum, the lower quartile (25th percentile), the median, 
the upper quartile (75th percentile), and the maximum size of the transformed systems.
From Figure~\ref{fig:size} it can be inferred that 
for larger input systems the dynamic labeling usually is 
a factor 5 or 10 smaller than the dynamic context extension.
For systems with more than 10 rules there are only a few examples available in the database,
which explains why some of the quartiles fall together.


%
\begin{figure}[h!]
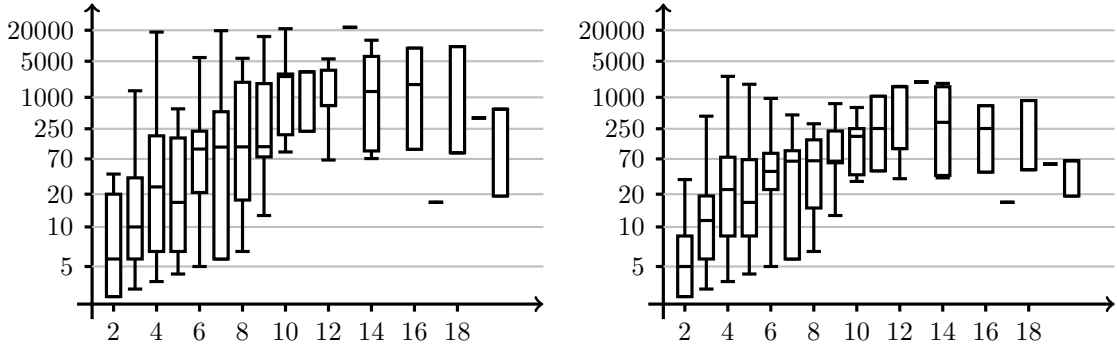

  \begin{center}
  \begin{minipage}{.49\textwidth}
  \input{eval_context_maximal}
  \end{minipage}
  \begin{minipage}{.49\textwidth}
  \input{eval_relabel_maximal}
  \end{minipage}
  \caption{Size of the transformed systems ($y$-axis) in relation to the size of the input TRS (\mbox{$x$-axis}) 
  using dynamic context extension (left), and dynamic labeling (right), both with maximal \clabeling{s}.}
  \label{fig:size}
  \end{center}
\end{figure}

\begin{table}[h!]
\renewcommand{\arraystretch}{1.25}
  \begin{tabular}{|c|c|c|c|c|c|}
   \hline
   Method & Total Score
   & $\neg \cxtext{\aalg}{\semlab}{\atrs}$, max 
   & $\neg \dynlab{\aalg}{\semlab}{\atrs}$, max 
   & $\neg \cxtext{\aalg}{\semlab}{\atrs}$, min 
   & $\neg \dynlab{\aalg}{\semlab}{\atrs}$, min \\
   \hline
   $\cxtext{\aalg}{\semlab}{\atrs}$, max & 71 & 0 & 5 & 16 & 25 \\
   \hline
   $\dynlab{\aalg}{\semlab}{\atrs}$, max & 69 & 3 & 0 & 16 & 21 \\
   \hline
   $\cxtext{\aalg}{\semlab}{\atrs}$, min & 57 & 2 & 4 & 0 & 9 \\
   \hline
   $\dynlab{\aalg}{\semlab}{\atrs}$, min & 50 & 4 & 2 & 2 & 0 \\
   \hline
  \end{tabular}
  \caption{Comparison of the proposed methods on the TPDB~2008.}
  \label{table:compare}
\end{table}

Table~\ref{table:compare} shows a comparison of our different methods 
(dynamic context extension and dynamic labeling, combined with minimal 
or maximal labeling).
Each row lists the total score of one method with the number of systems it can solve
that cannot be solved by the method corresponding to the column.
For example, the value 3 in Table~\ref{table:compare} means that 
three systems can be solved by dynamic maximal labeling,
but not by dynamic context extension in combination with maximal labeling.
The table shows that dynamic context extension and dynamic labeling 
are roughly equal in strength,
and that maximal labeling gives the best results. 

Table~\ref{table:ratio} illustrates that 
the \cdepth{} of rules is typically small (employing left-linear redex-algebras):
it is $0$ or $1$ in $94.5\%$ of the cases.
Note that $54\%$ of the rules have \cdepth~$0$,
but this does not mean that the same percentage of the TRSs could be handled by a model 
(Definition~\ref{def:model}). Only $14\%$ of the TRSs have \cdepth{} $0$.

\begin{table}[h!]
\renewcommand{\arraystretch}{1.25}
  \begin{tabular}{|c|c|c|c|c|c|c|c|c|c|c|c|}
   \hline
   \cdepth{} & 0 & 1 & 2 & 3 & 4 & 5 & 6 & 7 & 8 & 9 & 10 \\
   \hline
   \#rules & 54\% & 40.5\% & 3.3\% & 1.3\% & 0.5\% & 0.2\% & 0\% & 0\% & 0\% & 0.1\% & 0\%\\
   \hline
  \end{tabular}
  \caption{Ratio of rules having a certain \cdepth{} in the TPDB~2008.}
  \label{table:ratio}
\end{table}

\section{Discussion}\label{sec:discussion}

For arbitrary TRSs the transformation based on dynamic context extension
(including the construction of \clabeling{s}) is sound, 
and for quasi-left-linear TRSs it is sound and complete. 
The sound redex-algebra we construct recognizes redexes with respect to left-linear rules.
As a consequence, in the \csTRS{} $\cxtext{\aalg}{\semlab}{\atrs}$ rewriting 
is forbidden only inside such redex positions.
This corresponds to a weakening of the outermost rewriting strategy:
contraction of a redex is disallowed
only if it is contained within a redex with respect to left-linear rule.
Let us call this the `left-linear outermost' rewriting strategy.
Dynamic context extension combined with maximal labeling is sound and complete for termination 
with respect to this rewriting strategy for all TRSs.

In a similar way the transformation of~\cite{raff:zant:09}
can be generalized from quasi-left-linear TRSs to arbitrary TRSs.
For soundness the anti-matching rules do not need to exactly match the non-redex terms,
as long as at least all non-redex terms are matched.
Then the symbol $\msf{down}$ can be moved inside redexes with respect to rules which are not left-linear.
This enables only additional rewrite steps but does not harm soundness.
More precisely, using this generalization the transformation of \cite{raff:zant:09} 
becomes complete with respect to left-linear outermost termination.
Thereby the score of \trafo{} in the termination competition of 2008~\cite{term:comp:08}
could possibly have been improved by around~$20\%$, resulting in a score of 57 instead of 47.

We have shown that the transformation of dynamic labeling is complete 
on the set of correctly labeled terms $\dolabelg{\ter{\asig}{\setemp}}$
without the auxiliary $\ssrelabel$ symbols (Theorem~\ref{thm:dynlab:complete}).
The non-completeness with respect to termination on all terms arises from
`illegally placed' $\ssrelabel$ symbols in combination with duplicating rules, see Example~\ref{ex:dynlab:noncomplete}.
The duplicating rules can multiply the illegal symbols and make them reusable over and over again.
To prevent this, one can introduce an extra symbol $\msf{block}$ 
with $\amumap{\msf{block}} = \setemp$
for disallowing $\ssrelabel$ symbols 
beneath the rule application. 
For this purpose, we wrap each duplicated variable in the right-hand side of a labeled rule
into a context $\funap{\msf{block}}{\cxthole}$,
and we extend the dynamic labeling with rules of the form
$\funap{\msf{block}}{\f{x_1,\ldots,x_n}} 
\to \f{\funap{\msf{block}}{x_1},\ldots,\funap{\msf{block}}{x_n}}$
for each symbol $\tf\in\labelsig{\asig_n}$ (excluding $\ssrelabel$ symbols!).
Note that this implies that $\msf{block}$ symbols disappear 
when meeting a constant.
For instance, reconsider the TRS from Example~\ref{ex:dynlab:noncomplete},
which had among others the following rule in its dynamic labeling: 
\begin{align*}
  \bfunap{\svoodoolabel{\tf}{\mit{bc},\bot}}{\fun{c}}{y} &\to \funap{\svoodoolabel{\sh}{\bot}}{\bfunap{\svoodoolabel{\tf}{\bot,\bot}}{y}{y}}
\end{align*}
This rule would be modified to:
\begin{align*}
  \bfunap{\svoodoolabel{\tf}{\mit{bc},\bot}}{\fun{c}}{y} &\to \funap{\svoodoolabel{\sh}{\bot}}{\bfunap{\svoodoolabel{\tf}{\bot,\bot}}{\funap{\msf{block}}{y}}{\funap{\msf{block}}{y}}}
\end{align*}
In this way we `block' each duplicated variable.

Another question is whether there are interesting labelings between minimal and maximal.
In particular, are there more efficient complete labelings?
Here efficiency is measured in the size of the signature 
and the number of rules of the transformed system.
In Example~\ref{ex:complete} it would have been sufficient 
to label $\scons$ with the interpretation of the left argument,
saving two symbols and two rules of the transformed system.

\bibliographystyle{alpha}
\bibliography{main}

\end{document}